\newtheorem{theorem}{Theorem}[section]
\newtheorem{lemma}{Lemma}[section]
\newtheorem{pro}{Proposition}[section]
\newtheorem{definition}{Definition}[section]
\newtheorem{remark}{Remark}[section]
\newtheorem{assume}{Assumption}[section]
\newtheorem{corollary}{Corollary}[section]
\numberwithin{equation}{section}
\def\ancomment#1{{\textcolor{red}{#1}}}
\def\shihaocomment#1{{\textcolor{blue}{#1}}}
\setlist[itemize]{topsep=0pt, partopsep=0pt, parsep=0pt, itemsep=0pt}
\title{Striking the balance: Life Insurance Timing and Asset Allocation in Financial Planning} 
\author[1]{An Chen$^\ast$ \and Giorgio Ferrari$^\dagger$ \and Shihao Zhu$^\ddagger$}
\thanks{$^\ast$ Institute of Insurance Science, Ulm University, Helmholtzstr.\ 20, 89069 Ulm, Germany (an.chen@uni-ulm.de).}
\thanks{$^\dagger$ Center for Mathematical Economics (IMW), Bielefeld University, Universit{\"a}tsstrasse 25, 33615, Bielefeld, Germany (giorgio.ferrari@uni-bielefeld.de).}
\thanks{$^\ddagger$ Corresponding author. Institute of Insurance Science, Ulm University, Helmholtzstr.\ 20, 89069 Ulm, Germany (shihao.zhu@uni-ulm.de).}
\date{\today}
\begin{document}
  \maketitle



\begin{abstract}
This paper investigates the consumption and investment decisions of an individual facing uncertain lifespan within a Black-Scholes market framework. A key aspect of our study involves the agent's option to choose when to acquire life insurance for bequest purposes. We examine two scenarios: one with a \emph{fixed} bequest amount and another with a \emph{controlled} bequest amount. Applying duality theory and addressing free-boundary problems, we analytically solve both cases, and provide explicit expressions for value functions and optimal strategies in both cases. In the first scenario, where the bequest amount is fixed, distinct outcomes emerge based on different levels of risk aversion parameter $\gamma$: (i) the optimal time for life insurance purchase occurs when the agent's wealth surpasses a critical threshold if $\gamma \in (0,1)$, or (ii) life insurance should be acquired immediately if $\gamma >1$. In contrast, in the second scenario with a controlled bequest amount, regardless of $\gamma$ values, immediate life insurance purchase proves to be optimal. {Finally, we extend the analysis to consider a scenario in which the individual earmarks part of her initial wealth for inheritance, where a critical wealth threshold consistently emerges.}
\end{abstract}


\vspace{2mm}

\noindent{\bf Keywords:} Portfolio Optimization;  Consumption Planning; Life Insurance; Optimal Stopping; Stochastic Control. 

\vspace{2mm}

\noindent{\bf MSC Classification:}\ 91B70, 93E20, 60G40.
\vspace{2mm}

\noindent{\bf JEL Classification:}\ G11, E21, I13.
\vspace{2mm}

\section{Introduction}

In the literature on optimal consumption and investment decisions, as discussed in \cite{richard1975optimal} and the literature review provided in the penultimate paragraph of this introduction, there has been thorough exploration of integrating life insurance. The emphasis is on investigating the demand for life insurance and comprehending its influence on consumption and portfolio choices across an individual's life cycle. Nevertheless, in alignment with the groundbreaking work of \cite{richard1975optimal}, the existing theoretical literature predominantly focuses on determining the optimal sum insured for the policy rather than investigating the optimal \emph{timing} for purchasing  insurance.  

Determining the optimal time to purchase life insurance is of utmost importance.
	The timing of purchasing life insurance is relevant because it affects the cost of premiums and the insurability. When considering an individual's entire lifespan, it may not be feasible for them to purchase life insurance or build an estate for their heirs at a young age due to limited capital. Even if an individual has sufficient capital, purchasing life insurance too early may adversely affect their standard of living during their youth. On the other hand, purchasing life insurance at a young age often results in lower premiums, reducing the total amount the individual will need to spend over their lifetime. As premiums tend to be higher for older policyholders, it may not be optimal to delay acquiring life insurance until later in life. This is because the risk of mortality increases with age. To sum up, by purchasing life insurance at a younger age, you can lock in lower premium rates for the duration of the policy. Starting early can save you money in the long run and make coverage more affordable. In addition, as we human beings progress through life, our financial responsibilities tend to increase. This may include getting married, starting a family, purchasing a home, or taking on significant debts. Analyzing when to buy life insurance allows us to align the coverage with our changing financial responsibilities.

In this paper, we explore the optimal timing for purchasing life insurance in the context of an agent facing uncertain lifetime and stochastic labor income. The agent can continuously invest in a Black-Scholes market and decides when to buy life insurance for bequest purposes. Two cases for modeling bequest are considered: one where bequest is a predetermined amount, and the other where the policyholder can choose the bequest amount as an additional choice variable. 

{The \emph{predetermined} bequest amount remains highly relevant in practice, despite the prevalence of the controlled bequest model in the literature. On the one hand, fixed bequest allocations foster a perception of fairness---particularly when an estate is divided equally among multiple children---a strategy that over two-thirds of testate decedents with multi-child families in US employ \cite{bernheim2003bequests}. Moreover, fairness can be tailored to individual circumstances: for instance, a testator might grant extra support to a child with special needs or financial difficulties (e.g.,\ \cite{elinder2018inheritance, hamaaki2019intra}.) On the other hand, specifying exact bequest amounts gives testators clear control over their estate's distribution, ensuring that their wishes are both unambiguous and honored. As \cite{kopczuk2013taxation} note, explicit allocations often align with broader goals---such as preserving family assets or guiding beneficiary behavior over time. In addition, predetermined bequests simplify the administrative process and improve estate-planning efficiency, especially when one accounts for legal and tax considerations (see e.g.,\cite{poterba2001estate, mcgarry2013estate}.)}

{In contrast, the \emph{controlled bequest model} offers policyholders the flexibility to adjust the bequest amount according to changing personal circumstances and strategic considerations. By incorporating the bequest as an additional choice variable, policyholders can dynamically respond to evolving financial conditions, family needs, or market developments.  Specifically,  most existing models in the literature (see, e.g., \cite{pliska2007optimal}) explicitly frame bequests as functions dependent on the life insurance premiums and the force of mortality (as we shall do in Section \ref{sec:conbeq}).  
This flexibility allows for a more personalized approach to estate planning. For instance, rather than committing to a fixed amount, a policyholder might decide to allocate a larger portion of the estate during periods of financial stability or, alternatively, reserve resources when future uncertainties loom.}

From a mathematical point of view, we model the previous problem as a \textit{random time horizon, two-dimensional stochastic control problem with discretionary stopping}. The two coordinates of the state process are the wealth process $X$ and the stochastic labor income $Y$. 
Moreover, the utility increases after purchasing life insurance due to the bequest motives. The agent's aim is to choose consumption rate $c$, portfolio $\pi$, and the life insurance purchase time $\eta$ in order to maximize the total expected utility, up to the random death time $\tau$.\footnote{Problems with a similar structure arise, for instance, in retirement time choice models, where the agent has to consume and invest in risky assets, and to decide when to retire (see, e.g.,\ \cite{jin2006disutility}). Combined stochastic control/optimal stopping problems also arise in Mathematical Finance, namely, in the context of pricing American contingent claims under constraints and utility maximization problem with discretionary stopping; see, e.g., \cite{karatzas1998hedging} and \cite{karatzas2000utility}.} To address the intricate mathematical structure of our problem, where the interplay of consumption, portfolio choice, and life insurance purchase is nontrivial, we employ a combination of duality and a free-boundary approach. \\

\noindent Notably, we find distinct results for the two cases of bequest: 
\begin{itemize}
		\item When the bequest amount is predetermined, the optimal timing for life insurance purchase depends significantly on the relative risk aversion. For a relative risk aversion greater than 1, it is optimal to acquire life insurance immediately. For a relative risk aversion between 0 and 1, the optimal time to buy life insurance is not a corner solution, but rather corresponds to the point where the policyholder's wealth reaches an early exercise boundary at each moment in time. If the wealth level exceeds the early exercise boundary, purchasing life insurance becomes advantageous. Conversely, if the wealth remains below the early exercise boundary, it is more beneficial not to buy life insurance. Additionally, we can determine the associated optimal consumption and investment strategies.
		\item On the contrary, when the policyholder has freedom to independently choose the bequest amount, irrespective of relative risk aversion, the optimal decision is always to purchase insurance without delay. In cases where relative risk aversion falls within the range of (0, 1), the optimal timing for acquiring life insurance differs from situations where the bequest amount is predetermined. With a fixed bequest amount, a trade-off emerges between the premium cost and the utility derived from the bequest, leading to the establishment of a boundary dictating the optimal purchase time. However, when the agent has the flexibility to determine the bequest amount, this trade-off can be mitigated by strategically selecting the bequest amount based on their wealth at the time of purchase; in other words, the agent will always choose the optimal bequest amount (the corresponding premium cost) at the purchasing time. Therefore, early acquiring the life insurance would obtain more utility from the bequest amount. Consequently, the optimal strategy becomes an immediate purchase of life insurance, maximizing utility derived from the bequest amount. 
	\end{itemize}
	

{Besides the main results derived from our base models (cf.\ Sections \ref{sec:predeq} and \ref{sec:conbeq}), we also discuss several extensions in Section \ref{dissec}. Specifically, instead of our initial setup—where the individual determines all consumption and life insurance decisions at time $t=0$ with an initial wealth of $x$—we consider an alternative scenario in which the policyholder sets aside a portion of their initial wealth for inheritance.  
Additionally, we extend our model by incorporating an age-dependent force of mortality, moving beyond the constant force of mortality assumption in our base case. This enhancement makes our model more realistic. Interestingly, in both extensions, we consistently identify a critical wealth threshold (which becomes age-dependent in the case of an age-dependent force of mortality). Above this threshold, purchasing life insurance remains a rational choice, regardless of the values of $\gamma$.} 
	

As previously indicated, there exists a considerable body of literature integrating life insurance decisions into optimal consumption and asset allocation frameworks. Our research contributes to this existing literature by examining the optimal timing for acquiring life insurance within the context of a life-cycle consumption and portfolio planning problem.
\cite{richard1975optimal} introduces the notion of optimal consumption and asset allocation when confronted with uncertain life expectancy.
Building upon Merton's optimal consumption and investment problem in a continuous-time framework (see \cite{merton1971optimum}), \cite{richard1975optimal} expands the scope by incorporating the investor's arbitrary yet known lifetime distribution, addressing a broader range of scenarios. Surprisingly, the study unveils that the investment strategy remains consistent even when compared to situations with specific lifetime assumptions.
The framework proposed by \cite{richard1975optimal} has since been extended in various directions. Both empirical and theoretical studies exploring the demand for life insurance have grown, with empirical investigations focusing on discerning factors influencing life insurance consumption (e.g., \cite{li2007demand}, \cite{braun2016consumer} and references therein). On the theoretical front, \cite{pliska2007optimal} look into an optimal life insurance and consumption problem for an income earner when the lifetime random variable is unbounded. Subsequently, \cite{huang2008portfolio} address a portfolio choice problem incorporating mortality-contingent claims and labor income under general HARA utilities. \cite{bayraktar2013life} consider a case of exponential utility, determining the optimal amount of life insurance for a household with two wage earners. Expanding on this, \cite{wei2020optimal} investigate a scenario where a couple with correlated lifetimes seeks to optimize their consumption, portfolio, and life insurance purchasing strategies to maximize the family objective until retirement. Recently, \cite{wang2021household} study optimal decisions on purchasing life insurance for a household with two consecutive generations, incorporating uncertain income and market ambiguity. 
 Furthermore, \cite{park2023robust} examine a robust consumption-investment problem involving retirement and life insurance decisions for an agent concerned about inflation risk and model ambiguity.
 

The reminder of the paper is structured as follows:
Section~\ref{sec:mod} introduces the underlying financial market and formulates the optimization problem encompassing consumption, investment, and the timing of life insurance purchase. Section~\ref{sec:predeq} solves the optimization problem when the bequest amount is predetermined and provides the corresponding numerical illustrations.  Section~\ref{sec:conbeq} addresses the optimization problem when the bequest amount is an additional choice variable and gives the numerical examples. Section \ref{dissec} provides some further discussions and extensions to the base model. Section~\ref{sec:con} concludes the paper. { Appendix \ref{appendix}     offers additional technical details related to the results discussed in  Sections \ref{sec:predeq}, \ref{sec:conbeq}} and \ref{dissec}.

\section{The model} \label{sec:mod}
Let $({\Omega}, {\mathcal{F}}, \mathbb{F}:=\{ \mathcal{F}_t, t \geq 0  \}, {\mathbb{P}})$ be a complete filtered probability space, on which it is defined a strictly positive random variable $\tau$ independent of $\mathcal{F}_\infty$. Think of an $x_0$-year economic agent who considers to buy a life insurance contract for bequest motive. We assume that the random remaining lifetime  $\tau$ of the agent is distributed exponentially with a constant force of mortality\footnote{{In Section \ref{dissec}, we consider an age-dependent force of mortality as an extension.}}  $m>0$; that is, for any $t\geq 0$
\begin{align*}
{}_tp_{x_0}= \mathbb{P}(\tau \geq t) =e^{-mt}
\end{align*}
is the probability that the agent with age $x_0$ survives at least $t$ years. 

 If the agent purchases life insurance at time $t$, we assume that she would like to leave a target amount $B_t\geq 0$ to her heirs upon death, and would like to choose the optimal time $\eta$ to invest in the life insurance. By investing in the life insurance, we assume that she needs to pay a continuous premium $h_t$ until death. Then at the purchase time $t$, the premium $h_t$ is determined such that  
\begin{align}\label{premium0}
    \int^\infty_t h_t e^{-r(s-t)} \px[s-t]{x_0+t} ds =-\int^\infty_t B_t  e^{-r(s-t)} \frac{\partial}{\partial s}\px[s-t]{x_0+t} ds,
\end{align}
where $\px[s-t]{x_0+t}$, $s \geq t$, can be interpreted as the conditional survival probability that an $x_0$-year today survives $x_0+s$ given that she has survived $x_0+t$. 
 In terms of the constant force of mortality, from (\ref{premium0}) we have 
\begin{align}\label{premium}
    h_t =B_t \cdot m. 
    \end{align}
In this paper we will consider two different cases: a predetermined bequest amount and a controlled bequest amount. In the former case, the bequest amount $B_t:=B$ is a predetermined constant, whereas $\{B_t, t\geq 0\}$ is an $\mathcal{F}_t$-adapted control variable in the latter case.  

We assume that the agent also invests in a financial market with two assets. One of them is a risk-free bond, whose price $S^0:=\{ S^0_t, t \geq 0  \}$ evolves as 
\begin{align*}
dS^0_t = rS^0_tdt, \quad S^0_0 =s^0>0,
\end{align*}
where $r>0$ is a constant risk-free rate. 
The second one is a stock, whose price is denoted by $S:=\{S_t, t \geq 0\}$ and it satisfies the stochastic differential equation
\begin{align*}
dS_t &= \mu S_t dt + \sigma S_t dW_t, \quad S_0 =s>0,
\end{align*}
where $\mu>r$ and $\sigma>0$ are given constants. Here, $W:=\{ W_t, t \geq 0 \}$ is an $\mathbb{F}$-adapted standard Brownian motion on $(\Omega,\mathcal{F},\mathbb{P})$.

As highlighted by \cite{campbell1980demand}, stochastic income is a significant factor in household economic choices, including the decision to purchase life insurance. In this study, we assume that the stochastic income is contingent on the individual's mortality risk and the uncertainty inherent in the financial market. More specifically, we assume that the agent receives a stochastic labor income $Y:=\{Y_t,t\geq 0\}$ as long as she is alive. We assume the labor income to be spanned by the market and is driven by the same Brownian motion as the stock (see also \cite{dybvig2010lifetime}): 
\begin{align}\label{income}
Y_t=y e^{(\mu_y -\frac{1}{2}\sigma_y^2)t+ \sigma_y  W_t}, \ \text{for} \ t > 0, \ Y_0=y>0.
\end{align}
Here $\mu_y \in \mathbb{R}$ and $\sigma_y>0$ are constants, representing the instantaneous growth rate and the volatility of the labor income, respectively.\footnote{Individuals who do not receive variable salaries tied directly to the financial market risk may not have an income that can be fully hedged through tradable assets. Therefore, this assumption that labor income risk is perfectly correlated with the stock market does not present a fully realistic one. However, we choose to integrate income into our analysis, notwithstanding the somewhat less realistic assumption regarding income risk, to acknowledge the significance of human capital in life-cycle decisions. Simultaneously, we must acknowledge that this is done to establish a framework that is mathematically tractable, similar to those presented in \cite{dybvig2010lifetime} for modeling income risk. We leave the case with not fully hedgeable income risk (e.g. by adding another Brownian motion of labor income that is not perfectly correlated) for future research. }          

We define the market price of risk $\theta:= \frac{\mu-r}{\sigma}$ and the state-price-density process $\xi_{t}:=e^{-rt-\theta W_t -\frac{1}{2}\theta^2 t}$. Since the labor income is perfectly correlated with the market, the present value of  future labour income at time $t$, $g_t$, under the assumption that the agent is always alive, is such that
\begin{align}\label{gt}
g_t&:= \mathbb{E} \bigg[\int^{\infty}_t \frac{\xi_s}{\xi_t} Y_s ds \bigg|\mathcal{F}_t \bigg]=Y_t \cdot \mathbb{E} \bigg[\int^{\infty}_t \frac{\xi_s}{\xi_t}  \frac{Y_s}{Y_t} ds\bigg|\mathcal{F}_t \bigg]=Y_t \cdot \bigg[\int^{\infty}_t e^{-(r-\mu_y+\sigma_y \theta)(s-t)}  ds\bigg] \nonumber\\
&=\left \{
\begin{aligned}
& \frac{Y_t}{\kappa}, \ &\text{if} \ \kappa > 0, \\
&\infty, \ &\text{if} \ \kappa \leq  0,
\end{aligned}
\right. 
\end{align} 
with $\kappa:=r-\mu_y+\sigma_y\theta$ being the effective discount rate for labor income. Throughout the paper we assume $\kappa>0$.


The agent also consumes from her wealth, while investing in the financial market. Denoting by $\pi_t$ the amount of wealth invested in the stock at time $t$, the agent then chooses $\pi_t$ as well as the consumption process $c_t$ at time $t$. Therefore, the agent's wealth $X:= \{ X^{c,\pi, \eta}_t, t\geq 0 \}$ evolves as 
\begin{align}\label{wealth}
d X^{c,\pi, \eta}_t =  [\pi_t(\mu-r)+rX^{c,\pi,\eta}_t-c_t-h_t \mathds{1}_{\{t \geq \eta \}}+Y_t]dt +\pi_t \sigma dW_t, \quad X_0^{c,\pi, \eta}= x,
\end{align}
where $h_t$ is given by (\ref{premium}), and the $\mathbb{F}$-stopping time $\eta$ is the life insurance purchase, after which a continuous stream of premium payments will be paid. In the following, we shall simply write $X$ to denote $X^{c,\pi, \eta}$, where needed.  

The agent determines the optimal levels of consumption and investment, as well as the timing for purchasing life insurance to secure the bequest. We apply a power utility function for both consumption and bequest, specifically, 
\begin{align}\label{utility}
u(\cdot):=\frac{(\cdot)^{1-\gamma}}{1-\gamma}, \ \gamma >0 \ \text{and} \ \gamma \neq 1.
\end{align}
The agent's aim is then to maximize the expected lifetime utility\footnote{{In Section \ref{dissec}, we consider another related scenario where the policyholder allocates a portion of their initial wealth for inheritance as an extension. Consequently, $u(0)$ in (\ref{objective}) becomes $u(q)$, where $q$ is directly designated for inheritance.}}
\begin{align}\label{objective}
    \mathbb{E}\bigg[ \Big( \int^\tau_0e^{-\rho t}u(c_t)dt +e^{-\rho \tau}u(0) \Big)\mathds{1}_{\{ \eta \geq \tau\}} + \mathds{1}_{\{\eta<\tau\}}\Big( \int^\tau_0 e^{-\rho t}u(c_t)dt+e^{-\rho \tau}u(lB_\tau) \Big) \bigg],
\end{align}
where $\rho>0$ is a constant representing the subjective discount rate. For $\eta \geq \tau$, no life insurance is taken out during the individual's lifetime. For an individual with a bequest motive, it can bring a disutility of zero for $\gamma \in (0,1)$, i.e. $u(0)=0$, or an infinite disutility level for $\gamma >1$, i.e. $u(0)=-\infty$. The constant $l >0$ measures how the agent weighs bequest in her total lifetime utility. 

{\raggedleft{Thanks to}} Fubini's Theorem and independence between $\tau$ and $\mathcal{F}_\infty$, we can disentangle the market and mortality risk and write (\ref{objective}) as
\begin{align}\label{objective2}
    &\mathbb{E}\bigg[\int^\infty_0e^{-(\rho+m) t}u(c_t)dt +u(0) \int^\eta_0 e^{-(\rho+m) t} \, m \,dt+   \int^\infty_\eta e^{-(\rho+m) t}\, m \, u(lB_t) dt  \bigg]\nonumber\\
    &=  \mathbb{E}\bigg[\int^\eta_0e^{-(\rho+m) t}\Big(u(c_t)+m \, u(0) \Big)dt +\int^\infty_\eta e^{-(\rho+m) t} \Big(u(c_t)+m \,u(lB_t)\Big) dt  \bigg].
\end{align}

\section{A predetermined bequest amount} \label{sec:predeq} 
\subsection{Problem formulation}
In this section we consider the model with a predetermined bequest amount $B_t=B$. In this case, the premium in (\ref{premium}) is $h_t =m \cdot B:=h$. 

Here and in the sequel, for $y>0$, we write $\mathcal{O}:= (-\frac{y}{\kappa},\infty) \times \mathbb{R}_+$  with $\mathbb{R}_+:=(0,\infty)$.  We denote by $\mathcal{S}$ the class of $\mathbb{F}$-stopping times $\eta: \Omega \to [0,\infty]$. Then we introduce the set of admissible strategies $\mathcal{A}(x,y)$ as it follows.

\begin{definition}\label{admissiblecontrol}
Let $(x,y) \in \mathcal{O}$ be given and fixed. The triplet of choices $(c,\pi, \eta)$ is called an \textbf{admissible strategy} for $(x,y)$, and we write $(c,\pi, \eta) \in \mathcal{A}(x,y)$, if it satisfies the following conditions: 
\begin{enumerate}[label=(\roman*)]
	\item  $c$ and $\pi$ are progressively measurable with respect to $\mathbb{F}$, $\eta \in \mathcal{S}$;
	\item $c_t \geq 0$ for all $t \geq 0$ and $\int_0^t(c_s+|\pi_s|^2)ds <\infty$ for all $t\geq 0$ $ \ \mathbb{P}$-a.s.; 
	\item $X^{c,\pi, \eta}_t +g_t >  \frac{h}{r}\mathds{1}_{\{t\geq  \eta\}}$ for all $t\geq 0$, where $g_t$ is defined in (\ref{gt}).
\end{enumerate}
\end{definition}
{\raggedleft{The term}} $\frac{h}{r}=\int^\infty_t h e^{-r(s-t)}ds$ in Condition (iii) is the present value of the future premium payment of the agent, under the assumption that the agent is always alive. Due to (iii) above, the agent is able to consume and invest as long as her wealth level plus her present value of the future income is above $\frac{h}{r}$ at time $t\geq \eta$. It also means that we allow the agent to borrow fully against the stream of future income. Before purchasing life insurance, she should keep her wealth plus present value of the future income positive for further consumption or financial investment. {This assumption of borrowing against the future income is equivalent to say that the economic agent possesses her human capital, defined as the present value of future income, which she can use for consumption and investment choices. Thus, an individual's overall wealth encompasses both their current financial assets and their human capital. As extensively pointed out in academic literature on life-cycle decisions, e.g. \cite{cocco2005consumption} and \cite{dybvig2010lifetime}, human capital, representing one's capacity to work, serves as the primary asset. In youth, the agent can capitalize on their future income to allocate more towards risky assets, leading to a higher equity allocation (see e.g. \cite{bosserhoff2022investment})}.

From (\ref{objective2}), given the Markovian setting, the agent aims at determining
\begin{align}\label{vxy}
V(x,y):=&\sup_{(c, \pi,\eta) \in \mathcal{A}(x,y) }\mathbb{E}_{x,y}\bigg[\int^\eta_0e^{-(\rho+m) t}\Big(u(c_t)+m \,u(0)\Big)dt \nonumber\\  &+ \int^\infty_\eta e^{-(\rho+m) t} \Big(u(c_t)+m\,u(lB)\Big) dt  \bigg],\nonumber\\
=:&\sup_{(c,\pi,\eta) \in \mathcal{A}(x,y)}J_{x,y}(c,\pi,\eta).
\end{align}
Here, $ \mathbb{E}_{x,y}$ represents the expectation under $\mathbb{P}_{x,y}$, specifically conditioned on $X_0=x$ and $Y_0=y$. Throughout the remainder of this paper, given $\boldsymbol{\psi} \in \mathbb{R}^n, n\geq 1$,  $\mathbb{E}_{\boldsymbol{\psi}}$ will denote the expectation under the measure $\mathbb{P}_{\boldsymbol{\psi}}$, where $\mathbb{P}_{\boldsymbol{\psi}}$ is the probability measure on $(\Omega,\mathcal{F})$ under which the considered Markov process $\{\Psi_t, t\geq 0 \}$ starts at time zero from the specified level $\boldsymbol{\psi}$. In the sequel, whenever necessary, we also write $X^x$ (similarly, $Z^z, Y^y, X^{*x}$) to stress the dependency of the considered processes on their initial datum. 

The rest of this section will study $(\ref{vxy})$. To that end, we make the following assumption.
\begin{assume}\label{assume2}
We assume $\rho+m > (1-\gamma)r+ \frac{1-\gamma}{2\gamma} \theta^2$.
\end{assume}

The above assumption is a standard assumption to make the optimization problem well-defined and holds throughout the paper without further comments. Specifically, under Assumption \ref{assume2} when the agent is forced to choose $\eta=0$, $J_{x,y}(c,\pi,0)$ is finite (see Propositions \ref{pro3.1} and \ref{pro4.1}  below for any choice of admissible $(c,\pi)$). Note that the above assumption holds always for a relative risk aversion level larger than 1 and $r \geq 0$. 
  Similarly, it can be shown that when $\eta=\infty$ (that is, the agent never buys the life insurance), $J_{x,y}(c,\pi,\infty)$, is also finite for any admissible $(c,\pi)$ thanks to Assumption \ref{assume2}. A similar requirement is posed in e.g., \cite{karatzas1986explicit} and \cite{jin2006disutility}.

The following theorem holds directly by observing (\ref{vxy}), since $u(0)=-\infty$ when $\gamma>1$.
\begin{theorem}\label{buy}
When $\gamma>1$, the optimal purchasing time is $\eta^*=0$.
\end{theorem}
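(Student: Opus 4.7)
The plan is to exploit the fact that the Inada-type singularity of $u$ at zero makes any delay infinitely costly. First I would note that, for $\gamma>1$, the utility function in (\ref{utility}) satisfies $u(0)=\lim_{c\downarrow 0}\frac{c^{1-\gamma}}{1-\gamma}=-\infty$, since $1-\gamma<0$ sends $c^{1-\gamma}\to+\infty$ and division by the negative $1-\gamma$ reverses the sign. This is the single structural fact that drives the result.

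I would then analyse the reformulated objective (\ref{objective2}). For any admissible triplet $(c,\pi,\eta)\in\mathcal{A}(x,y)$ with $\mathbb{P}(\eta>0)>0$, the pre-purchase contribution contains the term
\begin{align*}
\mathbb{E}\bigg[m\,u(0)\int_0^{\eta} e^{-(\rho+m)t}\,dt\bigg] \;=\; m\,u(0)\,\mathbb{E}\bigg[\frac{1-e^{-(\rho+m)\eta}}{\rho+m}\bigg] \;=\; -\infty,
\end{align*}
because the expectation on the right is strictly positive on the event $\{\eta>0\}$ while $u(0)=-\infty$. The remaining summands $u(c_t)$ and $m\,u(lB)$ are bounded above by $0$ (since $u$ is non-positive when $\gamma>1$), so they cannot compensate. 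Hence $J_{x,y}(c,\pi,\eta)=-\infty$ for every such strategy.

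Next I would exhibit at least one admissible triplet $(c,\pi,0)\in\mathcal{A}(x,y)$ with $J_{x,y}(c,\pi,0)>-\infty$. Choosing $\eta=0$ removes the offending $u(0)$ term entirely, and Assumption \ref{assume2} together with the finiteness statements referenced in Propositions \ref{pro3.1} and \ref{pro4.1} guarantee that the post-purchase objective $\mathbb{E}\big[\int_0^\infty e^{-(\rho+m)t}(u(c_t)+m\,u(lB))\,dt\big]$ is finite for suitable $(c,\pi)$. Combining the two observations, any $\eta$ with $\mathbb{P}(\eta>0)>0$ is strictly dominated by a strategy with $\eta=0$, so $\eta^{*}=0$ is optimal.

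The main (minor) obstacle is checking that admissibility condition (iii) in Definition \ref{admissiblecontrol} can indeed be satisfied by some $(c,\pi)$ once $\eta=0$, i.e.\ verifying that $X_t+g_t>h/r$ is feasible from the start. This boils down to the standing condition that the initial wealth plus human capital suffices to meet the present value of future premiums, and is already implicit in the problem formulation. Beyond this finite-value check for the $\eta=0$ strategy, the argument needs no further machinery and is essentially the direct observation flagged just before the theorem statement.
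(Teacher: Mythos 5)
Your proposal is correct and is essentially the paper's own argument: the paper proves Theorem \ref{buy} in one line by observing from (\ref{vxy}) that $u(0)=-\infty$ when $\gamma>1$, which is precisely the dichotomy you spell out (any $\eta$ with $\mathbb{P}(\eta>0)>0$ yields value $-\infty$ since the non-positive terms $u(c_t)$ and $m\,u(lB)$ cannot compensate, while $\eta=0$ attains a finite value under Assumption \ref{assume2}). Your closing admissibility caveat, that $\eta=0$ requires $x+\frac{y}{\kappa}>\frac{h}{r}$, is a detail the paper likewise leaves implicit, so the two arguments coincide.
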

In other words, when $\gamma>1$, the optimal decision for the agent is to purchase life insurance immediately. Therefore, we just need to consider the case $\gamma<1$ in the following subsections.

\subsection{Solution to the problem for $\gamma \in(0,1)$}\label{sec3-2}
In this subsection, we determine the explicit solution to (\ref{vxy}) when $0<\gamma<1$ by combining a duality and a free-boundary approach. To accomplish that, we shall first conduct successive transformations (cf.\ Subsections \ref{3-2-1}, \ref{3-2-2} and \ref{3-2-3}) that connect the original stochastic control-stopping problem (with value function $V$) into its dual problem (with value function $v$) by martingale and duality methods (similar to \cite{karatzas2000utility}). Then we study the reduced-version dual stopping problem (with value function $\widehat{v}$) and obtain the explicit forms of the free boundary and of the value function by using the classical ``guess and verify" approach  in Subsections \ref{3-2-4} and \ref{3-2-5}. If the reader is not interested in the detailed mathematical analysis, she can skip this section and find the optimal policies in Theorem \ref{optimal} and the numerical illustrations in Subsection \ref{sec3num}, prior to reaching Section \ref{sec:conbeq}.  

\subsubsection{The static budget constraint}\label{3-2-1}
We now transform the dynamic budget constraint in (\ref{wealth}) with $h=h_t=m \cdot B$ into a static budget constraint by using the well-known method developed by \cite{karatzas1987optimal} and \cite{cox1989optimal}.

From the optional sampling theorem and Fatou's lemma, we can express the dynamics of the agent's wealth \eqref{wealth} through the following static budget constraint. This constraint applies to two scenarios: one before and the other after the acquisition of life insurance:
\begin{align}\label{static1}
		\mathbb{E}_{x,y}\big[{\xi_{t \wedge \eta}}X_{t \wedge \eta}\big]+\mathbb{E}_{x,y}\bigg[ \int_0^{t \wedge \eta} \xi_{u} c_u du                         \bigg]  \leq x+ \mathbb{E}_{x,y}\bigg[ \int_0^{t \wedge \eta} \xi_{u} Y_u du \Bigg], \quad  \ t \geq 0,
	\end{align}
and
\begin{align}\label{static2}
\mathbb{E}_{x,y}\big[{\xi_{t \vee \eta}}X_{t \vee \eta }\big]+\mathbb{E}_{x,y}\bigg[ \int_\eta^{t \vee \eta} \xi_{u} \Big(c_u+h\Big) du                         \bigg]   \leq X_\eta+ \mathbb{E}_{x,y}\bigg[ \int_\eta^{t \vee \eta} \xi_{u} Y_u du \Bigg], \quad \ t \geq 0.
\end{align}

\subsubsection{The optimization problem after purchasing life insurance}\label{3-2-2}
In this subsection we will consider the agent’s optimization problem \emph{after} purchasing life insurance, and over this time period only consumption and portfolio choice have to be determined. Formally, the model in the previous section accommodates to this case if we let $\eta =0$, where $0$ is the fixed starting time. Then, letting $\mathcal{A}_0(x,y):= \{ (c,\pi): (c,\pi,0) \in \mathcal{A}(x,y)\},$ where the subscript $0$ indicates that the purchasing time $\eta$ is equal to $0$, the agent's value function after purchasing life insurance reads (simply let $\eta=0$ in (\ref{vxy}))   
\begin{align}\label{vhat}
    \widehat{V}(x,y):= \sup_{(c,\pi)\in \mathcal{A}_0(x,y)} \mathbb{E}_{x,y}\bigg[\int^\infty_0 e^{-(\rho+m) s } \Big(u(c_s)+m \, u(lB)\Big) ds \bigg].
    \end{align}
Recalling that $\xi_{s}=e^{-rs-\theta W_s -\frac{1}{2}\theta^2 s}$, and for any pair $(c,\pi)\in \mathcal{A}_0(x,y)$ with a Lagrange multiplier $z>0$, we have
\begin{align}\label{dual}
 \mathbb{E}&_{x,y} \bigg[    \int_0^{\infty}  e^{-(\rho+m) s } \Big(u(c_s)+m  \, u(lB)\Big)  ds  \bigg] \nonumber \\ \leq &\
 \mathbb{E}_{x,y} \bigg[    \int_0^{\infty} e^{-(\rho+m) s } \Big(u(c_s)+m  \, u(lB)\Big)  ds  \bigg]   - z      \mathbb{E}_{x,y}\bigg[  \int_0^{\infty} \xi_{s} (c_s-Y_s) ds                         \bigg]  +z( x-\frac{h}{r})                  \nonumber               \\
 =&\   \mathbb{E}_{x,y}\bigg[    \int_0^{\infty }  e^{-(\rho+m) s }\Big(u(c_s)+m \, u(lB)\Big)  ds \bigg]\nonumber\\&- \mathbb{E}_{x,y}\bigg[ \int_0^{\infty}  e^{-(\rho+m) s }z P_s  (c_s-Y_s) ds             \bigg]+z( x-\frac{h}{r})  \nonumber  \\
\leq &\  \mathbb{E}_{x,y}\bigg[    \int_0^{\infty } e^{-(\rho+m) s }\Big(\widehat{u}(zP_s)+ zP_sY_s+m \, u(lB)\Big) ds  \bigg]+ z( x-\frac{h}{r}), 
\end{align}
where the first inequality results from the budget constraint stated in \eqref{static2} with $\eta=0$. Further, 
\begin{align}\label{pt}
P_t:= \xi_{t}e^{(\rho+ m)t } \quad \text{and} \quad \widehat{u}(z):= \sup_{c \geq 0}[u(c)-c z], \ z>0,
\end{align}
{\raggedleft{where}} $\widehat{u}(z)$ is the convex dual of $u(c)$. Let then $Z_t :=z P_t.$ By It\^o's formula, we obtain that the dual variable $Z$ satisfies
\begin{align}\label{Z}
dZ_t=  (\rho-r+ m)Z_tdt - \theta Z_t dW_t, \quad Z_0= z,
\end{align}
and we set 
\begin{align}\label{W}
\widehat{Q}(z,y):=  \mathbb{E}_{z,y}\bigg[    \int_0^{\infty }  e^{- (\rho+ m) s} \Big(\widehat{u}(Z_s)+Z_sY_s+m \, u(lB)\Big) ds  \bigg].
\end{align}

\begin{pro}\label{pro3.1}
$\widehat{Q}$ is finite and one has $\widehat{Q}\in  C^{2,2}(\mathbb{R}^2_+).$ Moreover, $\widehat{Q}$ satisfies 
\begin{align}\label{LW}
-{\mathcal{L}}\widehat{Q}=\widehat{u}+m\,u(lB)+zy, \ \text{on} \ \mathbb{R}^2_+,  
\end{align}
where 
\begin{align}\label{L}
{\mathcal{L}}\widehat{Q}:=\frac{1}{2}\theta^2 z^2\widehat{Q}_{zz}+(\rho-r+m)z\widehat{Q}_z +\frac{1}{2}\sigma_y^2 y^2\widehat{Q}_{yy}+\mu_y y\widehat{Q}_y-\theta \sigma_y z y\widehat{Q}_{zy}   -(\rho+m)\widehat{Q}.
\end{align}
\end{pro}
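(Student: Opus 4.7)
The strategy is to obtain a closed-form expression for $\widehat{Q}$ and then read off the three claims. I would first solve the SDE \eqref{Z} explicitly, producing the geometric Brownian motion $Z_s = z\exp\bigl((\rho-r+m-\tfrac{1}{2}\theta^2)s - \theta W_s\bigr)$, and recall from \eqref{income} the analogous expression for $Y_s$. For the power utility \eqref{utility}, a direct computation yields $\widehat{u}(z) = \tfrac{\gamma}{1-\gamma}z^{1-1/\gamma}$. Substituting these into \eqref{W} decomposes $\widehat{Q}(z,y)$ into three time integrals -- one for $\widehat{u}(Z_s)$, one for the cross term $Z_sY_s$, and one for the constant $m\,u(lB)$ -- each of which can be evaluated via the log-normal moment formula.

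Setting $\alpha_1 := (1-\tfrac{1}{\gamma})(\rho-r+m) - \tfrac{1-1/\gamma}{2\gamma}\theta^2$, the first integral equals $\tfrac{\gamma}{1-\gamma}z^{1-1/\gamma}\int_0^\infty e^{(\alpha_1-\rho-m)s}\,ds$. A brief algebraic manipulation gives
\begin{equation*}
\rho+m-\alpha_1 \;=\; \frac{1}{\gamma}\Bigl[\rho+m-(1-\gamma)r-\tfrac{1-\gamma}{2\gamma}\theta^2\Bigr],
\end{equation*}
which is strictly positive by Assumption \ref{assume2}; this is the one delicate step securing integrability of the dual-utility term. The second integral reduces to $\int_0^\infty zy\,e^{-\kappa s}\,ds = zy/\kappa$, finite by the standing hypothesis $\kappa>0$; the third is simply $m\,u(lB)/(\rho+m)$. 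Collecting these contributions yields
\begin{equation*}
\widehat{Q}(z,y) \;=\; A\,z^{1-1/\gamma} + \frac{zy}{\kappa} + \frac{m\,u(lB)}{\rho+m}, \qquad A := \frac{\gamma^{2}}{(1-\gamma)\bigl[\rho+m-(1-\gamma)r-\tfrac{1-\gamma}{2\gamma}\theta^{2}\bigr]}.
\end{equation*}

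From this representation, $\widehat{Q}$ is finite on $\mathbb{R}_+^2$ and of class $C^\infty$, so in particular $\widehat{Q}\in C^{2,2}(\mathbb{R}_+^2)$. It remains to verify the PDE \eqref{LW}: I would compute the partial derivatives of the closed form, plug them into $\mathcal{L}$ as in \eqref{L}, and group terms by monomial type. The $z^{1-1/\gamma}$-coefficient in $-\mathcal{L}\widehat{Q}$ collapses to $\tfrac{\gamma}{1-\gamma}$ precisely because $A$ was chosen to invert $\rho+m-\alpha_1$, thereby reproducing $\widehat{u}(z)$; using $-r+\mu_y-\theta\sigma_y=-\kappa$, the $zy$-coefficient reduces to $+1$, matching the source term $zy$; and the constant part yields $m\,u(lB)$. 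The only nontrivial obstacle is the coefficient bookkeeping in the $z^{1-1/\gamma}$ term, which is exactly what ties $A$ to Assumption \ref{assume2}; every other step is mechanical.
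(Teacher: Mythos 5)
Your proposal is correct and follows essentially the same route as the paper's own proof: compute $\widehat{u}$, evaluate the three integrals in \eqref{W} via log-normal moments (with Assumption \ref{assume2} giving positivity of the effective discount rate $K$ and $\kappa>0$ handling the cross term), obtain the closed form $\widehat{Q}(z,y)=\tfrac{\gamma}{1-\gamma}\tfrac{1}{K}z^{\frac{\gamma-1}{\gamma}}+\tfrac{zy}{\kappa}+\tfrac{m\,u(lB)}{\rho+m}$, and verify smoothness and \eqref{LW} by direct differentiation. The only cosmetic caveat is that your auxiliary exponent $\alpha_1$ clashes notationally with the paper's root $\alpha_1$ of \eqref{root}, so it should be renamed.
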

\begin{proof}

The proof is given in Appendix \ref{proofpro3.1}.

\end{proof}

{\raggedleft{It}} is then possible to relate the agent's value function $\widehat{V}$ after purchasing the life insurance to $\widehat{Q}$ through the following duality relation.
\begin{theorem}\label{dualrelation1}
For given $(x,y) \in (\frac{h}{r}-\frac{y}{\kappa},\infty)\times \mathbb{R}_+$, the following dual relations hold:
\begin{align*}
\widehat{V}(x,y) = \inf_{z >0} [ \widehat{Q}(z,y)+z(x-\frac{h}{r}) ], \quad \widehat{Q}(z,y) =  \sup_{x> \frac{h}{r}-\frac{y}{\kappa}}[\widehat{V}(x,y)-z(x-\frac{h}{r})].
\end{align*}
Moreover, optimal consumption process $c_t^*=(Z_t^*)^{-\frac{1}{\gamma}}$, optimal portfolio process $\pi^*_t= \frac{\theta (Z_t^*)^{-\frac{1}{\gamma}}}{K \gamma \sigma}-\frac{\sigma_y Y_t}{\sigma \kappa}$ and corresponding optimal wealth process $X_t^*=\frac{(Z^*_t)^{-\frac{1}{\gamma}}}{K}+\frac{h}{r}-\frac{Y_t}{\kappa}$, where $Z_t^*$ is the solution to Equation (\ref{Z}) with the initial condition $z^*$ satisfying $\widehat{Q}_z(z^*,y)+x-\frac{h}{r}=0$.

\end{theorem}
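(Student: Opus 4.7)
The plan is to prove the two duality identities together with the explicit forms of $(c^\ast,\pi^\ast,X^\ast)$ by first establishing an upper bound on $\widehat{V}$ and then exhibiting an admissible control that attains it. The chain of inequalities in \eqref{dual} already shows that for every $(c,\pi)\in\mathcal{A}_0(x,y)$ and every $z>0$ one has
\[
\mathbb{E}_{x,y}\Big[\int_0^\infty e^{-(\rho+m)s}\big(u(c_s)+m\,u(lB)\big)\,ds\Big]\leq \widehat{Q}(z,y)+z\big(x-\tfrac{h}{r}\big),
\]
so taking sup over $(c,\pi)$ and inf over $z$ delivers $\widehat{V}(x,y)\le\inf_{z>0}\!\big[\widehat{Q}(z,y)+z(x-h/r)\big]$.

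To match this bound, I would exploit that the two inequalities used in \eqref{dual} are the static budget constraint \eqref{static2} and Fenchel--Young. The latter is sharp iff $c=(u')^{-1}(z)=z^{-1/\gamma}$, which motivates the candidate $c_t^\ast:=(Z_t^\ast)^{-1/\gamma}$ with $Z^\ast$ solving \eqref{Z} from $z^\ast$. To choose $z^\ast$, I take first-order conditions: using Proposition~\ref{pro3.1}, the smoothness of $\widehat{Q}$ and the envelope formula $\widehat{u}'(z)=-z^{-1/\gamma}$, one justifies differentiation under the expectation to obtain
\[
\widehat{Q}_z(z,y)=-\mathbb{E}_{z,y}\!\Big[\int_0^\infty \xi_s Z_s^{-1/\gamma}\,ds\Big]+\frac{y}{\kappa}.
\]
Thus $\widehat{Q}_z(z^\ast,y)+x-h/r=0$ is exactly the statement that the static budget \eqref{static2} holds with equality for $c^\ast$; strict convexity of $z\mapsto \widehat{Q}(z,y)+z(x-h/r)$, inherited from the strict convexity of $\widehat{u}$, ensures that such $z^\ast$ exists and is unique.

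The optimal wealth then arises as the $\xi$-martingale projection of the future net cash flows. A direct computation using $c_s^\ast=(z^\ast)^{-1/\gamma}\xi_s^{-1/\gamma}e^{-(\rho+m)s/\gamma}$ and the lognormal identity
\[
\mathbb{E}\big[\xi_s^{(\gamma-1)/\gamma}\big|\mathcal{F}_t\big]=\xi_t^{(\gamma-1)/\gamma}\exp\Big\{\Big[-\tfrac{\gamma-1}{\gamma}r+\tfrac{1-\gamma}{2\gamma^{2}}\theta^{2}\Big](s-t)\Big\}
\]
collapses $\mathbb{E}\big[\int_t^\infty \xi_s c_s^\ast\,ds\mid \mathcal{F}_t\big]$ to $\xi_t(Z_t^\ast)^{-1/\gamma}/K$, where $K:=r+(\rho+m-r)/\gamma-\theta^{2}(1-\gamma)/(2\gamma^{2})$; positivity of $K$ is precisely Assumption~\ref{assume2}. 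Combining this with the elementary identities $\mathbb{E}[\int_t^\infty \xi_s h\,ds\mid\mathcal{F}_t]=h\xi_t/r$ and $\mathbb{E}[\int_t^\infty \xi_s Y_s\,ds\mid\mathcal{F}_t]=\xi_t Y_t/\kappa$ yields $X_t^\ast=(Z_t^\ast)^{-1/\gamma}/K+h/r-Y_t/\kappa$. Applying It\^o's formula to this expression and comparing the diffusion coefficient with that of the wealth SDE \eqref{wealth} (with $h_t=h$ and $\eta=0$) uniquely identifies $\pi_t^\ast=\theta(Z_t^\ast)^{-1/\gamma}/(K\gamma\sigma)-\sigma_y Y_t/(\sigma\kappa)$.

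It remains to verify admissibility of $(c^\ast,\pi^\ast,0)$ and to deduce the second dual identity. Progressive measurability is immediate from the construction, the $L^1_{\mathrm{loc}}$ conditions on $c^\ast$ and $|\pi^\ast|^2$ follow from standard moment estimates on $Z^\ast$ and $Y$ together with $K>0$, and condition (iii) of Definition~\ref{admissiblecontrol} is guaranteed by the identity $X_t^\ast+g_t-h/r=(Z_t^\ast)^{-1/\gamma}/K>0$. With admissibility in hand, all inequalities in \eqref{dual} are sharp, giving $\widehat{V}(x,y)\ge \widehat{Q}(z^\ast,y)+z^\ast(x-h/r)\ge\inf_{z>0}[\widehat{Q}(z,y)+z(x-h/r)]$ and hence the first duality. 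The conjugate identity $\widehat{Q}(z,y)=\sup_{x>h/r-y/\kappa}[\widehat{V}(x,y)-z(x-h/r)]$ follows from strict convexity/smoothness of $\widehat{Q}(\cdot,y)$ by the standard biconjugacy argument for Legendre transforms. I expect the main technical obstacle to be justifying the exchange of derivative and expectation in the formula for $\widehat{Q}_z$ (and thereby establishing the first-order condition that determines $z^\ast$), together with the verification that $K>0$ exactly matches Assumption~\ref{assume2}; the remaining calculations are essentially routine power-utility algebra.
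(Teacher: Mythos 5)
Your proposal is correct and takes essentially the same route as the paper's own proof: weak duality from the inequality chain \eqref{dual}, the candidate $c^*_t=(Z^*_t)^{-\frac{1}{\gamma}}$ from sharpness of Fenchel--Young, $z^*$ pinned down by the binding static budget constraint (equivalently $\widehat{Q}_z(z^*,y)+x-\frac{h}{r}=0$), the optimal wealth as the $\xi$-discounted conditional expectation of future net cash flows, and $\pi^*$ identified by comparing It\^o's formula with \eqref{wealth}. The only cosmetic differences are that the paper establishes existence and uniqueness of $z^*$ via the monotonicity and limits of the map $z\mapsto\mathbb{E}_{z,y}\big[\int_0^\infty \xi_s\big(\mathcal{I}^u(zP_s)-Y_s\big)ds\big]+\frac{h}{r}$ rather than via strict convexity plus the explicit first-order condition, constructs the admissible portfolio through Lemma \ref{budget2} and the cited results of Karatzas--Shreve and Karatzas--Wang rather than by your direct lognormal computation, and proves the second duality by exhibiting the maximizing $x$ directly instead of invoking biconjugacy.
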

\begin{proof}
The proof is given in Appendix \ref{proofdualrelation1}.
\end{proof}

\subsubsection{The dual optimal stopping problem}\label{3-2-3}
Recall that we are focusing on the case $\gamma<1$ (i.e.\ $u(0)=0$) due to Theorem \ref{buy}. Notice that, for any $(x,y) \in \mathcal{O}$, it holds by definition of $V$ and strong Markov property that 
\begin{align}\label{vxy2}
    V(x,y)&\leq  \sup_{(c,\pi, \eta) \in \mathcal{A}(x,y)} \mathbb{E}_{x,y}\bigg[\int^\eta_0e^{- (\rho+ m) t}u(c_t) dt +e^{- (\rho+ m) \eta}\widehat{V}(X_\eta,Y_\eta)  \bigg].
    \end{align}
    

Now, for any $(x,y)\in \mathcal{O}$ and Lagrange multiplier $z>0$, from (\ref{vxy}), the budget constraint (\ref{static1}), (\ref{vxy2}), and recalling $P_t$ as in (\ref{pt}), we have 
\begin{align}\label{dual2}
 \mathbb{E}&_{x,y}\bigg[\int^\eta_0 e^{-(\rho+m)t }u(c_t)dt +\int^\infty_\eta e^{-(\rho+m)t } \Big(u(c_t)+m\,u(lB)\Big) dt  \bigg]
 \nonumber \\ \leq &\
\sup_{(c,\pi,\eta) \in \mathcal{A}(x,y)} \mathbb{E}_{x,y}  \Big[\int^\eta_0 e^{-(\rho+m) t }u(c_t)dt + e^{-(\rho+m) \eta }\widehat{V}(X_\eta,Y_\eta)  \Big] \nonumber \\& - z      \mathbb{E}_{x,y} \bigg[   \xi_{\eta}X_{\eta}+ \int_0^{\eta} \xi_{t}(c_t-Y_t) dt                         \bigg]  +z x                        \nonumber       \\
=&\   \sup_{(c,\pi,\eta) \in \mathcal{A}(x,y)} \mathbb{E}_{x,y}\bigg[    \int_0^{\eta}   e^{-(\rho+m) t } \bigg(u(c_t) -zP_tc_t+zP_tY_t\bigg) dt  \nonumber \\&+ e^{-(\rho+m) \eta } \widehat{V}(X_{\eta},Y_{\eta}) -  e^{-(\rho+m) \eta }zP_{\eta}X_{\eta}      \bigg]+z x  \nonumber \\
 \leq &\  \sup_{\eta \in \mathcal{S}} \mathbb{E}_{z,y}\bigg[    \int_0^{\eta} e^{-(\rho+m) t }\Big(\widehat{u}(Z_t)+Z_tY_t\Big) dt
+  e^{-(\rho+m) \eta }\Big(\widehat{Q}(Z_{\eta},Y_{\eta})-Z_{\eta}\frac{h}{r} \Big) \bigg]+ zx , 
\end{align}
where we recall that $\widehat{u}(z)= \sup_{c \geq 0}[u(c)-c z], z>0$, and $Z_t$ is defined in (\ref{Z}).

{\raggedleft{Hence, defining the (candidate) dual value function}}
\begin{align}\label{jzy}
v(z,y)&:=\sup_{ \eta \in \mathcal{S}
} \mathbb{E}_{z,y}\bigg[    \int_0^{\eta }  e^{- (\rho+ m) t}\Big(\widehat{u}(Z_t) +Z_tY_t\Big)dt + e^{- (\rho+ m) \eta}\Big(\widehat{Q}(Z_{\eta},Y_\eta)-Z_{\eta}\frac{h}{r} \Big) \bigg],
\end{align}
we have a two-dimensional optimal stopping problem, with dynamic $(Z,Y)$ as in (\ref{Z}) and (\ref{income}).


\subsubsection{Preliminary properties of the value function}\label{3-2-4}
To study the optimal stopping problem (\ref{jzy}), it is convenient to introduce the function
\begin{align}\label{4-3}
\widehat{v}(z,y):={v}(z,y)-(\widehat{Q}(z,y)-\frac{h}{r}z).
\end{align}

Applying It\^o's formula to $\{ e^{-(\rho+m)t }[\widehat{Q}(Z_t,Y_t)-Z_t \frac{h}{r}] , t \in [0,\eta]  \}$, and taking conditional expectations we have 
\begin{align*}
&\mathbb{E}_{z,y}\Big[ e^{-(\rho+m) \eta }\Big(\widehat{Q}(Z_{\eta},Y_\eta)-Z_{\eta}\frac{h}{r} \Big)\Big]= \widehat{Q}(z,y)-z\frac{h}{r}+ \mathbb{E}_{z,y}\bigg[ \int_0^{\eta}  e^{-(\rho+m) s }\mathcal{L}\Big(\widehat{Q}( Z_s,Y_s)-Z_s\frac{h}{r}\Big) ds               \bigg ],
\end{align*}
where $\mathcal{L}$ is defined in (\ref{L}).

Combining (\ref{jzy}) and (\ref{4-3}), we have 
 \begin{align}\label{hatjz}
\widehat{v}(z,y)& =\sup_{  \eta \in \mathcal{S}
} \mathbb{E}_{z,y}\bigg[    \int_0^{\eta }  e^{- (\rho+ m) s} \Big(\widehat{u}(Z_s)+Z_sY_s\Big) ds  \nonumber + \int_{0}^{\eta} e^{- (\rho+ m) s}\mathcal{L}\Big(\widehat{Q}(Z_s,Y_s)-Z_s\frac{h}{r}\Big) ds               \bigg ] \nonumber \\&
=\sup_{ \eta \in \mathcal{S}} \mathbb{E}_{z,y}\bigg[    \int_0^{\eta}  e^{- (\rho+ m) s} \Big(Z_s h-m\,u(lB)\Big)ds  \bigg],
\end{align}
where we have used the fact that (cf.\ (\ref{LW}))
\begin{align*}
\mathcal{L}(\widehat{Q}(z,y)-z \frac{h}{r})&= \mathcal{L}\widehat{Q}( z,y) -\mathcal{L}(z \frac{h}{r})=-\widehat{u}(z)-m\,u(lB)-zy+zh.
\end{align*}

From (\ref{hatjz}) we see that $\widehat{v}(z,y)$ is independent of $y$, so that $\widehat{v}$ is the value of a one-dimensional optimal stopping problem for the process $Z$. Hence, in the following, with a slight abuse of notation, we simply write $\widehat{v}(z)$.

As usual in optimal stopping theory, we let
\begin{align*}
\mathcal{C}:=\{ z \in \mathbb{R}_+: \widehat{v}(z)>0  \},\quad
\mathcal{R}:=\{ z \in \mathbb{R}_+: \widehat{v}(z)=0  \}
\end{align*}
be the so-called continuation (waiting) and stopping (purchasing) regions, respectively. We denote by $\partial \mathcal{C}$ the boundary of the set $\mathcal{C}.$
\begin{sloppypar}
Since, for any stopping time $\eta$, the mapping $z \rightarrow   \mathbb{E}_{z}[\int^\eta_0   z h e^{-rs-\frac{1}{2}\theta^2 s-\theta W_s}ds -\int^\eta_0 e^{-(\rho+m)s} m\, u(lB)ds  ]$ is continuous, then $\widehat{v}$ is lower semicontinuous on $\mathbb{R}_+$. Hence, $\mathcal{C}$ is open, $\mathcal{R}$ is closed, and introducing the stopping time 
\begin{align*}
\eta^*:= \inf\{t \geq 0 :Z_t \in \mathcal{R}\}, \quad \mathbb{P}_{z}\text{-}a.s.,
\end{align*}
with $\inf \emptyset = + \infty$, one has that $\eta^*$ is optimal for $\widehat{v}(z)$ (see, e.g.,\ Corollary I.2.9 in \cite{peskir2006optimal}) for any $z>0$.  

We now derive some preliminary properties of $\widehat{v}$ that will lead the ``guess-and-verify" analysis of Section \ref{3-2-5} below.
\begin{pro}\label{finite}
The function $\widehat{v}$ is such that $0\leq \widehat{v}(z)\leq \frac{zh}{r}$ for all $z \in \mathbb{R}_+$.
\end{pro}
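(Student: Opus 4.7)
The plan is to establish the two bounds separately, both by direct estimation based on the explicit formula for $\widehat{v}$ in \eqref{hatjz}.

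For the lower bound $\widehat{v}(z)\geq 0$, I would simply use the candidate stopping time $\eta\equiv 0\in\mathcal{S}$, which yields a payoff equal to zero; hence the supremum defining $\widehat{v}(z)$ is at least $0$.

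For the upper bound $\widehat{v}(z)\leq zh/r$, the key observation is that we are in the case $\gamma\in(0,1)$, so $u(lB)=(lB)^{1-\gamma}/(1-\gamma)\geq 0$ and therefore $-m\,u(lB)\leq 0$. Consequently, for every $\eta\in\mathcal{S}$,
\begin{align*}
\mathbb{E}_{z}\bigg[\int_0^{\eta} e^{-(\rho+m)s}\bigl(Z_s h-m\,u(lB)\bigr)ds\bigg]\leq h\,\mathbb{E}_{z}\bigg[\int_0^{\eta} e^{-(\rho+m)s}Z_s\,ds\bigg].
\end{align*}
Next I would use the identification $Z_s=zP_s=z\,\xi_s e^{(\rho+m)s}$ from \eqref{pt}, so that $e^{-(\rho+m)s}Z_s=z\xi_s$. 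Since $\{e^{rs}\xi_s\}_{s\geq 0}$ is a positive martingale with unit mean, Tonelli's theorem yields
\begin{align*}
\mathbb{E}\bigg[\int_0^{\eta}\xi_s\,ds\bigg]\leq \mathbb{E}\bigg[\int_0^{\infty}\xi_s\,ds\bigg]=\int_0^{\infty}e^{-rs}ds=\frac{1}{r}.
\end{align*}
Combining the two displays and taking the supremum over $\eta\in\mathcal{S}$ gives $\widehat{v}(z)\leq zh/r$.

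There is no real obstacle: both bounds follow from elementary monotonicity and the martingale property of the discounted state price density. The only subtlety to mention is that the argument crucially relies on $\gamma\in(0,1)$ (so that $u(lB)\geq 0$), which is precisely the regime of interest after Theorem \ref{buy}.
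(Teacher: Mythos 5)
Your proposal is correct and follows essentially the same route as the paper: the lower bound via the zero stopping time, and the upper bound by dropping the nonpositive term $-m\,u(lB)$ (valid since $\gamma\in(0,1)$), rewriting $e^{-(\rho+m)s}Z_s = z\xi_s$, extending the integral to infinity, and using Tonelli together with $\mathbb{E}[\xi_s]=e^{-rs}$ to obtain $zh/r$. Your explicit remark that the argument hinges on $\gamma\in(0,1)$ is a point the paper leaves implicit, but the substance of the two proofs is identical.
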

\begin{proof}
The proof is given in Appendix \ref{prooffinite}.
\end{proof}
\end{sloppypar}

From $\widehat{v}$ as in the (\ref{hatjz}), the next monotonicity of $\widehat{v}$ follows.
\begin{pro}\label{monotonicity}
$z \mapsto \widehat{v}(z) $ is non-decreasing.
\end{pro}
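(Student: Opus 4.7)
The plan is to exploit the multiplicative structure of the dual state process $Z$. Since the SDE in \eqref{Z} is linear and homogeneous, its solution can be written explicitly as
\begin{equation*}
Z_t^z \;=\; z\cdot M_t, \qquad M_t := \exp\Bigl(\bigl(\rho-r+m-\tfrac{1}{2}\theta^2\bigr)t - \theta W_t\Bigr),
\end{equation*}
where $M_t>0$ is a strictly positive process that does not depend on the initial value $z$. Consequently, for any $0<z_1\le z_2$ one has the pathwise comparison $Z_t^{z_1}\le Z_t^{z_2}$ for every $t\geq 0$, $\mathbb{P}$-a.s.

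Next, I would fix an arbitrary stopping time $\eta\in\mathcal{S}$ and observe that the running reward density
\begin{equation*}
(z,\omega,s)\;\longmapsto\; e^{-(\rho+m)s}\bigl(Z_s^z(\omega)\,h-m\,u(lB)\bigr)
\end{equation*}
is, for fixed $(\omega,s)$, non-decreasing in $z$, because $h>0$ and the term $-m\,u(lB)$ is a constant (independent of $z$). Integrating in $s$ over $[0,\eta]$ preserves this monotonicity pathwise, and then taking expectations gives
\begin{equation*}
\mathbb{E}\bigg[\int_0^{\eta} e^{-(\rho+m)s}\bigl(Z_s^{z_1}h-m\,u(lB)\bigr)\,ds\bigg]
\;\le\;
\mathbb{E}\bigg[\int_0^{\eta} e^{-(\rho+m)s}\bigl(Z_s^{z_2}h-m\,u(lB)\bigr)\,ds\bigg].
\end{equation*}

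Finally, taking the supremum over $\eta\in\mathcal{S}$ on both sides (the same class of admissible stopping times is available for both initial conditions, since the filtration is unchanged) yields $\widehat{v}(z_1)\le \widehat{v}(z_2)$, which is the claim. There is no real obstacle here: the only points requiring mild care are the well-definedness of the expectations, which is guaranteed by Proposition \ref{finite} together with Assumption \ref{assume2}, and the observation that the admissible set $\mathcal{S}$ of stopping times does not depend on the initial condition $z$, so that the supremum is genuinely taken over the same family on both sides.
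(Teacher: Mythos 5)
Your proof is correct and follows exactly the route the paper intends: the paper asserts the monotonicity as immediate from the representation \eqref{hatjz}, and your argument—pathwise linearity $Z_t^z = zM_t$, monotonicity of the integrand $e^{-(\rho+m)s}(Z_s^z h - m\,u(lB))$ in $z$ for fixed $\eta$, and the fact that taking the supremum over the $z$-independent class $\mathcal{S}$ preserves the inequality—is precisely the omitted elaboration. Nothing is missing; your attention to well-definedness of the expectations and to the invariance of $\mathcal{S}$ covers the only points that require care.
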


Thanks to the previous monotonicity it is easy to see that the boundary $\partial \mathcal{C}$ can be represented by a constant $b \geq 0$. 
\begin{lemma}\label{lemmab}
Introduce the free boundary $b:=\sup \{z>0: \widehat{v}(z) \leq 0\}$ (with the convention $\sup \emptyset =0$). Then one has 
\begin{align*}
\mathcal{R}=\{ z \in \mathbb{R}_+ : 0<z \leq b \}.
\end{align*}

\end{lemma}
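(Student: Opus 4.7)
The plan is to combine two ingredients that are already available: the non-negativity bound $\widehat{v}\geq 0$ from Proposition \ref{finite}, and the monotonicity from Proposition \ref{monotonicity}. Together they will force the stopping region $\mathcal{R}$ to be a ``down-set'' of $\mathbb{R}_+$, which by closedness must be an interval of the form $(0,b]$.

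First I would observe that since $\widehat{v}(z)\geq 0$ for every $z\in\mathbb{R}_+$, the defining set in the statement simplifies: $\{z>0:\widehat{v}(z)\leq 0\}=\{z>0:\widehat{v}(z)=0\}=\mathcal{R}$, so that $b=\sup\mathcal{R}$. Next, using Proposition \ref{monotonicity}, I would show that $\mathcal{R}$ is closed under moving downwards: if $z_0\in\mathcal{R}$ and $0<z\leq z_0$, then $0\leq \widehat{v}(z)\leq \widehat{v}(z_0)=0$, so that $z\in\mathcal{R}$. This already yields the inclusion $(0,b)\subseteq \mathcal{R}\subseteq (0,b]$.

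To upgrade the first inclusion to $(0,b]\subseteq\mathcal{R}$, I would distinguish three cases for $b$. If $b=0$, both sides of the asserted equality are empty. If $b=+\infty$, monotonicity gives $\widehat{v}\equiv 0$ on $\mathbb{R}_+$, so $\mathcal{R}=\mathbb{R}_+=(0,\infty)$. If $0<b<\infty$, I would invoke the fact—already established in the paragraph preceding the lemma—that $\widehat{v}$ is lower semicontinuous and hence $\mathcal{R}$ is closed in $\mathbb{R}_+$; therefore the supremum $b=\sup\mathcal{R}$ is attained, i.e.\ $b\in\mathcal{R}$, which together with the down-set property yields $(0,b]\subseteq\mathcal{R}$.

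The potential obstacle is really only the verification that the supremum is attained in the finite case, but this is immediate from lower semicontinuity of $\widehat{v}$ (take a sequence $z_n\in\mathcal{R}$ with $z_n\uparrow b$ and use $0\leq \widehat{v}(b)\leq \liminf_n \widehat{v}(z_n)=0$). No additional variational or PDE arguments are needed, since the lemma is a purely structural consequence of monotonicity and non-negativity of $\widehat{v}$.
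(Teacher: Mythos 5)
Your proof is correct and is essentially the argument the paper has in mind: the paper states the lemma as an immediate consequence of Proposition \ref{monotonicity} (together with the non-negativity in Proposition \ref{finite} and the closedness of $\mathcal{R}$ from lower semicontinuity, all established just before the lemma), and your write-up simply fills in those same steps, including the correct use of lower semicontinuity to get $b\in\mathcal{R}$ in the finite case.
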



\subsubsection{Characterization of the free boundary and of the value function}\label{3-2-5}

We notice that the process $Z$ in (\ref{Z}) {is a strong} Markov diffusion with the infinitesimal generator given by 
\begin{align}\label{operator}
\mathbb{L}_Z = \frac{1}{2}\theta^2 z^2 \frac{\partial^2}{\partial z^2}+ (\rho-r+m)z \frac{\partial}{\partial z}.
\end{align}

By the dynamic programming principle, we expect that $\widehat{v}$ identifies with a suitable solution $\widehat{w}$ to the Hamilton-Jaccobi-Bellman (HJB) equation
\begin{align}\label{hjb1}
\max \{ [\mathbb{L}_Z-(\rho+m)]\widehat{w}+hz-m\,u(lB), -\widehat{w}  \} =0, \quad z>0.
\end{align}

In the following, we will use a classical ``guess and verify'' approach to provide explicit characterizations for both the value function $\widehat{v}(z)$ and the optimal purchasing time. Given the fact that $\widehat{v}$ is nondecreasing by Proposition \ref{monotonicity} and there exists $b$ separating $\mathcal{C}$ and $\mathcal{R}$ by Lemma \ref{lemmab}, we transform (\ref{hjb1}) into the free boundary problem 
\begin{equation}\label{guess}
	\left \{
	\begin{aligned}
		\Big(\mathbb{L}_Z-(\rho+m)\Big)\widehat{w}(z)=-hz+m\,u(lB),\quad \forall &z\in (b,+\infty), \\
		\widehat{w}(z)=0, \quad \forall &z\in(0,b].
			\end{aligned}
	\right.
\end{equation}
Solving (\ref{guess}), we have 
\begin{equation}\label{solution}
	\widehat{w}(z)=\left \{
	\begin{aligned}
		C_1 z^{\alpha_1}+C_2 z^{\alpha_2}+\frac{h}{r}z-\frac{m\,u(lB)}{\rho+m},\quad \forall &z\in (b,+\infty), \\
		0, \quad \forall &z\in(0,b],
			\end{aligned}
	\right.
\end{equation}
where $C_1$ and $C_2$ are undetermined constants and  $\alpha_1<0< 1<\alpha_2$ are the real roots of the algebraic equation
\begin{align}\label{root}
\frac{1}{2}\theta^2 \alpha^2+(\rho-r+m-\frac{1}{2}\theta^2)\alpha-(\rho+m)=0.
\end{align}
To specify the parameters $C_1,C_2$ and $b$, we first observe that 
since $\widehat{v}$  diverges at most linearly by Proposition \ref{finite} and $\alpha_2>1$, we thus set $C_2=0$. Then we appeal to the so-called ``smooth fit principle'', which dictate that the candidate value function $\widehat{w}(z)$ should be $C^1$ in $b$. These conditions give rise to the system of equations
\begin{equation}\label{system}
	\left\{
	\begin{aligned}
		C_1 b^{\alpha_1} +\frac{h b}{r}-\frac{m\,u(lB)}{\rho+m}=0,\\
C_1 \alpha_1 b^{\alpha_1-1}+\frac{h}{r}=0.	\end{aligned}
	\right.
\end{equation}
Solving (\ref{system}) one gets 
\begin{align}\label{b}
b= \frac{m\, u(lB)r \alpha_1}{(\rho+m)h(\alpha_1-1)} \quad \text{and} \quad C_1=-\frac{h}{r \alpha_1}\Big( \frac{m\, u(lB)r \alpha_1}{(\rho+m)h(\alpha_1-1)}\Big)^{1-\alpha_1}. 
\end{align}
Therefore, together with (\ref{solution}) and (\ref{b}), we conclude that 
\begin{equation}\label{vz}
\widehat{w}(z)= \left\{
\begin{aligned}
-\frac{h}{r \alpha_1}\Big( \frac{m\, u(lB)r \alpha_1}{(\rho+m)h(\alpha_1-1)}\Big)^{1-\alpha_1} z^{\alpha_1}+ \frac{h}{r}z-\frac{m\,u(lB)}{\rho+m}, \quad \text{if} \ & z>b,\\
0, \quad \text{if} \ & 0<z\leq b.
\end{aligned}
\right.
\end{equation}

\begin{theorem}\label{verification}
{Suppose that $\rho+m>r$.} The function $\widehat{w}$ given by (\ref{vz}) identifies with the dual value function $\widehat{v}$ in (\ref{hatjz}). Moreover, the optimal stopping time takes the form 
\begin{align}\label{etab}
	\eta^*=\eta^*(z;b):=\inf \{t \geq 0: Z_t^z \leq b\},
\end{align}
with $b$ as in (\ref{b}). 
\end{theorem}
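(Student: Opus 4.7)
The plan is to run a classical verification argument: establish that $\widehat{w}$ solves the Hamilton–Jacobi–Bellman variational inequality associated with (\ref{hatjz}), dominates every admissible reward, and attains the supremum along $\eta^*$.

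First I would record the regularity of $\widehat{w}$. By construction it is analytic on $(0,b)$ and on $(b,\infty)$, and the smooth-fit system (\ref{system}) forces $\widehat{w}$ and $\widehat{w}'$ to agree at $b$; hence $\widehat{w}\in C^{1}(\mathbb{R}_+)\cap C^{2}(\mathbb{R}_+\setminus\{b\})$, which suffices for It\^o's formula in standard form (the single exceptional point $b$ is Lebesgue-null for the occupation measure of $Z$). Next I would verify
\begin{equation*}
\max\bigl\{(\mathbb{L}_Z-(\rho+m))\widehat{w}(z)+hz-m\,u(lB),\,-\widehat{w}(z)\bigr\}=0,\quad z>0.
\end{equation*}
On $(b,\infty)$ the first entry vanishes by construction, while $\widehat{w}''>0$ there together with $\widehat{w}(b)=\widehat{w}'(b)=0$ yields $\widehat{w}(z)>0$. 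On $(0,b]$, where $\widehat{w}\equiv 0$, it remains to show $hz\leq m\,u(lB)$, i.e.\ $b\leq m\,u(lB)/h$. Substituting the explicit form (\ref{b}) reduces this to $\alpha_1(\rho+m-r)\leq \rho+m$, which holds precisely because $\alpha_1<0$ and $\rho+m>r$.

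Given the variational inequality, I would apply It\^o's formula to $\{e^{-(\rho+m)t}\widehat{w}(Z_t^{z})\}$ on $[0,\eta\wedge T]$ for an arbitrary $\eta\in\mathcal{S}$ and $T>0$. Bounding the drift by $-hZ_s+m\,u(lB)$ and upgrading the local martingale term to a true martingale via a standard localization (enabled by the linear-growth bound $\widehat{w}(z)\leq hz/r$ from Proposition \ref{finite} and the explicit lognormal moments of $Z$), I would take expectations, let $T\to\infty$, and kill the boundary term using the transversality estimate $\mathbb{E}_{z}[e^{-(\rho+m)T}\widehat{w}(Z_T)]\leq \tfrac{h}{r}\,z\,e^{-rT}\to 0$. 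This produces $\widehat{w}(z)\geq \widehat{v}(z)$. For the reverse inequality I would plug $\eta=\eta^*$ into the same decomposition: on $[0,\eta^*)$ the HJB holds with equality, $\widehat{w}(Z_{\eta^*})=0$ on $\{\eta^*<\infty\}$, and the transversality estimate annihilates the boundary term on $\{\eta^*=\infty\}$. Together these give $\widehat{w}(z)\leq \widehat{v}(z)$ and optimality of the stopping rule (\ref{etab}).

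The main obstacle is the algebraic sign condition on the stopping region: it is not automatic from smooth-fit and is exactly where the hypothesis $\rho+m>r$ enters, certifying that the candidate free boundary (\ref{b}) lies within the region where the obstacle dominates. A secondary but routine technical point is promoting the It\^o local martingale to a true martingale and controlling $\mathbb{E}_z[e^{-(\rho+m)T}\widehat{w}(Z_T)]$ as $T\to\infty$; both follow from the linear growth of $\widehat{w}$ together with the closed-form distribution of $Z$.
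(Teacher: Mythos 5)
Your proposal is correct and follows essentially the same two-step verification as the paper: first check that $\widehat{w}$ solves the HJB variational inequality (with the hypothesis $\rho+m>r$ entering exactly where you place it, to certify $hz\le m\,u(lB)$ on the stopping region $(0,b]$; your reduction to $\alpha_1(\rho+m-r)\le\rho+m$ is equivalent to the paper's check that $F(b)=hb-m\,u(lB)<0$), and then run It\^o's formula with localization, the only cosmetic difference being that the paper drops the terminal term via nonnegativity of $\widehat{w}$ and Fatou's lemma where you use an explicit transversality estimate. One small repair: the linear-growth bound $\widehat{w}(z)\le \frac{h}{r}z$ must be read off directly from (\ref{vz}) — e.g.\ via $0<C_1z^{\alpha_1}\le C_1b^{\alpha_1}=\frac{m\,u(lB)}{\rho+m}-\frac{hb}{r}$ for $z\ge b$, using the first equation of (\ref{system}) — rather than quoted from Proposition \ref{finite}, which concerns $\widehat{v}$ and is not available for $\widehat{w}$ before the identification $\widehat{w}=\widehat{v}$ has been proved.
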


\begin{proof}
	The proof is given in Appendix \ref{proofverification}.
\end{proof}
We interpret $Z_t$ to be the agent's shadow price process for the optimization problem. Economically speaking, the $Z$ process and the wealth process $X$ are closely connected. The optimal timing problem so far has been effectively addressed by examining the dual problem associated with the $Z$ process. Specifically, life insurance is recommended for purchase when the shadow price falls below the boundary $b$. This recommendation aligns with the notion that a lower shadow price corresponds to a healthier economy. Therefore, acquiring life insurance is advisable in times when the economy is deemed sufficiently robust.   

Due to Theorem \ref{verification}, (\ref{4-3}) and (\ref{vz}) we then have the following immediate corollary. 
\begin{corollary}\label{corollary}
 The function $v \in C^{2,2}(\mathcal{C}) \cap C^{1,2}(\mathbb{R}^2_+)$ and satisfies the following equation
\begin{align*}
0 = \max \Big\{  \widehat{Q}-\frac{h}{r}z-v, \frac{1}{2}\theta^2 z^2 v_{zz}+(\rho-r+m)zv_z+\frac{1}{2}\sigma_y^2 y^2v_{yy}+ \mu_y yv_y- \theta  \sigma_y zyv_{zy}\nonumber \\+\widehat{u}(z)+zy-(\rho+m)v      \Big  \}.
\end{align*}

\end{corollary}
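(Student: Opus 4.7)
The whole statement is a direct consequence of combining Theorem \ref{verification} and Proposition \ref{pro3.1} through the decomposition $v(z,y) = \widehat{v}(z) + \widehat{Q}(z,y) - \frac{h}{r}z$, which comes from the defining identity (\ref{4-3}). My plan is to first transfer regularity and then verify each branch of the variational inequality on $\mathcal{C}$ and $\mathcal{R}$ separately.

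For regularity, I would note that $\widehat{Q} \in C^{2,2}(\mathbb{R}^2_+)$ by Proposition \ref{pro3.1}, the term $\frac{h}{r}z$ is $C^\infty$, and $\widehat{v}$ from (\ref{vz}) is smooth on $(0,b)$ and on $(b,\infty)$, while the smooth-fit conditions (\ref{system}) were imposed precisely so that $\widehat{v} \in C^{1}(\mathbb{R}_+)$. Since $\widehat{v}$ does not depend on $y$, its regularity in $y$ is trivial. Hence $v$ inherits $C^{1,2}$ regularity on the whole of $\mathbb{R}^2_+$ and $C^{2,2}$ regularity on $\mathcal{C} = (b,\infty) \times \mathbb{R}_+$.

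For the variational inequality, I would compute $\mathcal{L}v$ using linearity: since $\mathcal{L}\widehat{v}(z) = \mathbb{L}_Z \widehat{v}(z) - (\rho+m)\widehat{v}(z)$ (because $\widehat{v}$ is independent of $y$), and using $\mathcal{L}\widehat{Q} = -\widehat{u} - m\,u(lB) - zy$ from (\ref{LW}) together with the direct calculation $\mathcal{L}(\frac{h}{r}z) = -hz$, one obtains
\begin{equation*}
\mathcal{L}v + \widehat{u}(z) + zy = \bigl[\mathbb{L}_Z \widehat{v}(z) - (\rho+m)\widehat{v}(z) + hz - m\,u(lB)\bigr].
\end{equation*}
On the continuation region $\mathcal{C}$, the bracket vanishes by the ODE in (\ref{guess}), while $\widehat{v}(z) > 0$ yields $\widehat{Q} - \frac{h}{r}z - v < 0$. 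On the stopping region $\mathcal{R} = (0,b] \times \mathbb{R}_+$ one has $\widehat{v}(z) = 0$, hence $\widehat{Q} - \frac{h}{r}z - v = 0$, and the bracket reduces to $hz - m\,u(lB)$.

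The only non-trivial point is therefore showing $hz - m\,u(lB) \leq 0$ for $z \leq b$, i.e.\ $b \leq m\,u(lB)/h$. Using the explicit expression (\ref{b}), this amounts to $\frac{r\alpha_1}{(\rho+m)(\alpha_1-1)} \leq 1$; since $\alpha_1 < 0$ makes both sides of this fraction negative, rearranging gives the equivalent condition $\alpha_1 \leq \frac{\rho+m}{\rho+m-r}$, which holds automatically under the standing hypothesis $\rho+m > r$ (from Theorem \ref{verification}), because the right-hand side is positive whereas $\alpha_1$ is negative. This verifies the second branch in $\mathcal{R}$ and completes the argument; I expect this last sign check to be the only substantive step, as everything else is bookkeeping driven by the decomposition $v = \widehat{v} + \widehat{Q} - \frac{h}{r}z$.
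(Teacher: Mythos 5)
Your proof is correct and follows essentially the same route the paper intends: the corollary is stated there as an immediate consequence of Theorem \ref{verification}, the decomposition (\ref{4-3}) and the explicit form (\ref{vz}), which is exactly the bookkeeping you carry out (transfer of regularity plus the linearity computation $\mathcal{L}v+\widehat{u}+zy=\mathbb{L}_Z\widehat{v}-(\rho+m)\widehat{v}+hz-m\,u(lB)$). The one substantive step you isolate---the sign check $hz-m\,u(lB)\leq 0$ on $(0,b]$ under $\rho+m>r$---is precisely inequality (\ref{4-15}) already established in Step 1 of the paper's proof of Theorem \ref{verification} (there written as $F(b)<0$), so you could simply have cited the HJB equation (\ref{hjb1}) for $\widehat{v}$ rather than re-deriving that bound.
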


\subsection{Optimal strategies in terms of the primal variables}\label{sec3-3}
In the previous section, we studied the properties of the dual value function $v(z,y)$ and used $(z,y)$, where $z$ denotes dual state variable  
and $y$ denotes labour income, as the coordinate system for the study. In this section, we will establish the duality theorem and derive the optimal strategies in the original coordinate system $(x,y)$, where $x$ denotes the wealth of the agent.
\begin{pro}\label{convex}
The function $v$ in (\ref{jzy}) is strictly convex with respect to $z$. Moreover, $v_z(z,y)$ satisfies the following limiting behaviors: 
\begin{align*}
	\lim_{z\to 0+} - v_z(z,y)= +\infty \quad \text{and} \quad \lim_{z\to +\infty} - v_z(z,y)= -\frac{y}{\kappa}.
	\end{align*}
\end{pro}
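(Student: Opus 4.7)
The plan is to exploit the identity $v(z,y)=\widehat{v}(z)+\widehat{Q}(z,y)-\frac{h}{r}z$ that comes directly from the definition (\ref{4-3}), since $\widehat{v}$ is already known explicitly from (\ref{vz}) and $\widehat{Q}$ is given as an expectation in (\ref{W}). This decomposition reduces both claims to separate analyses of the two building blocks, avoiding any new optimal-stopping considerations.

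\textbf{Strict convexity.} First I would observe that, for the power utility, the Legendre transform $\widehat{u}(\zeta)=\frac{\gamma}{1-\gamma}\zeta^{1-1/\gamma}$ has $\widehat{u}''(\zeta)=\frac{1}{\gamma}\zeta^{-1/\gamma-1}>0$, hence is strictly convex on $\mathbb{R}_+$. Since $Z_s^z=zP_s$ with $P_s>0$ almost surely, the integrand in (\ref{W}) is strictly convex in $z$ pathwise (the $zP_sY_s$ and $m\,u(lB)$ pieces being linear or constant), and strict convexity is preserved by taking expectation; hence $\widehat{Q}(\cdot,y)$ is strictly convex. Next, from the explicit expression (\ref{vz}) one checks that $\widehat{v}\equiv 0$ on $(0,b]$ and that, on $(b,\infty)$, the coefficient of $z^{\alpha_1}$ is positive (because $\alpha_1<0$) while $\alpha_1(\alpha_1-1)>0$, making $\widehat{v}$ convex on all of $\mathbb{R}_+$. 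Adding the linear term $-\frac{h}{r}z$ does not destroy the strict convexity already inherited from $\widehat{Q}$.

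\textbf{Boundary behaviour of $v_z$.} Differentiating term by term gives $-v_z(z,y)=-\widehat{v}'(z)-\widehat{Q}_z(z,y)+\frac{h}{r}$. A dominated-convergence argument, of the kind already invoked in the proof of Proposition \ref{pro3.1}, allows me to interchange derivative and expectation in (\ref{W}) and obtain
\[
\widehat{Q}_z(z,y)=-z^{-1/\gamma}\mathbb{E}\!\left[\int_0^\infty e^{-(\rho+m)s}P_s^{1-1/\gamma}\,ds\right]+\frac{y}{\kappa},
\]
where the second term comes from $e^{-(\rho+m)s}P_sY_s^y=\xi_sY_s^y$ together with $g_0=y/\kappa$ in (\ref{gt}); the first expectation is finite under Assumption \ref{assume2}. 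For $z\to 0^+$ one lies in $(0,b]$, so $\widehat{v}'\equiv 0$ and $\widehat{Q}_z(z,y)\to-\infty$, yielding $-v_z\to+\infty$. For $z\to\infty$, differentiating (\ref{vz}) produces $\widehat{v}'(z)=\frac{h}{r}\bigl[1-(b/z)^{1-\alpha_1}\bigr]\to\frac{h}{r}$ because $1-\alpha_1>0$, while $\widehat{Q}_z(z,y)\to y/\kappa$; combining these, $-v_z(z,y)\to-\frac{h}{r}-\frac{y}{\kappa}+\frac{h}{r}=-\frac{y}{\kappa}$, as required.

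\textbf{Main obstacle.} The only non-routine point is justifying the interchange of differentiation and expectation in the representation of $\widehat{Q}_z$, given the singularity of $\widehat{u}'$ at zero. I would handle this by invoking the integrability bounds already established (under Assumption \ref{assume2}) in the proof of Proposition \ref{pro3.1}, rather than re-deriving them from scratch.
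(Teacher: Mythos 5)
Your proof is correct and follows essentially the same route as the paper: decompose $v$ via (\ref{4-3}), obtain strict convexity from strict convexity of $\widehat{Q}$ plus convexity of $\widehat{v}$, and read the boundary behaviour of $v_z$ off the explicit formulas. Two minor remarks: the paper gets convexity of $\widehat{v}$ more directly from the representation (\ref{hatjz}) as a supremum of functions that are affine in $z$ (since $Z_s = zP_s$), with no need for the explicit solution (\ref{vz}); and your flagged ``main obstacle'' --- interchanging $\partial_z$ and expectation in (\ref{W}) --- is moot, because Proposition \ref{pro3.1} already gives $\widehat{Q}$ in closed form, $\widehat{Q}(z,y)=\frac{\gamma}{(1-\gamma)K}\,z^{\frac{\gamma-1}{\gamma}}+\frac{zy}{\kappa}+\frac{m\,u(lB)}{\rho+m}$ (cf.\ (\ref{W2})), which can be differentiated directly.
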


\begin{proof}
    From (\ref{W2}), it is easy to check that $\widehat{Q}$ is strictly convex with respect to $z$. From (\ref{hatjz}), $\widehat{v}$ is convex with respect to $z$ since it is the supremum of linear functions. Therefore,  the function $v$ in (\ref{jzy}) is strictly convex with respect to $z$ due to (\ref{4-3}). Moreover, it is straightforward to obtain the above limiting behaviors from (\ref{4-3}), (\ref{vz}) and (\ref{W1}). 
\end{proof}

We now provide the duality theorem and optimal strategies.
\begin{theorem}\label{dualrelation2}
	For given $(x,y) \in \mathcal{O}$, the following duality relations hold:
\begin{align*}
V(x,y) = \inf_{z>0}[{v}(z,y)+zx], \quad v(z,y) = \sup_{x>-\frac{y}{\kappa}} [V(x,y)-zx].
\end{align*}
Moreover, optimal consumption process $c^*_t= (Z_t^*)^{-\frac{1}{\gamma}}$, optimal portfolio process $\pi^*_t= \frac{\theta v_{zz} Z_t^*-\sigma_y v_{zy} Y_t}{\sigma}$ and the corresponding optimal wealth process $X_t^*=-v_z(Z^*_t,Y_t)$, where $Z_t^*$ is the solution to Equation (\ref{Z}) with the initial condition $z^*$ satisfying  $x=-v_z(z^*,y)$.
  \end{theorem}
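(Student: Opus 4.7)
The plan is to follow the standard convex-duality scheme: obtain the weak inequality from the chain already established in \eqref{dual2}, then construct a candidate admissible strategy that turns every inequality into an equality, thereby proving the strong duality and identifying the optimal controls.

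\textbf{Step 1 (weak duality).} Inequality \eqref{dual2} already shows that for every $z>0$ and every $(c,\pi,\eta)\in\mathcal{A}(x,y)$,
\begin{equation*}
J_{x,y}(c,\pi,\eta)\;\le\; v(z,y)+zx.
\end{equation*}
Taking first the supremum over $(c,\pi,\eta)$ and then the infimum over $z>0$ gives $V(x,y)\le \inf_{z>0}\bigl[v(z,y)+zx\bigr]$.

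\textbf{Step 2 (selection of the optimal multiplier).} Fix $(x,y)\in\mathcal{O}$. By Proposition \ref{convex}, the map $z\mapsto -v_z(z,y)$ is continuous and strictly decreasing from $+\infty$ at $0+$ to $-y/\kappa$ at $+\infty$. Hence there exists a unique $z^{*}=z^{*}(x,y)>0$ solving $x=-v_z(z^{*},y)$. This $z^{*}$ is the first-order candidate minimiser in $\inf_{z>0}\bigl[v(z,y)+zx\bigr]$, and by strict convexity it realises that infimum.

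\textbf{Step 3 (construction of the candidate primal strategy).} Let $Z^{*}$ solve \eqref{Z} with $Z^{*}_{0}=z^{*}$, set
\begin{equation*}
X^{*}_{t}:=-v_z(Z^{*}_{t},Y_{t}),\qquad c^{*}_{t}:=(Z^{*}_{t})^{-1/\gamma},\qquad \pi^{*}_{t}:=\frac{\theta\,v_{zz}(Z^{*}_{t},Y_{t})Z^{*}_{t}-\sigma_{y}v_{zy}(Z^{*}_{t},Y_{t})Y_{t}}{\sigma},
\end{equation*}
and take $\eta^{*}=\inf\{t\ge 0:Z^{*}_{t}\le b\}$ with $b$ as in \eqref{b}. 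The identity $c^{*}_{t}=(Z^{*}_{t})^{-1/\gamma}$ is the one attaining $\widehat{u}(Z^{*}_{t})=u(c^{*}_{t})-Z^{*}_{t}c^{*}_{t}$. Applying It\^o's formula to $-v_z(Z^{*},Y)$, using the dynamics \eqref{Z} of $Z^{*}$ and \eqref{income} of $Y$, and invoking the HJB equation of Corollary \ref{corollary} (in its PDE form on the continuation region before $\eta^{*}$, with the boundary condition $v=\widehat{Q}-(h/r)z$ that characterises the transition to the post-purchase regime studied in Theorem \ref{dualrelation1}), the drift of $X^{*}$ collapses precisely to $\pi^{*}_{t}(\mu-r)+rX^{*}_{t}-c^{*}_{t}-h\mathds{1}_{\{t\ge\eta^{*}\}}+Y_{t}$, while the diffusion coefficient is exactly $\pi^{*}_{t}\sigma$. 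Thus $X^{*}$ satisfies \eqref{wealth} with the stated controls and starting point $X^{*}_{0}=x$.

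\textbf{Step 4 (admissibility and strong duality).} Check the three items in Definition \ref{admissiblecontrol}: progressive measurability and the integrability of $c^{*}$, $|\pi^{*}|^{2}$ follow from the explicit expressions; the borrowing constraint $X^{*}_{t}+g_{t}>(h/r)\mathds{1}_{\{t\ge\eta^{*}\}}$ is obtained by combining the post-purchase formula $X^{*}_{t}=(Z^{*}_{t})^{-1/\gamma}/K+h/r-Y_{t}/\kappa$ from Theorem \ref{dualrelation1} after $\eta^{*}$ with the analogous strict-positivity of $X^{*}+Y/\kappa$ before $\eta^{*}$, the latter being guaranteed by Proposition \ref{convex}. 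With $(c^{*},\pi^{*},\eta^{*})$ admissible, replaying \eqref{dual2} with $z=z^{*}$ and the optimal stopping rule $\eta^{*}$ of Theorem \ref{verification} turns the two inequalities there into equalities: the first because $c^{*}$ attains the pointwise supremum in $\widehat{u}$ and the budget constraint \eqref{static1}--\eqref{static2} is saturated along $X^{*}$, the second because $\eta^{*}$ is the optimiser in \eqref{jzy}. Consequently $J_{x,y}(c^{*},\pi^{*},\eta^{*})=v(z^{*},y)+z^{*}x$, which, combined with Step 1, yields $V(x,y)=\inf_{z>0}[v(z,y)+zx]$ and the optimality of $(c^{*},\pi^{*},\eta^{*})$. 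The reverse duality $v(z,y)=\sup_{x>-y/\kappa}[V(x,y)-zx]$ is then the biconjugate relation for the strictly convex function $v(\cdot,y)$.

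\textbf{Anticipated obstacle.} The delicate point is Step 3--4 across the free boundary $\{z=b\}$: one must justify the application of It\^o's formula to $v$ even though $v$ is only $C^{1,2}$ globally (and $C^{2,2}$ only on $\mathcal{C}$, by Corollary \ref{corollary}), and one must verify that the candidate wealth process stays strictly above the borrowing bound uniformly in time. The standard remedy is to use the smooth-fit at $b$ together with a local-time/It\^o--Tanaka argument, and to combine the pre- and post-$\eta^{*}$ dynamics using the strong Markov property, exactly as in \cite{karatzas2000utility}.
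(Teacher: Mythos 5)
Your proposal is correct in its overall skeleton and Steps 1--2 coincide with the paper's proof (weak duality read off from \eqref{dual2}, then selection of the unique multiplier $z^*$ via Proposition \ref{convex}). Where you genuinely diverge is in Steps 3--4. You take a \emph{verification} route: postulate the feedback formulas $X^*=-v_z(Z^*,Y)$, $c^*=(Z^*)^{-1/\gamma}$, $\pi^*=(\theta v_{zz}Z^*-\sigma_y v_{zy}Y)/\sigma$ up front, check via It\^o and the HJB equation that $X^*$ solves the wealth SDE, and then replay the duality chain. The paper goes the other way around: it first introduces only the candidate consumption $c^*_s=\mathcal{I}^u(z^*P_s)$, encodes the candidate wealth through the conditional-expectation functional $\widehat{\mathcal{X}}(z,y)$ (evaluated at the optimal stopping time $\eta^*(z)$ and at the post-purchase optimum $-\widehat{Q}_z$ from Theorem \ref{dualrelation1}), obtains the financing portfolio \emph{abstractly} from the martingale representation theorem (Lemma \ref{budget}), proves optimality by a sandwich of inequalities that forces $V(\widehat{\mathcal{X}}(z^*,y),y)=v(z^*,y)+z^*\widehat{\mathcal{X}}(z^*,y)$ and then identifies $\widehat{\mathcal{X}}(z^*,y)=x$, and only at the very end applies It\^o --- strictly inside the continuation region, where $v$ is explicit and smooth --- to read off the feedback form of $\pi^*$. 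What the paper's route buys is exactly the two things you flag as obstacles: (i) no It\^o--Tanaka/local-time argument at $z=b$ is ever needed, because the primal process is only tracked up to $\eta^*$ and the post-purchase regime is glued on by the strong Markov property via Theorem \ref{dualrelation1}; and (ii) the budget constraint is saturated \emph{by construction}, since the wealth is defined as a conditional expectation of future deflated consumption minus income.

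Point (ii) is where your write-up has a concrete hole: in Step 4 you assert that ``the budget constraint \eqref{static1}--\eqref{static2} is saturated along $X^*$,'' but with your construction this is not automatic. Defining $X^*:=-v_z(Z^*,Y)$ and verifying the SDE only shows that the deflated process $\xi_{t\wedge\eta^*}X^*_{t\wedge\eta^*}+\int_0^{t\wedge\eta^*}\xi_s(c^*_s-Y_s)\,ds$ is a \emph{local} martingale; equality in the budget constraint requires it to be a true (uniformly integrable) martingale up to $\eta^*$, and this needs a separate argument (e.g.\ explicit moment bounds from the closed form of $v_z$ in the continuation region, or falling back on the paper's device: construct $\pi^*$ by martingale representation as in Lemma \ref{budget} and invoke Theorem 3.6.3 of \cite{karatzas1998methods} or Lemma 6.2 of \cite{karatzas2000utility}). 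Once that step is supplied, the rest of your argument --- attainment in $\widehat{u}$, value matching and smooth fit at $b$ ensuring $X^*_{\eta^*}=-\widehat{Q}_z(b,Y_{\eta^*})+h/r$ attains the post-purchase duality, and optimality of $\eta^*$ from Theorem \ref{verification} --- goes through and delivers the same conclusion as the paper.
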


\begin{proof}
The proof is given in Appendix \ref{proofdualrelation2}.
\end{proof}



From Theorem \ref{dualrelation2}, for any $(x,y) \in \mathcal{O}$, we know that $V(x,y) = \inf_{z>0}[{v}(z,y)+z x]$. Since $z \mapsto {v}(z,y)+zx$ is strictly convex (cf.\ Proposition \ref{convex}), then there exists a unique solution $z^*(x,y)>0$ such that 
\begin{align}\label{5-1}
V(x,y)=v(z^*(x,y),y)+z^*(x,y)x,
\end{align} 
where $z^*(x,y):=\mathcal{I}^{v}(-x,y)$ and $\mathcal{I}^v$ is the inverse function of $v_z$.  Moreover, $z^* \in C(\mathcal{O})$ due to $v \in C^{2,2}(\mathcal{C}) \cap C^{1,2}(\mathbb{R}^2_+)$ from Corollary \ref{corollary} and $z^*(x,y)$ is strictly decreasing with respect to $x$, which is a bijection form. Hence, for any $y\in \mathbb{R}_+$, $z^*(\cdot,y)$ has an inverse function $x^*(\cdot,y)$, which is continuous, strictly decreasing, and maps $\mathbb{R}_+$ to $(-\frac{y}{\kappa},\infty)$.

Let us now define
\begin{equation}\label{5-3}
\left\{
\begin{aligned}
{\widehat{b}(y)}&:=x^*(b,y),\\
\widehat{\mathcal{C}}&:=\{(x,y) \in \mathcal{O}: z^*(x,y) \in \mathcal{C} \},\\
\widehat{\mathcal{R}} &:=\{(x,y) \in  \mathcal{O}: z^*(x,y) \in \mathcal{R} \}.
\end{aligned}
\right.
\end{equation}

Then, by Lemma \ref{lemmab} we have 

\begin{align}\label{5-4}
\widehat{\mathcal{C}}=\{(x,y) \in \mathcal{O}: -\frac{y}{\kappa}<x <{\widehat{b}(y)} \}, \quad \widehat{\mathcal{R}} =\{(x,y) \in \mathcal{O}:  x\geq {\widehat{b}(y)} \}.
\end{align}

%



We now state the explicit expressions of the value function and optimal policies in terms of the primal variables. 
\begin{theorem}\label{optimal}
The value function $V$ in (\ref{vxy}) is given by 
\begin{equation*}
V(x,y)= \left\{
\begin{aligned}
C_1  (z^*)^{\alpha_1}(x,y)+\frac{\gamma (z^*)^{\frac{\gamma-1}{\gamma}}(x,y)}{(1-\gamma)K} + (\frac{y}{\kappa}+x)z^*(x,y), \quad \text{if} \ &-\frac{y}{\kappa}< x< {\widehat{b}(y)},\\
\frac{(x-\frac{h}{r}+\frac{y}{\kappa})^{1-\gamma}K^{-\gamma}}{1-\gamma} + \frac{m\,u(lB)}{\rho+m},\quad \text{if} \ & x \geq {\widehat{b}(y)}.
\end{aligned}
\right.
\end{equation*}
The optimal policies are $(c^*_t,\pi^*_t, \eta^*)$,  with $c_t^*=c^*(X^*_t,Y_t), \pi^*_t=\pi^*(X^*_t,Y_t)$, 
where the feedback rules $c^*$ and $\pi^*$ are such that
\begin{equation*}
c^*(x,y):= \left\{
\begin{aligned}
(z^*)^{-\frac{1}{\gamma}}(x,y), \quad \text{if} \ & -\frac{y}{\kappa}<x< {\widehat{b}(y)},\\
K(x-\frac{h}{r}+\frac{y}{\kappa}), \quad \text{if} \ & x \geq {\widehat{b}(y)},
\end{aligned}
\right.
\end{equation*}
and
\begin{equation*}
\pi^*(x,y):= \left\{
\begin{aligned}
\frac{\theta \Big[C_1 \alpha_1(\alpha_1-1)(z^*)^{\alpha_1-1}(x,y)+\frac{(z^*)^{-\frac{1}{\gamma}}(x,y)}{K \gamma}\Big]- \frac{\sigma_y y}{\kappa}}{\sigma}, \quad \text{if} \ & -\frac{y}{\kappa}<x< {\widehat{b}(y)},\\
\frac{\theta  (x-\frac{h}{r}+\frac{y}{\kappa}) \frac{1}{ \gamma}-  \frac{\sigma_y y}{\kappa}}{\sigma}, \quad \text{if} \ & x \geq {\widehat{b}(y)},
\end{aligned}
\right.
\end{equation*}
with $z^*{(x,y)}$ satisfying 
\begin{align*}
 C_1 \alpha_1 (z^*{(x,y)})^{\alpha_1-1}-(z^*{(x,y)})^{-\frac{1}{\gamma}}\frac{1}{K}+\frac{y}{\kappa}+x=0,
\end{align*}
for $C_1$ given in (\ref{b}). 
Furthermore,
\begin{align*}
	\eta^*(x,y)=\inf \{t\geq 0:X^{*x}_t \geq {\widehat{b}(Y_t^y)}\},
\end{align*}
with ${\widehat{b}(y)}=\frac{b^{-\frac{1}{\gamma}}}{K}-\frac{y}{\kappa}+\frac{h}{r}$, and the optimal wealth process $X^*$ is such that $X^{*}_t=-v_z(Z^*_t,Y_t)$, where $Z^*_t$ is the solution to Equation (\ref{Z}) with the initial condition $Z_0=z^*$, and $z^*:=z^*(x,y)$ is the solution to the equation $v_z(z,y)+x=0$, with $x$ being the initial wealth at time $0$, and
\begin{equation*}
v_{z}(z,y)= \left\{
\begin{aligned}
C_1 \alpha_1 z^{\alpha_1-1}-\frac{z^{-\frac{1}{\gamma}}}{K}+\frac{y}{\kappa}, \quad \text{if} \ & z>b,\\
 -\frac{z^{-\frac{1}{\gamma}}}{K}+\frac{y}{\kappa}-\frac{h}{r},  \quad \text{if} \ & 0<z\leq b.
\end{aligned}
\right.
\end{equation*}
\end{theorem}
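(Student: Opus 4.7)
The plan is to unfold Theorem \ref{dualrelation2}—which expresses $V$, $c^*$, $\pi^*$ as Legendre transforms of the dual value function $v$—by substituting in the explicit shape of $\widehat{v}$ already obtained in (\ref{vz}). The main preparatory step is to derive a closed form for $v_z(z,y)$ (the last display of the theorem). Using the decomposition (\ref{4-3}), $v(z,y)=\widehat{v}(z)+\widehat{Q}(z,y)-\frac{h}{r}z$, so $v_z=\widehat{v}_z+\widehat{Q}_z-\frac{h}{r}$. From Theorem \ref{dualrelation1}, the optimal post-insurance wealth reads $x=-\widehat{Q}_z(z,y)+\frac{h}{r}=\frac{z^{-1/\gamma}}{K}+\frac{h}{r}-\frac{y}{\kappa}$, which pins down $\widehat{Q}_z(z,y)=-\frac{z^{-1/\gamma}}{K}+\frac{y}{\kappa}$. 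Combining this with $\widehat{v}_z$ read off from (\ref{vz}) (namely $C_1\alpha_1 z^{\alpha_1-1}+\frac{h}{r}$ for $z>b$, and $0$ on $(0,b]$) yields precisely the two-branch formula for $v_z$.

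Next I would identify the threshold and the two regions in the primal state. The boundary $\widehat{b}(y)$ is obtained by solving $v_z(b,y)+x=0$ in the stopping branch, giving $\widehat{b}(y)=\frac{b^{-1/\gamma}}{K}-\frac{y}{\kappa}+\frac{h}{r}$. Strict convexity of $v$ in $z$ (Proposition \ref{convex}) guarantees the inverse $z^*(x,y)$ is well-defined and strictly decreasing in $x$; consequently $\{0<z^*\leq b\}=\{x\geq\widehat{b}(y)\}$ and $\{z^*>b\}=\{-y/\kappa<x<\widehat{b}(y)\}$, matching (\ref{5-4}). In the stopping region, inverting the simpler branch of $v_z$ explicitly gives $(z^*)^{-1/\gamma}=K(x+\frac{y}{\kappa}-\frac{h}{r})$; in the continuation region, $z^*$ is defined implicitly by the equation stated in the theorem.

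Finally I would evaluate $V(x,y)=v(z^*,y)+z^*x$ in each region. On the stopping region, $\widehat{v}(z^*)=0$ so $V=\widehat{Q}(z^*,y)-\frac{h}{r}z^*+z^*x$; integrating $\widehat{Q}_z$ yields $\widehat{Q}(z,y)=\frac{\gamma z^{(\gamma-1)/\gamma}}{(1-\gamma)K}+\frac{yz}{\kappa}+\frac{m\,u(lB)}{\rho+m}$ (the $y$-independent constant is fixed by letting $z\to\infty$ in the post-insurance dual identity), and substituting $(z^*)^{-1/\gamma}=K(x+y/\kappa-h/r)$ collapses the expression to the Merton form $\frac{(x-h/r+y/\kappa)^{1-\gamma}K^{-\gamma}}{1-\gamma}+\frac{m\,u(lB)}{\rho+m}$. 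On the continuation region, the two branches of $v$ combine into the formula displayed. The feedback consumption $c^*_t=(Z^*_t)^{-1/\gamma}$ from Theorem \ref{dualrelation2} translates immediately to $(z^*(x,y))^{-1/\gamma}$, which reduces to $K(x+y/\kappa-h/r)$ on the stopping region. The portfolio formula comes from substituting the explicit $v_{zz}$ and $v_{zy}=1/\kappa$ into $\pi^*=(\theta v_{zz}Z^*-\sigma_y v_{zy}Y)/\sigma$. The optimal stopping rule is inherited from $\eta^*=\inf\{t\geq 0:Z_t\leq b\}$ in (\ref{etab}): strict monotonicity of $z\mapsto -v_z(z,y)$ and continuity of $\widehat{b}$ give $\{Z^*_t\leq b\}=\{X^*_t\geq\widehat{b}(Y_t)\}$, producing the stated feedback stopping time.

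The main obstacle I foresee is bookkeeping rather than analysis: making sure the expressions for $\widehat{Q}$, $v$, $v_z$, and $\widehat{b}$ are consistent across the two regions, that the smooth-fit at $z=b$ (already built into (\ref{vz})) translates into the correct matching of $V$ at $x=\widehat{b}(y)$, and that the sign conventions in $z^*(x,y)=\mathcal{I}^v(-x,y)$ are handled correctly when inverting $v_z$. Once the identification $\widehat{Q}_z(z,y)=-z^{-1/\gamma}/K+y/\kappa$ is fixed from Theorem \ref{dualrelation1}, everything else is algebraic substitution into Theorem \ref{dualrelation2}.
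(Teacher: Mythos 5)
Your proposal is correct and follows essentially the same route as the paper's own proof: unfold the duality of Theorem \ref{dualrelation2} using the decomposition (\ref{4-3}) together with the explicit dual value function (\ref{vz}), write out $v$, $v_z$, $v_{zz}$, $v_{zy}$, invert $v_z$ region by region to get $z^*(x,y)$ and $\widehat{b}(y)$, and substitute back. The only cosmetic difference is that you recover $\widehat{Q}_z$ (and then $\widehat{Q}$ via an integration-constant argument) from the optimal-wealth identity in Theorem \ref{dualrelation1}, whereas the paper simply differentiates the closed form (\ref{W2}); both yield identical expressions.
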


\begin{proof}
	The proof is given in Appendix \ref{proofoptimal}.
\end{proof}

\subsection{Numerical illustrations}\label{sec3num}
In this section, we provide numerical illustrations of the optimal strategies and of the value functions derived in Theorem \ref{optimal}. Moreover, we investigate the sensitivities of the optimal purchasing wealth threshold on relevant parameters. The numerics was performed using Mathematica 13.1. 

\subsubsection{Parameters}
We fix the parameters for the financial market and labor income process similarly to \cite{dybvig2010lifetime}: risky-asset return $\mu=5 \%$, risky-asset volatility $ \sigma=22 \% $, risk-free interest rate $r=1 \%,$ {initial labor income} $y=1$, income mean growth rate $\mu_y=1 \%$, income volatility $\sigma_y=10\%$, {weight parameter $l=0.5$}.  For the individual, we assume that she is currently $x_0=25$ years old and decides for her optimal purchasing time for life insurance. Following \cite{chen2021retirement} we take for the individual's force of mortality $m=0.0175$. For the preferences, we assume that the subjective discount rate $\rho$ is equal to $r$, and $\gamma=0.8$. These basic parameters are collected in Table \ref{tab1}. It is worth noting that all the parameters we used in numerical examples satisfy the Assumption \ref{assume2}. 
\begin{table}[htbp]
\caption{Basic parameters set in the numerical illustrations }
\label{tab1}
\centering
	\begin{tabular}{cccccccccc}
		\toprule  
		 $\mu$	& $\sigma$ &$r$ & $\rho$   &$\gamma$ & 
		  $\mu_y$ & $\sigma_y$ &$l$ & $m$ & $B$ \\ 
		  \midrule
	0.05	&0.22 &0.01 &0.01& 0.8 &0.01&0.1 & 0.5 &0.0175&5\\
	
		\bottomrule
	\end{tabular}
\end{table}




\subsubsection{Sensitivity analysis of the optimal boundary}
In this section, we study the sensitivity of the optimal purchasing wealth threshold  with respect to model's parameters and provide the consequent economic implications. 

In Figure \ref{p5} we can observe the sensitivity of the optimal purchasing wealth threshold with respect to the initial labor income. Since an increase in $y$ implies higher human capital (the present value of future labor income), the agent is more likely to buy life insurance earlier. Figure \ref{p6} shows that if the predetermined bequest amount $B$ is larger, the agent delays her decision to purchase life insurance. This is an intuitive result, since a larger $B$ means a larger premium $h= mB$.  
\begin{figure}[htbp]
		\setlength{\abovecaptionskip}{0pt}
	\setlength{\belowcaptionskip}{5pt}
	\begin{minipage}[b]{0.47\textwidth}
		\centering
		\includegraphics[width=\textwidth]{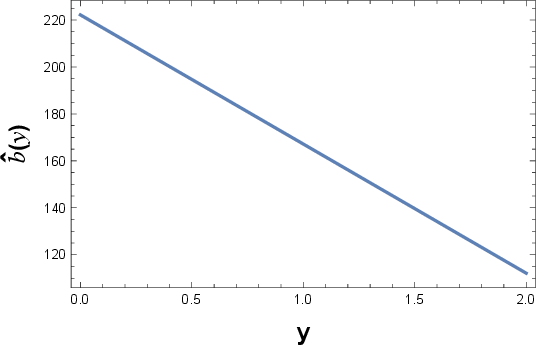}
		\caption{Values of ${\widehat{b}(y)}$ when varying $y$}
		\label{p5}
	\end{minipage}
	\hfill
		\begin{minipage}[b]{0.47\textwidth}
		\centering
		\includegraphics[width=\textwidth]{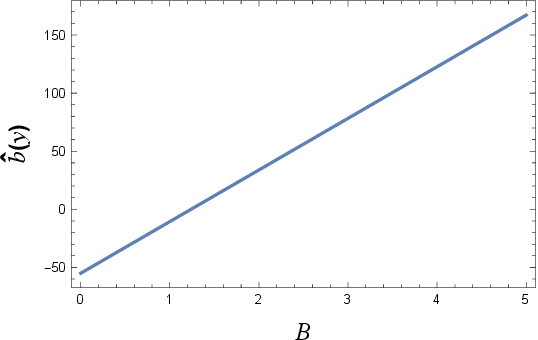}
		\caption{Values of ${\widehat{b}(y)}$ when varying $B$ at $y=1$}
		\label{p6}
	\end{minipage}
\end{figure}


Figure \ref{p7} shows that if the risk aversion level $\gamma$ is larger, the agent is more likely to buy life insurance earlier.  As a matter of fact, the incentive of life insurance is to reduce the longevity risk and bequest motive, thus agents who are more risk averse are more willing to buy insurance earlier. In particular, when the risk aversion level $\gamma$ goes to $1$, the optimal purchasing boundary converges to $\frac{h}{r}-\frac{y}{\kappa}$ (the dotted line). It implies that the agent should buy life insurance immediately, which is consistent to the result in Theorem \ref{buy}. Figure \ref{p8} shows the effect of a change in $l$ on the boundary ${\widehat{b}(y)}$. It is clear that if $l$ is larger, the agent assigns higher utility value to bequest, so that the agent will choose buy life insurance earlier.
\begin{figure}[htbp]
		\setlength{\abovecaptionskip}{0pt}
	\setlength{\belowcaptionskip}{5pt}
	\begin{minipage}[b]{0.47\textwidth}
		\centering
		\includegraphics[width=\textwidth]{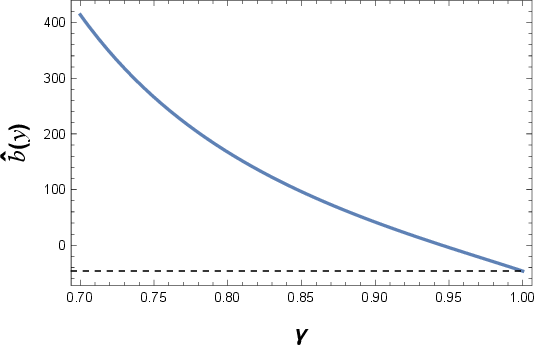}
		\caption{Values of ${\widehat{b}(y)}$ when varying $\gamma$ at $y=1$}
		\label{p7}
	\end{minipage}
	\hfill
		\begin{minipage}[b]{0.47\textwidth}
		\centering
		\includegraphics[width=\textwidth]{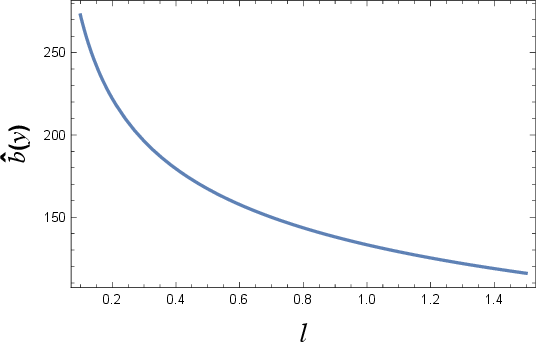}
		\caption{Values of ${\widehat{b}(y)}$ when varying $l$ at $y=1$}
		\label{p8}
	\end{minipage}
\end{figure}

\subsubsection{Optimal investment and consumption strategies}
In this section, we illustrate the ratios of optimal strategies at time $0$ to initial wealth. Figure \ref{p3} shows a jump in the ratio of $\pi^*(x,y)/x$ in correspondence to the critical wealth level ${\widehat{b}(y)}$. Actually, this effect shares similarities to the so-called ``saving for retirement", where the optimal portfolio has a jump down at retirement (cf.\ \cite{dybvig2010lifetime}).

Figure $\ref{p4}$ shows the ratio of optimal consumption at time $0$ to initial wealth. Interestingly, unlike the optimal investment strategy, the ratio of optimal consumption to wealth is smooth and there is no jump at the purchasing boundary because the marginal utility per unit of consumption does not change after buying life insurance. 

\begin{figure}[htbp]
		\setlength{\abovecaptionskip}{0pt}
	\setlength{\belowcaptionskip}{5pt}
	\begin{minipage}[b]{0.47\textwidth}
		\centering
		\includegraphics[width=\textwidth]{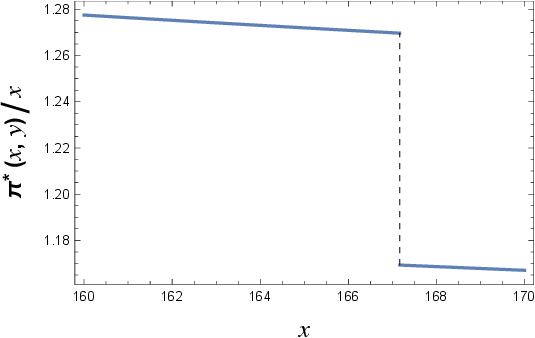}
		\caption{Optimal portfolio feedback strategy at $y=1$}
		\label{p3}
	\end{minipage}
	\hfill
		\begin{minipage}[b]{0.47\textwidth}
		\centering
		\includegraphics[width=\textwidth]{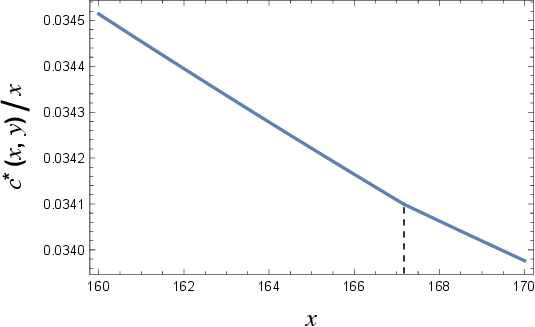}
		\caption{Optimal consumption feedback plan at $y=1$}
		\label{p4}
	\end{minipage}
\end{figure}

\section{A controlled bequest amount} \label{sec:conbeq}

\subsection{Problem formulation}
In this section, we assume the agent can choose how much money she plans to bequeath at death, i.e., $\{B_t, t \geq 0\}$ is an $\mathcal{F}_t$-adapted control variable. Once purchased, the amount of bequest will be fixed. Correspondingly, when the agent chooses a target amount $B_t$ at purchasing time $t$, the premium $h_t$ in (\ref{premium}) is such that $h_t=mB_t$. 

From (\ref{wealth}), the agent's wealth $X^B:= \{ X^{c,\pi, B,\eta}_t, t\geq 0 \}$ now evolves as 
\begin{align}\label{stateeq}
d X^{c,\pi, B,\eta}_t =  [\pi_t(\mu-r)+rX^{c,\pi,B,\eta}_t-c_t-mB_t \mathds{1}_{\{t \geq \eta \}}+Y_t]dt +\pi_t \sigma dW_t, \quad X_0^{c,\pi, B,\eta}= x.
\end{align}
In the following, we shall simply write $X^B$ to denote $X^{c,\pi,B,\eta}$, where needed. Then, the set of admissible strategies $\mathcal{A}^B(x,y)$  is as follows. 

\begin{definition}\label{admissiblecontrol2}
Let $(x,y) \in \mathcal{O}$ be given and fixed. The triplet of choices $(c,\pi, B, \eta)$ is called an \textbf{admissible strategy} for $(x,y)$, and we write $(c,\pi, B,\eta) \in \mathcal{A}^B(x,y)$, if it satisfies the following conditions: 
\begin{enumerate}[label=(\roman*)]
	\item  $c$ and $\pi$ are progressively measurable with respect to $\mathbb{F}$, $\eta \in \mathcal{S}$;
	\item $c_t \geq 0$ for all $t \geq 0$ and $\int_0^t(c_s+|\pi_s|^2)ds <\infty$ for all $t\geq 0$ $ \ \mathbb{P}$-a.s.; 
\item $B_t  =0$ for all $0\leq t <\eta$ and $B_t=B_\eta \geq 0$ for any $t \geq \eta$, where $B_\eta$ is an $\mathcal{F}_\eta$-measurable random variable; 
	\item $X^{c,\pi,B, \eta}_t +g_t > \frac{m B_t}{r}\mathds{1}_{\{t\geq  \eta\}}$ for all $t\geq 0$, where $g_t$ is defined in (\ref{gt}).
\end{enumerate}

\end{definition}
The term $\frac{m B_t}{r}:= \mathbb{E} [\int^{\infty}_t \frac{\xi_s}{\xi_t} m B_t ds |\mathcal{F}_t ]$ in Condition (iv) is the present value of the future premium payment of the agent, under the assumption that the agent is always alive. 

Similar to (\ref{vxy}), and from (\ref{objective2}) the agent aims at determining
\begin{align}\label{6-1}
V^B(x,y):=\sup_{(c, \pi,B,\eta) \in \mathcal{A}^B(x,y) }\mathbb{E}_{x,y}\bigg[\int^\eta_0e^{-(\rho+m) t}\Big(u(c_t)+m\,u(0)\Big)dt \nonumber\\+\int^\infty_\eta e^{-(\rho+m) t} \Big(u(c_t)+m\,u(lB_\eta)\Big) dt  \bigg].
\end{align}
In this section, we will focus on (\ref{6-1}). Similar to Theorem \ref{buy}, we also have the following immediate result.  
\begin{theorem}\label{the4.1}
	When $\gamma>1$, the optimal purchasing time is $\eta^*=0$.
\end{theorem}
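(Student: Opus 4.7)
The proof plan for Theorem \ref{the4.1} essentially mirrors the argument behind Theorem \ref{buy}, and is a direct consequence of inspecting the objective functional in (\ref{6-1}) together with the blow-up of $u$ at zero when $\gamma>1$.

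First, I would recall that for the power utility $u(c)=c^{1-\gamma}/(1-\gamma)$ with $\gamma>1$, one has $u(0)=-\infty$. This is a fundamental feature of the HARA class for $\gamma>1$: the utility of no consumption (and, by extension, the utility of a zero bequest prior to purchasing insurance) diverges to $-\infty$. Looking at the functional
\[
J^B_{x,y}(c,\pi,B,\eta):=\mathbb{E}_{x,y}\bigg[\int^\eta_0 e^{-(\rho+m)t}\bigl(u(c_t)+m\,u(0)\bigr)dt+\int^\infty_\eta e^{-(\rho+m)t}\bigl(u(c_t)+m\,u(lB_\eta)\bigr)dt\bigg],
\]
the term $m\,u(0)$ appearing inside the first integral is identically $-\infty$ on $\{\eta>0\}$.

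Next I would show that for any admissible $(c,\pi,B,\eta)\in\mathcal{A}^B(x,y)$ with $\mathbb{P}(\eta>0)>0$, the first integral satisfies $\int_0^\eta e^{-(\rho+m)t}m\,u(0)\,dt=-\infty$ on the event $\{\eta>0\}$, because on that event the (non-degenerate) interval of integration has positive Lebesgue measure and the integrand equals $-\infty$ there. Consequently $J^B_{x,y}(c,\pi,B,\eta)=-\infty$ whenever $\mathbb{P}(\eta>0)>0$, so any such strategy is strictly dominated.

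Finally I would exhibit one admissible strategy with $\eta=0$ yielding a finite value of $J^B_{x,y}$, so that $V^B(x,y)>-\infty$ and therefore the suboptimal candidates with $\mathbb{P}(\eta>0)>0$ cannot be optimal. Choosing $\eta=0$ collapses the objective to the post-purchase problem $\mathbb{E}_{x,y}[\int_0^\infty e^{-(\rho+m)t}(u(c_t)+m\,u(lB_0))dt]$, which under Assumption \ref{assume2} is finite for suitably chosen admissible $(c,\pi)$ and any deterministic $B_0\geq 0$ respecting Condition (iv) of Definition \ref{admissiblecontrol2} (e.g., $B_0$ small enough that $x+y/\kappa>mB_0/r$). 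Combining these observations gives $\eta^*=0$.

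There is really no technical obstacle here: the only subtlety worth flagging is that, in contrast to the predetermined-bequest setting, one must also check that at least one admissible $B_0$ exists satisfying the borrowing constraint in (iv) so that the supremum is not vacuously $-\infty$; but the interior condition $(x,y)\in\mathcal{O}$ together with $B_0=0$ (or any sufficiently small positive value) trivially satisfies this requirement.
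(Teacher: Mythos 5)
Your argument follows exactly the paper's own route: the paper treats Theorem \ref{the4.1} (like Theorem \ref{buy}) as an immediate consequence of inspecting the objective (\ref{6-1}), since $u(0)=-\infty$ for $\gamma>1$ forces any strategy with $\eta>0$ to the value $-\infty$, while $\eta=0$ admits strategies of finite value (finiteness at $\eta=0$ resting on Assumption \ref{assume2}, which the paper notes holds automatically when $\gamma>1$ and $r\geq 0$). Your two-step decomposition --- $(i)$ any strategy with $\mathbb{P}(\eta>0)>0$ gives value $-\infty$, $(ii)$ some strategy with $\eta=0$ gives a finite value --- is precisely this argument made explicit, and step $(i)$ is sound: for $\gamma>1$ the integrand is everywhere negative, so the expectation is well defined in $[-\infty,0)$ and no $\infty-\infty$ issue arises.

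However, step $(ii)$ as written contains a genuine slip: you claim the post-purchase objective is finite for ``any deterministic $B_0\geq 0$,'' and in particular for $B_0=0$. This is false when $\gamma>1$: choosing $B_0=0$ makes $m\,u(lB_0)=m\,u(0)=-\infty$ inside the second integral, so the objective is again $-\infty$ --- the very same blow-up your step $(i)$ exploits does not disappear by purchasing at time zero with zero face value. The fix is the one you mention only parenthetically: take a strictly positive $B_0$, small enough that $mB_0/r<x+y/\kappa$ (possible since $(x,y)\in\mathcal{O}$), together with, say, a constant positive consumption rate and an admissible portfolio; then both the bequest term and the consumption integral are finite, and the comparison with step $(i)$ yields $\eta^*=0$. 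With every reference to $B_0=0$ as a witness of finiteness deleted, your proof is complete and coincides with the paper's.
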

Hence, in the following, we focus on the case $0<\gamma<1$.

\subsection{Solution to the problem for $\gamma \in (0,1)$}\label{sec4-2}
In this subsection, we determine the solution to (\ref{6-1}) when $0<\gamma<1$, by employing a duality approach, which is analogous to the methods used in Section 3. First, we conduct successive transformations (cf.\ Subsections \ref{sec4-2-1}, \ref{sec4-2-2} and \ref{sec4-2-3}) that connect the original stochastic control-stopping problem (with value function $V^B$) into its dual problem (with value function $v^B$). Then we study the reduced-version dual stopping problem (with value function $\widehat{v}^B$) and give our optimal purchasing time and optimal bequest amount of this section, see Theorem \ref{main}. 

\subsubsection{The static budget constraint}\label{sec4-2-1}

We now write down the static budget constraint by using the well-known method developed by \cite{karatzas1987optimal} and \cite{cox1989optimal}:
\begin{align}\label{6-2}
		\mathbb{E}_{x,y}\big[{\xi_{t\wedge \eta}}X_{t\wedge \eta}\big]+\mathbb{E}_{x,y}\bigg[ \int_0^{t\wedge \eta} \xi_{u} c_u du                         \bigg]  \leq x+ \mathbb{E}_{x,y}\bigg[ \int_0^{t\wedge \eta} \xi_{u} Y_u du \Bigg], \quad  \ t \geq 0,
	\end{align}
and
\begin{align}\label{6-3}
\mathbb{E}_{x,y}\big[{\xi_{t \vee \eta}}X_{t \vee \eta}\big]+\mathbb{E}_{x,y}\bigg[ \int_{\eta}^{t\vee \eta} \xi_u \, \Big(c_u+mB_0\Big) \, d u   \bigg]   \leq X_\eta+ \mathbb{E}_{x,y}\bigg[ \int_\eta^{t \vee \eta} \xi_{u} Y_u du \Bigg], \  \ t \geq 0.
\end{align}

\subsubsection{The optimization problem after purchasing life insurance}\label{sec4-2-2}
 
In this subsection, we shall look into the agent's optimization problem after the purchase of life insurance. Apart from determining consumption and portfolio choices, the agent is also required to determine the optimal bequest amount $B^*_{\eta}$ at purchasing time $\eta$, where $\eta=0$ in this subsection.

Letting $\mathcal{A}^B_0(x,y):= \{ (c,\pi,B): (c,\pi,B,0) \in \mathcal{A}^B(x,y)\},$ where the subscript $0$ indicates that the purchasing time $\eta$ is equal to $0$, the agent's value function after purchasing life insurance then reads as 
\begin{align}\label{6-4}
    \widehat{V}^B(x,y):= \sup_{(c,\pi,B)\in \mathcal{A}_0^B(x,y)} \mathbb{E}_{x,y	}\bigg[\int^\infty_0 e^{-(\rho+m) s } \Big(u(c_s)+m \, u(lB_0)\Big) ds \bigg].
    \end{align}
Recalling that $\xi_{s}=e^{-rs-\theta W_s -\frac{1}{2}\theta^2 s}$, and for any pair $(c,\pi,B)\in \mathcal{A}_0^B(x,y)$ with a Lagrange multiplier $z>0$, we have
\begin{align}\label{6-5}
 \mathbb{E}&_{x,y} \bigg[    \int_0^{\infty}  e^{-(\rho+m) s } \Big(u(c_s)+m  \, u(lB_0)\Big)  ds  \bigg] \nonumber \\ \leq &\
 \mathbb{E}_{x,y} \bigg[    \int_0^{\infty} e^{-(\rho+m) s } \Big(u(c_s)+m  \, u(lB_0)\Big)  ds  \bigg]   - z      \mathbb{E}_{x,y}\bigg[  \int_0^{\infty} \xi_{s} \Big(c_s+mB_0-Y_s\Big) ds                         \bigg]  +z x                  \nonumber               \\
=&\   \mathbb{E}_{x,y}\bigg[    \int_0^{\infty}  e^{-(\rho+m) s } \bigg(u(c_s) -z P_sc_s+z P_sY_s  \bigg) ds  \bigg]\nonumber \\
&+\mathbb{E}_{x,y}\bigg[   \int_0^{\infty }  e^{-(\rho+m) s } \Big(m \, u(l B_0)-z P_s m B_0\Big) ds\bigg]+ z x  \nonumber \\
\leq &\ \mathbb{E}_{x,y}\bigg[    \int_0^{\infty } e^{-(\rho+m) s }\Big(\widehat{u}(zP_s)+ zP_sY_s\Big) ds  \bigg]+ z x+\bar{u}(z),
\end{align}
where the first inequality results from the budget constraint stated in \eqref{6-3}, $P$ and $\widehat{u}$ are defined in (\ref{pt}), and
\begin{align}\label{6-7}
 \quad \bar{u}(z)&:= \sup_{B_0 \geq 0}\mathbb{E}_{x,y}\bigg[   \int_0^{\infty }  e^{-(\rho+m) s } \Big(m \, u(l B_0)-z P_s m B_0\Big) ds\bigg]\nonumber\\
 &= \sup_{B_0 \geq 0} \Big[\frac{m \, u(lB_0)}{\rho+m}-z \frac{mB_0}{r}\Big]=m l^{\frac{1-\gamma}{\gamma}} \frac{\gamma}{1-\gamma}r^{\frac{1-\gamma}{\gamma}}(\rho+m)^{-\frac{1}{\gamma}}z^{\frac{\gamma-1}{\gamma}},
  \end{align}
with the optimizing bequest amount being 
\begin{align}\label{bequest}
	B_0^*:=\Big[\frac{z(\rho+m)}{r}\Big]^{-\frac{1}{\gamma}}l^{\frac{1-\gamma}{\gamma}} .
	\end{align}
 Since the amount of bequest will be fixed once purchased, the (candidate) optimal bequest amount $B^*_0$ will just depend on the initial state $z$. This means the agent can choose the optimal bequest amount based on their initial state $z$ at the time of purchase ($\eta=0$ in this subsection). 

Then we set
\begin{align}\label{6-8}
\widehat{Q}^B(z,y):=  \mathbb{E}_{z,y}\bigg[    \int_0^{\infty }  e^{- (\rho+ m) s} \Big[\widehat{u}(Z_s)+Z_sY_s\Big] ds  \bigg],
\end{align}
where $Z$ is given in (\ref{Z}).

\begin{pro}\label{pro4.1}
One has $\widehat{Q}^B \in  C^{2,2}(\mathbb{R}^2_+).$ Moreover, $\widehat{Q}^B$  satisfies 
\begin{align}\label{6-9}
-{\mathcal{L}}\widehat{Q}^B=\widehat{u}+zy, \ \text{on} \ \mathbb{R}^2_+, 
\end{align}
where $\mathcal{L}$ is defined in (\ref{L}).

\end{pro}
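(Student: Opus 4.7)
The plan is to obtain the proof by producing a closed-form expression for $\widehat{Q}^B$ and then reading off both the regularity and the PDE from that formula. This mirrors the strategy used for $\widehat{Q}$ in Proposition \ref{pro3.1}, with the simplification that the bequest-related constant no longer appears.

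First I would split the running reward as $\widehat{u}(z) + zy$. Since $u(c) = c^{1-\gamma}/(1-\gamma)$, the convex dual is the explicit power function $\widehat{u}(z) = \frac{\gamma}{1-\gamma} z^{(\gamma-1)/\gamma}$. The processes $Z$ and $Y$ are geometric Brownian motions with explicit exponential representations, so for every $s\ge 0$
\begin{equation*}
\mathbb{E}_{z}\bigl[Z_s^{(\gamma-1)/\gamma}\bigr] = z^{(\gamma-1)/\gamma}\, e^{A s}, \qquad \mathbb{E}_{z,y}[Z_s Y_s] = zy\, e^{-\kappa s},
\end{equation*}
for the constant $A := \tfrac{\gamma-1}{\gamma}\bigl(\rho - r + m - \tfrac{1}{2}\theta^2\bigr) + \tfrac{1}{2}\bigl(\tfrac{\gamma-1}{\gamma}\bigr)^{2}\theta^{2}$, the latter identity following from $\kappa = r - \mu_y + \sigma_y\theta > 0$. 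Setting $D := \rho + m - A$, a short calculation shows that Assumption \ref{assume2} is equivalent to $D > 0$. Consequently Fubini's theorem applies and yields the closed form
\begin{equation*}
\widehat{Q}^B(z,y) = \frac{\gamma}{(1-\gamma)\,D}\, z^{(\gamma-1)/\gamma} + \frac{zy}{\kappa}, \qquad (z,y)\in\mathbb{R}^{2}_{+}.
\end{equation*}

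Second, from this explicit formula it is immediate that $\widehat{Q}^B \in C^{\infty}(\mathbb{R}^{2}_{+})$, in particular $\widehat{Q}^B \in C^{2,2}(\mathbb{R}^{2}_{+})$. The PDE $-\mathcal{L}\widehat{Q}^B = \widehat{u} + zy$ would then be verified by direct substitution: the linear term $zy/\kappa$ satisfies $-\mathcal{L}(zy/\kappa) = zy$ precisely because $\kappa = r - \mu_y + \sigma_y\theta$ is the effective discount rate that makes the cross term balance, while the power term $z^{(\gamma-1)/\gamma}$ is an eigenfunction of the operator $\tfrac12\theta^{2}z^{2}\partial_{zz} + (\rho - r + m) z \partial_{z} - (\rho+m)\cdot$ with eigenvalue exactly $-D$, cancelling the prefactor and leaving $\widehat{u}(z)$. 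Alternatively one may invoke the Feynman--Kac representation for the smooth, non-degenerate two-dimensional diffusion $(Z,Y)$, whose coefficients are linear and whose running reward is smooth on $\mathbb{R}^{2}_{+}$.

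The main obstacle is really just bookkeeping: checking that the integrability condition obtained from the moment computation above coincides with Assumption \ref{assume2}, and justifying the use of Fubini via the explicit exponential moments. No genuine analytic difficulty is expected, since everything reduces to elementary computations once the dual utility and the GBM moment-generating functions are written out.
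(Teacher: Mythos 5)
Your proposal is correct and follows essentially the same route as the paper's own proof: compute $\widehat{Q}^B$ in closed form from the explicit exponential representations of $Z$ and $Y$ (your constant $D$ coincides with the paper's $K$, so your convergence condition $D>0$ is exactly Assumption \ref{assume2}), and then read off both the $C^{2,2}$ regularity and the PDE \eqref{6-9} by direct differentiation of that formula. One minor slip worth fixing: your displayed cross-moment should read $e^{-(\rho+m)s}\,\mathbb{E}_{z,y}[Z_sY_s]=zy\,e^{-\kappa s}$ (equivalently $\mathbb{E}_{z,y}[Z_sY_s]=zy\,e^{(\rho+m-\kappa)s}$), not $\mathbb{E}_{z,y}[Z_sY_s]=zy\,e^{-\kappa s}$; the discounted integral $zy/\kappa$ that you actually use in the closed form is nevertheless correct, so the conclusion is unaffected.
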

\begin{proof}

The proof is given in \ref{proofpro4.1}.

\end{proof}

Similar to Theorem \ref{dualrelation1}, it is possible to relate $\widehat{V}^B$ to $\widehat{Q}^B$ through the following duality relation.
\begin{theorem}\label{thedual}
The following dual relations hold:
\begin{align*}
\widehat{V}^B(x,y) = \inf_{z >0} [ \widehat{Q}^B(z,y)+zx+\bar{u}(z) ], \quad \widehat{Q}^B(z,y) =  \sup_{x> -\frac{y}{\kappa}+\frac{m B_0}{r}}[\widehat{V}^B(x,y)-zx-\bar{u}(z)].
\end{align*}
\end{theorem}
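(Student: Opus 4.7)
\medskip

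\noindent\textbf{Proof plan.} The plan is to mirror the duality argument used for Theorem \ref{dualrelation1}, applied to the enlarged control set that now includes the bequest amount $B_0$. For the inequality
\[
\widehat{V}^B(x,y) \;\le\; \inf_{z>0}\bigl[\widehat{Q}^B(z,y)+zx+\bar u(z)\bigr],
\]
I would take the supremum over $(c,\pi,B)\in\mathcal A_0^B(x,y)$ on both sides of the chain of inequalities \eqref{6-5}, recognizing that the right-hand side of \eqref{6-5} is the deterministic quantity $\widehat Q^B(z,y)+zx+\bar u(z)$ (after using \eqref{6-7} and \eqref{6-8}), and then minimize over the Lagrange multiplier $z>0$. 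This yields weak duality.

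For the reverse inequality (and attainment), I would explicitly construct the optimal triple. Since $\widehat u$ and $\bar u$ are strictly convex and $\widehat Q^B(\cdot,y)$ is strictly convex in $z$ (inherited from $\widehat u$ through \eqref{6-8}), for each $(x,y)$ with $x>-y/\kappa + mB_0^*/r$ there exists a unique $z^\ast=z^\ast(x,y)>0$ satisfying the first-order condition
\[
-\widehat Q^B_z(z^\ast,y)-\bar u'(z^\ast)\;=\;x.
\]
Let $Z^\ast$ denote the solution of \eqref{Z} with $Z_0=z^\ast$. I would then take as candidate controls the pointwise maximizers in the Fenchel inequalities used in \eqref{6-5}: namely
\[
c^\ast_t:=(Z^\ast_t)^{-1/\gamma}, \qquad
B^\ast_0:=\Bigl[\tfrac{z^\ast(\rho+m)}{r}\Bigr]^{-1/\gamma}l^{(1-\gamma)/\gamma},
\]
the latter being the maximizer displayed in \eqref{bequest}. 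With these choices, both the Young-type inequality $u(c_s)-zP_sc_s\le\widehat u(zP_s)$ and the bequest-level inequality defining $\bar u$ hold with equality, turning the chain \eqref{6-5} into equalities.

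Next, I would define the candidate optimal wealth by the present-value formula
\[
X^\ast_t\;:=\;\frac{1}{\xi_t}\,\mathbb{E}\!\left[\int_t^\infty\xi_s\bigl(c^\ast_s+mB^\ast_0-Y_s\bigr)\,ds\,\Big|\,\mathcal F_t\right],
\]
which, thanks to the choice of $z^\ast$, starts at $X^\ast_0=x$. Applying the martingale representation theorem to the $\mathbb{P}$-martingale $M_t:=\xi_tX^\ast_t+\int_0^t\xi_s(c^\ast_s+mB^\ast_0-Y_s)\,ds$ produces an adapted $\pi^\ast$ for which the quadruple $(c^\ast,\pi^\ast,B^\ast_0,\eta=0)$ satisfies \eqref{stateeq}. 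Admissibility items (i)--(iii) are immediate from the explicit form of $c^\ast,B^\ast_0$, while (iv) reduces, via the representation above, to showing $X^\ast_t+g_t>mB^\ast_0/r$ for all $t\ge 0$; this can be obtained from the explicit formula for $\widehat Q^B_z$ and the lower bound $\widehat u\ge 0$ for $\gamma\in(0,1)$. Substituting $(c^\ast,\pi^\ast,B^\ast_0)$ back into \eqref{6-4} then gives
\[
\widehat V^B(x,y)\;\ge\;\widehat Q^B(z^\ast,y)+z^\ast x+\bar u(z^\ast)\;\ge\;\inf_{z>0}\bigl[\widehat Q^B(z,y)+zx+\bar u(z)\bigr],
\]
closing the first duality relation. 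The second relation follows from the involutive property of the Legendre--Fenchel transform: the strict convexity of $z\mapsto\widehat Q^B(z,y)+\bar u(z)$ (in fact it is $C^2$ on $\mathbb{R}_+$) together with the concavity and monotonicity of $x\mapsto\widehat V^B(x,y)$ in $x$ (inherited from the concavity of $u$ and the linearity of the wealth equation in $(c,\pi,B)$) ensures $\widehat V^B(\cdot,y)$ is the Legendre conjugate of $z\mapsto\widehat Q^B(z,y)+\bar u(z)$, so inverting yields the claimed representation of $\widehat Q^B$.

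\medskip

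\noindent\textbf{Main obstacle.} The delicate step is verifying the state-constraint (iv) for the candidate wealth $X^\ast$ at every $t\ge 0$, i.e.\ that $X^\ast_t+g_t$ strictly dominates the present value of future premia $mB^\ast_0/r$. Unlike in Section \ref{sec:predeq}, where $h$ was deterministic, here this lower bound depends on the endogenously chosen $B^\ast_0$; the verification relies on a careful rewriting of $X^\ast_t$ via $\widehat Q^B_z(Z^\ast_t,Y_t)+\bar u'(z^\ast)\cdot\xi_0/\xi_t$-type identities and on the explicit Cobb--Douglas form of $\bar u$. All other pieces---strict convexity, existence of $z^\ast$, admissibility of $(c^\ast,\pi^\ast)$---are routine once the budget constraint \eqref{6-3} binds along the candidate.
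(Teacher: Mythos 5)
Your proposal is correct and follows essentially the same route as the paper's own argument: weak duality by taking the supremum over $(c,\pi,B)\in\mathcal{A}_0^B(x,y)$ in the Lagrangian chain \eqref{6-5}, then attainment by choosing the pointwise Fenchel maximizers $c^*_s=\mathcal{I}^u(Z_s)$ and $B_0^*$ from \eqref{bequest} so that the budget constraint binds (your first-order condition $-\widehat{Q}^B_z(z^*,y)-\bar{u}'(z^*)=x$ is exactly the binding budget identity, since $-\bar u'(z)=mB_0^*(z)/r$), followed by a martingale-representation construction of $\pi^*$ as in Lemma \ref{budget2}. The only cosmetic differences are that the paper fixes $z$ and produces the matching $x$ (yielding the second relation directly, and citing \cite{karatzas1998methods}/\cite{karatzas2000utility} for optimality) whereas you fix $x$ and invert for $z^*$; also, the state-constraint check you flag as the main obstacle is in fact immediate here because $B_0^*$ is constant after purchase, so $X^*_t+g_t-mB_0^*/r$ equals the strictly positive present value of future consumption.
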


\subsubsection{The dual optimal stopping problem}\label{sec4-2-3}
Remember that we focus on $0<\gamma<1$ due to Theorem \ref{the4.1}, that is $u(0)=0$. Now, for any $(x,y) \in \mathcal{O}$ and Lagrange multiplier $z>0$, from (\ref{6-1}) and the budget constraint (\ref{6-2}), recalling that $P_s$ as in (\ref{pt}), we have by the strong Markov property
\begin{align*}
  \mathbb{E}&_{x,y}\bigg[\int^\eta_0 e^{-(\rho+m)t }u(c_t)dt +\int^\infty_\eta e^{-(\rho+m)t } \Big(u(c_t)+m\,u(lB_\eta)\Big) dt  \bigg]
 \nonumber \\ \leq &\
\sup_{(c,\pi,B,\eta) \in \mathcal{A}^B(x,y)} \mathbb{E}_{x,y}  \bigg[\int^\eta_0 e^{-(\rho+m) t }u(c_t) dt + e^{-(\rho+m) \eta }\widehat{V}^B(X_\eta,Y_\eta)  \bigg] \nonumber \\& - z      \mathbb{E}_{x,y} \bigg[   \xi_{\eta}X_{\eta}+ \int_0^{\eta} \xi_{t}\Big(c_t-Y_t\Big) dt                         \bigg]  +z x                        \nonumber       \\
=&\   \sup_{(c,\pi,B,\eta) \in \mathcal{A}^B(x,y)} \mathbb{E}_{x,y}\bigg[    \int_0^{\eta}   e^{-(\rho+m) s } \bigg(u(c_s) -zP_sc_s+zP_sY_s\bigg) ds  \nonumber \\&+ e^{-(\rho+m) \eta } \widehat{V}^B(X_{\eta},Y_\eta) -  e^{-(\rho+m) \eta }zP_{\eta}X_{\eta}      \bigg]+z x  \nonumber \\
 \leq &\  \sup_{\eta \in \mathcal{S}} \mathbb{E}_{z,y}\bigg[    \int_0^{\eta} e^{-(\rho+m) s }\Big(\widehat{u}(Z_s)+Z_sY_s\Big) ds
+  e^{-(\rho+m) \eta }\Big(\widehat{Q}^B(Z_{\eta},Y_\eta)+\bar{u}(Z_\eta)\Big) \bigg]+ zx , 
\end{align*}
where $\widehat{u}(z)= \sup_{c \geq 0}[u(c)-c z]$ and $Z_t $ is defined in (\ref{Z}). 


Hence, setting
\begin{align}\label{6-17}
v^B(z,y):=&\ \sup_{ \eta \in \mathcal{S}
} \mathbb{E}_{z,y}\bigg[    \int_0^{\eta }  e^{- (\rho+ m) s}\Big(\widehat{u}(Z_s) +Z_sY_s\Big)ds \nonumber\\&+ e^{- (\rho+ m) \eta}\Big(\widehat{Q}^B(Z_{\eta},Y_\eta)+\bar{u}(Z_\eta)\Big) \bigg],
\end{align}
we have a two-dimensional optimal stopping problem, with dynamic $(Z,Y)$ as in (\ref{Z}) and (\ref{income}). 

In the following subsections, we study (\ref{6-17}). Before doing that, similarly to Theorem \ref{dualrelation2}, we have the following theorem that establishes a dual relation between the original problem (\ref{6-1}) and the optimal stopping problem (\ref{6-17}).
\begin{theorem}
The following duality relations hold:
\begin{align*}
V^B(x,y) = \inf_{z>0}[v^B(z,y)+zx], \quad v^B(z,y) = \sup_{x>-\frac{y}{\kappa}} [V^B(x,y)-zx].
\end{align*}  
\end{theorem}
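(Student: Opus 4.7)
The plan is to mirror the duality argument already carried out for Theorem \ref{dualrelation2}, adapting it to the presence of the controlled bequest $B$ and the additional Legendre term $\bar{u}(z)$. The easy inequality $V^B(x,y) \leq \inf_{z>0}[v^B(z,y)+zx]$ drops out for free from the chain of estimates leading to (\ref{6-17}): taking the supremum over $(c,\pi,B,\eta)\in\mathcal{A}^B(x,y)$ on the left-hand side of that chain yields $V^B(x,y) \leq v^B(z,y) + zx$ for every $z>0$, and then the infimum in $z$ delivers one direction.

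For the reverse inequality I would identify the controls achieving equality at every Lagrangian step. The supremum defining $\widehat{u}$ is attained at $c^*_s = (Z_s)^{-1/\gamma}$, the supremum defining $\bar{u}(z)$ is attained at $B^*_\eta=[Z_\eta(\rho+m)/r]^{-1/\gamma}\, l^{(1-\gamma)/\gamma}$ as in (\ref{bequest}), and the optimal $\eta^*$ is the one attaining the supremum in (\ref{6-17}). The initial Lagrange multiplier $z^*=z^*(x,y)$ is then chosen so that the first-order condition $x=-v^B_z(z^*,y)$ holds. Existence and uniqueness of such $z^*$ will follow from strict convexity of $z\mapsto v^B(z,y)$ (proved as in Proposition \ref{convex}: $\widehat{Q}^B$ is strictly convex in $z$, $\widehat{v}^B$ is a supremum of affine functions hence convex, and the running penalty $\bar{u}(z)$ is strictly convex in $z$) together with the appropriate limits as $z\downarrow 0$ and $z\uparrow \infty$ (matching the endpoints of the admissible wealth interval $(-y/\kappa,\infty)$).

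The main technical obstacle will be producing an admissible portfolio $\pi^*$ that makes the candidate quadruple $(c^*,\pi^*,B^*,\eta^*)\in\mathcal{A}^B(x,y)$ and turns the two static budget inequalities (\ref{6-2})–(\ref{6-3}) into equalities. I would follow the standard martingale-representation recipe used in the proof of Theorem \ref{thedual}: define the shadow wealth process
\begin{equation*}
\mathcal{X}^B(z,y) := \mathbb{E}_{z,y}\!\left[\int_0^{\eta^*}\xi_s\bigl((Z_s)^{-1/\gamma}-Y_s\bigr)ds + \xi_{\eta^*}\!\left(\mathcal{X}^B_{\eta^*}+\frac{mB^*_{\eta^*}}{r}\right)\right],
\end{equation*}
verify through It\^o's formula and the PDE (\ref{6-9}) that $\mathcal{X}^B(z^*,y)=x$, and invoke Brownian martingale representation to extract the unique $\pi^*$ replicating the resulting martingale. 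Admissibility condition (iv) in Definition \ref{admissiblecontrol2} is then guaranteed by the static budget equality together with $B^*_t\equiv 0$ on $\{t<\eta^*\}$. Optimality at this candidate closes the circle, giving $V^B(x,y)\geq v^B(z^*,y)+z^*x\geq \inf_{z>0}[v^B(z,y)+zx]$.

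The second duality formula $v^B(z,y)=\sup_{x>-y/\kappa}[V^B(x,y)-zx]$ is then obtained by Fenchel biconjugation: from the first relation, $v^B(\cdot,y)$ is the Legendre transform of the concave function $-V^B(-\,\cdot,y)$; concavity of $V^B$ in $x$ follows from convexity of the admissibility set $\mathcal{A}^B(x,y)$ in $(c,\pi,B)$ and concavity of the utility $u$, and the coercivity needed for the Fenchel–Moreau theorem is inherited from the limiting behavior of $v^B_z$ established above.
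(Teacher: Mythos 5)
Your overall architecture coincides with the paper's: the paper proves this theorem by exactly the same scheme as Theorem \ref{dualrelation2} (the easy inequality from the Lagrangian chain preceding (\ref{6-17}); the reverse inequality by constructing candidate optimal controls, pinning down $z^*$ through the first-order condition $x=-v^B_z(z^*,y)$, and producing the hedging portfolio via martingale representation), so most of your proposal is a faithful transcription of that argument with the extra Legendre term $\bar u$ carried along, and your biconjugation shortcut for the second identity is a harmless variation on the paper's direct sandwich.

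There is, however, one step whose justification is genuinely wrong, and it sits at the heart of the reverse inequality. You claim strict convexity of $z\mapsto v^B(z,y)$ ``as in Proposition \ref{convex}'', asserting that $\widehat v^B$ is a supremum of affine functions of $z$. That is true in the predetermined case, where the reduced payoff in (\ref{hatjz}) is $Z_s h-m\,u(lB)$, affine in the initial datum $z$. But in the controlled case the reduced payoff in (\ref{6-19}) is $-K\bar u(Z_s)$, and by (\ref{6-7}) $\bar u(Z_s)$ is a positive multiple of $Z_s^{(\gamma-1)/\gamma}$, a strictly convex function of $z$ for $\gamma\in(0,1)$; hence each fixed-$\eta$ functional is strictly \emph{concave} in $z$, and a supremum of concave functions need not be convex. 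The conclusion you want is still true, but must be reached differently: either argue directly on (\ref{6-17}), where each fixed-$\eta$ payoff is convex in $z$ because $\widehat u$, $\widehat Q^B(\cdot,y)$ and $\bar u$ are all convex and $z\mapsto Z^z_t$ is linear, so the supremum is convex; or use the observation (made precise in Theorem \ref{main}) that the integrand in (\ref{6-19}) is strictly negative, whence $\widehat v^B\equiv 0$ and $v^B=\widehat Q^B+\bar u$ is explicitly strictly convex with the limiting behavior of $v^B_z$ that your argument requires. Relatedly, in your last paragraph you should derive concavity of $V^B(\cdot,y)$ from the first duality identity (an infimum of affine functions of $x$) rather than from ``convexity of the admissibility set'': convex combinations of strategies with different stopping times $\eta$ are not admissible strategies, so the primal concavity argument you sketch does not stand on its own; and note that Fenchel--Moreau is applied to $v^B$, so it is precisely the convexity and lower semicontinuity of $v^B$ in $z$ — the step flagged above — that your biconjugation needs.
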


\subsubsection{Study of the dual optimal stopping problem}\label{sec4-2-4}

To study the optimal stopping problem (\ref{6-17}), it is convenient to introduce the function
\begin{align}\label{6-18}
\widehat{v}^B(z,y):={v}^B(z,y)-{\widehat{Q}^B}(z,y)-\bar{u}(z).
\end{align}

Applying It\^o's formula to $\{ e^{-(\rho+m)t }[\widehat{Q}^B(Z_t,Y_t)+\bar{u}(Z_t)] , t \in [0,\eta]  \}$, and taking conditional expectations we have 
\begin{align*}
\mathbb{E}&_{z,y}\Big[ e^{-(\rho+m) \eta }\Big(\widehat{Q}^B(Z_{\eta},Y_\eta)+\bar{u}(Z_\eta)\Big) \Big]= \widehat{Q}^B(z,y)+\bar{u}(z)\\&+ \mathbb{E}_{z,y}\bigg[ \int_0^{\eta}  e^{-(\rho+m) s }\mathcal{L}\Big(\widehat{Q}^B( Z_s,Y_s)+\bar{u}(Z_s)\Big) ds               \bigg ],
\end{align*}
where $\mathcal{L}$ is defined in (\ref{L}). Combining (\ref{6-17}) and (\ref{6-18}), we have 
 \begin{align}\label{6-19}
\widehat{v}^B(z,y)& =\sup_{  \eta \in \mathcal{S}
} \mathbb{E}_{z,y}\bigg[    \int_0^{\eta }  e^{- (\rho+ m) s} \Big(\widehat{u}(Z_s)+Z_sY_s\Big) ds  \nonumber + \int_{0}^{\eta} e^{- (\rho+ m) s}\mathcal{L}\Big(\widehat{Q}^B(Z_s,Y_s)+\bar{u}(Z_s)\Big) ds               \bigg ] \nonumber \\&
=\sup_{ \eta \in \mathcal{S}} \mathbb{E}_{z,y}\bigg[    \int_0^{\eta}  e^{- (\rho+ m) s} \Big(-K \bar{u}(Z_s)\Big)ds  \bigg],
\end{align}
where $K:= \frac{1}{\gamma}( \rho+m-r(1-\gamma)-\frac{1-\gamma}{2\gamma} \theta^2)>0$ due to Assumption \ref{assume2}, and where we have used the fact that (cf.\ (\ref{6-9}))
\begin{align*}
\mathcal{L}(\widehat{Q}^B(z,y)+\bar{u}(z))&=-\widehat{u}(z)-zy-K \bar{u}(z).
\end{align*}

From (\ref{6-19}) we see that $\widehat{v}^B(z,y)$ is independent of $y$. Hence, it is the value of a one-dimensional optimal stopping problem, and in the following, with a slight abuse of notation, we simply write $\widehat{v}^B(z)$. As $\bar{u}>0$ when $0<\gamma<1$ (cf.\ (\ref{6-7})), the following result follows immediately (cf.\ also Theorem \ref{the4.1}).

\begin{theorem}\label{main}
For any $\gamma \in (0,1) \cup (1,\infty)$, the optimal purchasing time is $\eta^*=0$; i.e., the agent should buy life insurance immediately. Moreover, the optimal bequest amount is $B^*_0=(\frac{z(\rho+m)}{r})^{-\frac{1}{\gamma}}l^{\frac{1-\gamma}{\gamma}}$.
\end{theorem}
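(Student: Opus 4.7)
The plan is to separate the argument by the risk-aversion regime. For $\gamma>1$, immediate purchase is already established by Theorem \ref{the4.1}, since the term $u(0)=-\infty$ appearing in the pre-purchase objective makes any $\eta>0$ strictly dominated by $\eta=0$. Hence all the real work concerns $\gamma\in(0,1)$, and the key observation is that the reduction performed in equations (\ref{6-18})--(\ref{6-19}) has already essentially solved the problem; only the sign analysis of the running payoff remains.

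For $\gamma\in(0,1)$, I would first check that $\bar{u}(z)>0$ for every $z>0$. From the closed-form expression
\[
\bar{u}(z)=m\,l^{\frac{1-\gamma}{\gamma}}\,\tfrac{\gamma}{1-\gamma}\,r^{\frac{1-\gamma}{\gamma}}(\rho+m)^{-\frac{1}{\gamma}}\,z^{\frac{\gamma-1}{\gamma}}
\]
derived in (\ref{6-7}), the prefactor is strictly positive (since $\gamma/(1-\gamma)>0$), and $z^{(\gamma-1)/\gamma}>0$. Combined with $K>0$, guaranteed by Assumption \ref{assume2}, the integrand $-K\bar{u}(Z_s)$ in (\ref{6-19}) is pathwise strictly negative.

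Consequently, any stopping time $\eta\in\mathcal{S}$ with $\mathbb{P}_z(\eta>0)>0$ contributes a strictly negative expected value to (\ref{6-19}), while $\eta=0$ returns exactly $0$. Taking the supremum yields $\widehat{v}^B(z)\equiv 0$ with unique optimizer $\eta^*=0$. Going back through the chain of identities $\widehat{v}^B\leftrightarrow v^B\leftrightarrow V^B$ supplied by (\ref{6-18}) and Theorem \ref{thedual}, the dual optimum is attained at $\eta^*=0$, so the pointwise maximizer identified in (\ref{bequest}), namely $B_0^*=\bigl(z(\rho+m)/r\bigr)^{-1/\gamma}\,l^{(1-\gamma)/\gamma}$, is the optimal bequest amount.

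The main (minor) technical point would be to justify that the Itô/optional-sampling step leading to (\ref{6-19}) is valid for arbitrary $\eta\in\mathcal{S}$, including $\eta=\infty$. One needs the boundary term $e^{-(\rho+m)t}\bigl(\widehat{Q}^B(Z_t,Y_t)+\bar{u}(Z_t)\bigr)$ to vanish in $L^1$ as $t\to\infty$; but since $\widehat{Q}^B$ and $\bar{u}$ grow only polynomially (explicit in $z$ for $\bar{u}$, and controlled by Proposition \ref{pro4.1} for $\widehat{Q}^B$) while the discount rate $\rho+m$ dominates the exponential growth of $(Z,Y)$ thanks to Assumption \ref{assume2}, this follows from standard Gronwall-type estimates. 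No free-boundary analysis is needed, in sharp contrast with Section \ref{sec3-2}.
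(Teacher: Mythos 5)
Your proposal is correct and follows essentially the same route as the paper's own proof: observe via (\ref{6-7}) and Assumption \ref{assume2} that the running payoff $-K\bar{u}(Z_s)$ in (\ref{6-19}) is strictly negative for $\gamma\in(0,1)$, so $\widehat{v}^B\equiv 0$ with optimizer $\eta^*=0$, then invoke Theorem \ref{the4.1} for $\gamma>1$ and read off $B_0^*$ from (\ref{bequest}). Your added remark on justifying the It\^o/optional-sampling step behind (\ref{6-19}) is a reasonable technical supplement that the paper leaves implicit, but it does not change the argument.
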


\begin{proof}
\sloppypar
We know that when $0<\gamma<1$, $\widehat{v}^B(z)=\sup_{ \eta \in \mathcal{S}} \mathbb{E}_{z}[    \int_0^{\eta}  e^{- (\rho+ m) s} (-Km l^{\frac{1-\gamma}{\gamma}} \frac{\gamma}{1-\gamma}r^{\frac{1-\gamma}{\gamma}}(\rho+m)^{-\frac{1}{\gamma}}Z_s^{\frac{\gamma-1}{\gamma}} )ds  ]=0$ (cf.\ (\ref{6-7}) and (\ref{6-19})); hence $\widehat{v}^B(z)=0$  and $\eta^* =0$ a.s. for any $z \in \mathbb{R}_+$. Combining then Theorem \ref{the4.1} and (\ref{bequest}), we conclude. 
\end{proof}


\subsection{Optimal strategies in terms of the primal variables}

In the previous section, we studied the properties of the dual value function $v^B(z,y)$ and used $(z,y)$, where $z$ denotes the dual variable and $y$ denotes labour income, as the coordinate system for the study. In this section, we will come back to study of the value function $V^B(x,y)$ in the original coordinate system $(x,y)$, where $x$ denotes the wealth of the agent. Using arguments similar to those in Section \ref{sec3-3}, we give the explicit expressions of the value function and optimal policies in terms of the primal variables.

\begin{theorem}\label{optimal1}
	The value function $V^B$ in (\ref{6-1}) is given by 
	\begin{align*}
		V^B(x,y)=\frac{1}{ \frac{1}{K}+m(lr)^{\frac{1-\gamma}{\gamma}}(\rho+m)^{-\frac{1}{\gamma}}} \frac{(\frac{y}{\kappa}+x)^{1-\gamma} }{1-\gamma}.	\end{align*}
	The optimal policies are ($c_t^*,\pi_t^*,B^*_t,\eta^*$),  with $c_t^*=c^*(X^*_t,Y_t), \pi^*_t=\pi^*(X^*_t,Y_t)$, where 
	\begin{align*}
		c^*(x,y)&:=\frac{\frac{y}{\kappa}+x}{\frac{1}{K}+m(lr)^{\frac{1-\gamma}{\gamma}}(\rho+m)^{-\frac{1}{\gamma}} }, \\\pi^*(x,y)&:=\bigg(\frac{\frac{y}{\kappa}+x}{\frac{1}{K}+m(lr)^{\frac{1-\gamma}{\gamma}}(\rho+m)^{-\frac{1}{\gamma}} }\bigg)\frac{\theta}{\gamma K \sigma}-\frac{\sigma_y y}{\kappa \sigma},	\end{align*}
	and
	\begin{align*}
		B^*_{0}(x,y):=\bigg(\frac{\frac{y}{\kappa}+x}{\frac{1}{K}+m(lr)^{\frac{1-\gamma}{\gamma}}(\rho+m)^{-\frac{1}{\gamma}} }\bigg)\bigg(\frac{\rho+m}{r}\bigg)^{-\frac{1}{\gamma}}l^{\frac{1-\gamma}{\gamma}}.
			\end{align*}
Furthermore, $\eta^*=0$ a.s. for any $(x,y) \in \mathcal{O}$ and the optimal wealth process is such that $X^*_t= \frac{(Z^*_t)^{-\frac{1}{\gamma}}}{K}+\frac{m B^*_0}{r}-\frac{Y_t}{\kappa}$, where $Z_t^*$ is the solution to Equation (\ref{Z}) with the initial condition
	\begin{align*}
		Z_0^*=z^*(x,y):=\bigg(\frac{\frac{y}{\kappa}+x}{\frac{1}{K}+m(lr)^{\frac{1-\gamma}{\gamma}}(\rho+m)^{\frac{-1}{\gamma}} }\bigg)^{-\gamma}	\end{align*}
	 and $x$ is the initial wealth at time $0$.
	\end{theorem}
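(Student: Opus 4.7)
The plan is to combine the stopping result of Theorem \ref{main}, which supplies $\eta^{*}=0$ and the optimal bequest amount $B^{*}_{0}$, with the duality identities already established, and then to reconstruct the primal optimisers in closed form.

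Since $\eta^{*}=0$ almost surely, we immediately have $V^{B}(x,y)=\widehat V^{B}(x,y)$, and Theorem \ref{thedual} gives $V^{B}(x,y)=\inf_{z>0}[\widehat Q^{B}(z,y)+zx+\bar u(z)]$. To make this usable I first compute $\widehat Q^{B}$ explicitly. By Proposition \ref{pro4.1} it is $C^{2,2}$ and solves $-\mathcal L\widehat Q^{B}=\widehat u(z)+zy$. Separating the forcing into $\widehat u(z)=\tfrac{\gamma}{1-\gamma}z^{(\gamma-1)/\gamma}$ and $zy$, the ansatz $\widehat Q^{B}(z,y)=A\,z^{(\gamma-1)/\gamma}+zy/\kappa$ reduces the PDE to two algebraic identities: $\mathcal L(zy/\kappa)=-zy$ holds by the very definition of $\kappa$, while a direct calculation gives $\mathcal L(z^{(\gamma-1)/\gamma})=-K\,z^{(\gamma-1)/\gamma}$ with $K>0$ by Assumption \ref{assume2}, pinning down $A=\tfrac{\gamma}{K(1-\gamma)}$.

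Next, inserting $\widehat Q^{B}$ together with the explicit form of $\bar u$ from \eqref{6-7}, the infimum in the duality collapses to $\inf_{z>0}\{\Lambda\,z^{(\gamma-1)/\gamma}+z(x+y/\kappa)\}$, with $\Lambda$ proportional to $M:=\tfrac{1}{K}+m(lr)^{(1-\gamma)/\gamma}(\rho+m)^{-1/\gamma}$. The map is strictly convex on $(0,\infty)$ and coercive at both endpoints, so a single first-order condition yields the unique minimiser $z^{*}(x,y)=((x+y/\kappa)/M)^{-\gamma}$, and back-substitution produces the announced closed form for $V^{B}$. The optimal consumption then follows from the pointwise Fenchel identity $c^{*}_{t}=(Z^{*}_{t})^{-1/\gamma}$ already used in \eqref{6-5}, which at $t=0$ becomes $(z^{*})^{-1/\gamma}=(x+y/\kappa)/M$; the optimal bequest is \eqref{bequest} evaluated at $z=z^{*}(x,y)$; and $\eta^{*}=0$ is inherited from Theorem \ref{main}.

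For the wealth and portfolio feedback I would observe that, once $B^{*}_{0}$ is fixed at time zero, the remaining dynamics of $(X^{*},c^{*},\pi^{*})$ coincide with the post-purchase Merton problem of Section \ref{sec3-2} with constant premium $h=mB^{*}_{0}$. Theorem \ref{dualrelation1} (specialised to this choice of $h$) then delivers $X^{*}_{t}=(Z^{*}_{t})^{-1/\gamma}/K+mB^{*}_{0}/r-Y_{t}/\kappa$, and applying It\^o's formula to this expression and matching the $dW_{t}$-coefficient against the diffusion part of \eqref{stateeq} produces the stated $\pi^{*}(x,y)$. A concluding verification step, essentially identical to that of Theorem \ref{dualrelation2}, checks that the candidate $(c^{*},\pi^{*},B^{*}_{0},0)$ lies in $\mathcal A^{B}(x,y)$ and attains the infimum in the duality. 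The main obstacle is the coupling between the initial $B^{*}_{0}$ and the dual multiplier $z^{*}$: because $B^{*}_{0}$ depends on $z^{*}$ and the binding static budget \eqref{6-3} depends in turn on $mB^{*}_{0}$, closing the system requires combining the Fenchel relation for $B_{0}$ with \eqref{6-3}, leading to the implicit identity $(z^{*})^{-1/\gamma}(1+K\beta)=K(x+y/\kappa)$, where $\beta:=m(lr)^{(1-\gamma)/\gamma}(\rho+m)^{-1/\gamma}$. Resolving this identity is precisely what produces the combined factor $M=\tfrac{1}{K}+\beta$ appearing in every feedback formula.
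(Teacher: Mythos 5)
Your proposal follows essentially the same route as the paper's own proof: Theorem \ref{main} gives $\eta^*=0$, so $V^B=\widehat V^B$; Theorem \ref{thedual} reduces everything to minimizing $z\mapsto \widehat Q^B(z,y)+zx+\bar u(z)$; the explicit form of $\widehat Q^B$ (which the paper obtains by direct computation of the expectation in the proof of Proposition \ref{pro4.1}, and you re-derive via a PDE ansatz --- both give $\widehat Q^B(z,y)=\tfrac{\gamma}{(1-\gamma)K}z^{(\gamma-1)/\gamma}+zy/\kappa$) collapses the problem to a one-dimensional strictly convex minimization; and the wealth and portfolio are recovered from the binding static budget constraint plus It\^o matching, exactly as in the paper's Step 3 (cf.\ (\ref{a23})--(\ref{portfolio})). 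Your closing observation that the coupling between $B_0^*$ and $z^*$ resolves into $(z^*)^{-1/\gamma}M=x+y/\kappa$, with $M:=\tfrac1K+m(lr)^{(1-\gamma)/\gamma}(\rho+m)^{-1/\gamma}$, is precisely the mechanism at work in the paper's verification that the candidate strategies satisfy the budget constraint.

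One step deserves scrutiny, however. You assert that back-substituting $z^*=\big((x+y/\kappa)/M\big)^{-\gamma}$ ``produces the announced closed form for $V^B$.'' It does not: carrying out the substitution gives
\begin{align*}
\widehat Q^B(z^*,y)+z^*x+\bar u(z^*) \;=\; \frac{\gamma}{1-\gamma}\,M\,(z^*)^{\frac{\gamma-1}{\gamma}}+z^*\Big(x+\frac{y}{\kappa}\Big) \;=\; \frac{M^{\gamma}}{1-\gamma}\Big(x+\frac{y}{\kappa}\Big)^{1-\gamma},
\end{align*}
i.e.\ prefactor $M^{\gamma}=(1/M)^{-\gamma}$, not the $1/M$ displayed in the theorem; the two coincide only if $M=1$. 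The $M^{\gamma}$ form is the one consistent with the Merton-type prefactor $K^{-\gamma}$ in the second branch of Theorem \ref{optimal} (here the effective consumption-to-wealth ratio is $1/M$, as your own formula $c^*=(x+y/\kappa)/M$ shows), and with limiting checks such as $l\to 0$. So the displayed value function in the statement appears to carry a typographical slip --- one which the paper's proof also glosses over with ``we thus find'' --- and a faithful execution of your own plan would surface this discrepancy rather than confirm the displayed formula. All of your remaining formulas, namely $z^*$, $c^*$, $B_0^*$, $\pi^*$, $X^*$ and $\eta^*=0$, do check out against both the first-order condition and the budget constraint.
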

\begin{proof}
The proof is given in Appendix \ref{proofoptimal1}.
\end{proof}

\subsection{Numerical illustrations}

In this section, we provide numerical illustrations of the optimal strategies and of the value functions discussed in Theorem \ref{optimal1}. Moreover, we investigate the sensitivities of the  optimal bequest amount on relevant parameters and compare the differences between the predetermined case and the controlled case.  The basic parameters are listed in Table \ref{tab1}.


\subsubsection{Sensitivity analysis of optimal bequest amount}

In this section, we study the sensitivity of the optimal bequest amount. Figure \ref{p14} shows that if $l$ is larger, the agent is more likely to buy more life insurance because the agent aligns more utility value on bequest. Figure \ref{p13} shows the effect of a change in $\gamma$ on the optimal bequest amount. If the risk aversion level $\gamma$ is larger, the agent with more risk aversion tends to  buy more life insurance. 
\begin{figure}[htbp]
		\setlength{\abovecaptionskip}{0pt}
	\setlength{\belowcaptionskip}{5pt}
	\begin{minipage}[b]{0.47\textwidth}
		\centering
		\includegraphics[width=\textwidth]{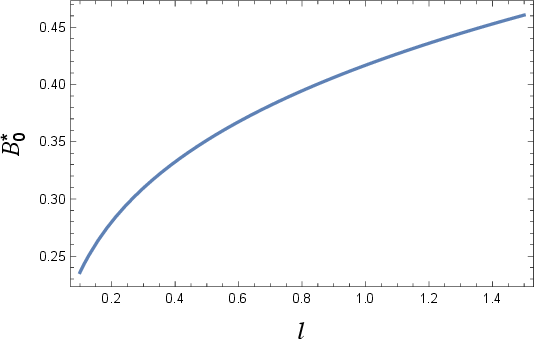}
		\caption{Values of $B^*_0$ when varying $l$ at $x=1, y=1$}
		\label{p14}
	\end{minipage}
	\hfill	\begin{minipage}[b]{0.47\textwidth}
		\centering
		\includegraphics[width=\textwidth]{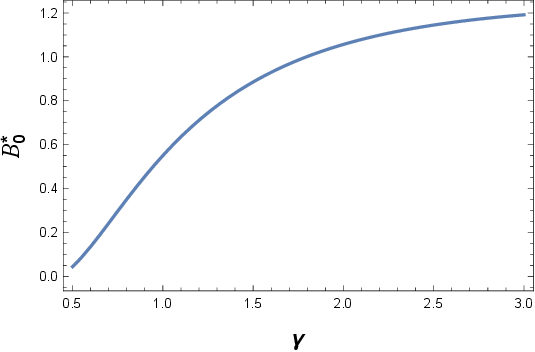}
		\caption{Values of $B^*_0$ when varying $\gamma$ at $x=1, y=1$}
		\label{p13}
	\end{minipage}
	\end{figure}

\subsubsection{Comparison of optimal strategies and value functions} Here we compare the optimal strategies at initial time $0$ and value functions in both the predetermined bequest amount and the controlled bequest amount case, see Table \ref{tab3}. We find that when an agent whose initial wealth $x=1$ and predetermined bequest amount is $B=5$, the agent will invest more money than current wealth. 
The combination of a low relative risk aversion ($\gamma=0.8$) and the anticipation of expected high future labor income has resulted in a portfolio allocation where the proportion invested in the risky asset exceeds 100\%. 
  If we choose the predetermined bequest amount $B$ equal to the optimal bequest amount $B^*_0$ derived in the controlled case, then the optimal portfolio and optimal consumption plan are the same as that in the controlled case.
\begin{table}[htbp]
\caption{Comparison of optimal strategies and value functions}
\label{tab3}
\centering
	\begin{tabular}{cccccccccc}
		\toprule  
		Predetermined:&$x$	&$B$ &$\pi^*(x,y)/x$&$c^*(x,y)/x$ &$\frac{y}{\kappa}$ \\
		  \midrule
	Values:	&1 & 5 &33.482 &1.477& 55  \\	
	\midrule
	Predetermined:&$x$	&$B$&$\pi^*(x,y)/x$&$c^*(x,y)/x$ &&  \\
	\midrule
	Values:	&1 &0.351&32.216 &1.479&   \\	
	\midrule
	Controlled:&$x$&$B^*_0(x,y)/x$	&$\pi^*(x,y)/x$&$c^*(x,y)/x$ &\\
	\midrule
	Values:	&1& 0.351 &32.216 &1.479&  \\				\bottomrule
	\end{tabular}
\end{table}

\section{{Discussions and Extensions}}\label{dissec}

{ In our initial setup, the individual decides on all consumption and life insurance matters at time $t=0$ with an initial wealth of $x$. Choosing not to buy insurance implies the policyholder does not set aside funds for inheritance, resulting in $u(0)$ in our mathematical model (see (\ref{objective})). For $\gamma>1$, this has resulted in the policyholder purchasing life insurance immediately at $t=0$ because of $u(0)=-\infty$ (cf.\ Theorems \ref{buy} and \ref{the4.1}). 

In this section, we explore a scenario where the policyholder allocates a portion of her initial wealth for inheritance. Here, the policyholder's initial wealth is divided into two parts: $x=(1-\phi)x+ \phi x:= q + x^{\prime}$, where $q:=(1-\phi)x>0 $ is directly designated for inheritance (not used for investment purpose), while $ x^{\prime}:=\phi x$ is allocated for the policyholder's personal consumption and life insurance decisions. In  case of not purchasing insurance, the money for bequest will be $q$, while in case of purchasing insurance, the money for bequest will be $q+ B_t$ if $t$ is the purchasing time. First, we consider the constant force of mortality, and then we extend the result to an age-dependent force of mortality. It is worth noting that there exists the critical wealth level regardless of the values of $\gamma$ in both cases.

\subsection{Constant force of mortality}\label{sec5-1}

To incorporate the above scenario, the agent's aim is then to maximize the following expected lifetime utility 
\begin{align}\label{eq5-1}
    \mathbb{E}\bigg[ \Big( \int^\tau_0e^{-\rho t}u(c_t)dt +e^{-\rho \tau}u(q) \Big)\mathds{1}_{\{ \eta \geq \tau\}} + \mathds{1}_{\{\eta<\tau\}}\Big( \int^\tau_0 e^{-\rho t}u(c_t)dt+e^{-\rho \tau}u(l(q+B_\tau)) \Big) \bigg],
\end{align}
where $q>0$ is a given constant. Also, similar to (\ref{objective2}), we can disentangle the market and mortality risk and write (\ref{eq5-1}) as 
\begin{align}\label{eq5-2}
   \mathbb{E}\bigg[\int^\eta_0e^{-(\rho+m) t}\Big(u(c_t)+m \, u(q) \Big)dt +\int^\infty_\eta e^{-(\rho+m) t} \Big(u(c_t)+m \,u(l(q+B_t))\Big) dt \bigg].
\end{align}
In the following, as done in Sections \ref{sec:predeq} and \ref{sec:conbeq}, we will consider a predetermined bequest amount case and a controlled bequest amount case.
\subsubsection{A predetermined bequest amount}
Using the similar arguments as in Section \ref{sec3-2}, we obtain the following dual optimal stopping problem:
 \begin{align}\label{disos}
\bar{v}(z)
:=\sup_{ \eta \in \mathcal{S}} \mathbb{E}_{z}\bigg[    \int_0^{\eta}  e^{- (\rho+ m) s} \Big(Z_s h+m\,u(q)-m\,u(l\bar{B})\Big)ds  \bigg],
\end{align}
where $\bar{B}:=B+q$. Let
\begin{align}\label{disb}
\bar{b}= \frac{[m\, u(l\bar{B})-m\,u(q)]r \alpha_1}{(\rho+m)h(\alpha_1-1)} \quad \text{and} \quad \bar{C}_1=-\frac{h}{r \alpha_1}\bar{b}^{1-\alpha_1},
\end{align}
 and
\begin{equation}\label{disvz}
\bar{w}(z)= \left\{
\begin{aligned}
-\frac{h}{r \alpha_1}\Big( \frac{[m\, u(l\bar{B})-mu(q) ]r \alpha_1}{(\rho+m)h(\alpha_1-1)}\Big)^{1-\alpha_1} z^{\alpha_1}+ \frac{h}{r}z-\frac{m\,u(l\bar{B})-m\,u(q)}{\rho+m}, \quad \text{if} \ & z>\bar{b},\\
0, \quad \text{if} \ & 0<z\leq \bar{b}.
\end{aligned}
\right.
\end{equation}
\begin{theorem}\label{disverification}
The function $\bar{w}$ given by (\ref{disvz}) identifies with the dual value function $\bar{v}$ in (\ref{disos}). Moreover, the optimal stopping time takes the form 
\begin{align*}
	\eta^*=\eta(z;\bar{b}):=\inf \{t \geq 0: Z_t^z \leq \bar{b}\},
\end{align*}
with $\bar{b}$ as in (\ref{disb}). 
\end{theorem}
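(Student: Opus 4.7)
The plan is to apply the guess-and-verify strategy already deployed for Theorem \ref{verification}, because the dual stopping problem (\ref{disos}) has exactly the same structure as (\ref{hatjz}): the only change is that the non-stochastic term $-m\,u(lB)$ is replaced by $m\,u(q) - m\,u(l\bar{B})$. Under the natural requirement $l\bar{B} > q$ (needed so that purchasing insurance is strictly better than keeping $q$ as inheritance), this term is strictly negative, and the computation that produced $b$ and $C_1$ in (\ref{b}) carries over with only this change of constant, yielding $\bar{b}$ and $\bar{C}_1$ in (\ref{disb}). Crucially, the argument works uniformly in $\gamma \in (0,1) \cup (1,\infty)$ because the presence of $q>0$ keeps $u(q)$ finite even when $\gamma > 1$.

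First I would verify by direct substitution that $\bar{w}$ in (\ref{disvz}) is a classical solution of the free-boundary problem
\begin{equation*}
(\mathbb{L}_Z - (\rho + m))\bar{w}(z) = -hz + m\,u(l\bar{B}) - m\,u(q), \quad z > \bar{b}, \qquad \bar{w}(z) = 0, \quad 0 < z \leq \bar{b},
\end{equation*}
together with the smooth-fit conditions $\bar{w}(\bar{b}+) = \bar{w}'(\bar{b}+) = 0$. The general solution of the inhomogeneous ODE is $C_1 z^{\alpha_1} + C_2 z^{\alpha_2} + \frac{h}{r}z - \frac{m\,u(l\bar{B}) - m\,u(q)}{\rho + m}$, with $\alpha_1, \alpha_2$ as in (\ref{root}); the at-most-linear growth of the candidate value function at infinity forces $C_2 = 0$, and the two smooth-fit equations determine $\bar{C}_1$ and $\bar{b}$. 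Next I would upgrade $\bar{w}$ to a classical supersolution of the full HJB obstacle problem $\max\{(\mathbb{L}_Z - (\rho + m))\bar{w} + hz - m\,u(l\bar{B}) + m\,u(q),\; -\bar{w}\} = 0$ on $\mathbb{R}_+$. On $(\bar{b}, \infty)$ the PDE holds with equality, while non-negativity of $\bar{w}$ follows from the explicit form together with convexity (since $\bar{w}''(z) > 0$ for $\alpha_1 < 0$) and the vanishing boundary conditions at $\bar{b}$. On $(0, \bar{b}]$ one has $\bar{w} = 0$, and the required inequality reduces to $z \leq m(u(l\bar{B}) - u(q))/h$, which follows from $z \leq \bar{b}$ since the factor $r\alpha_1/[(\rho+m)(\alpha_1 - 1)] \in (0,1)$ (using $\alpha_1 < 0$ and, as in Theorem \ref{verification}, $\rho + m > r$).

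Second, I would run the verification. Applying It\^o's formula to $\{e^{-(\rho + m)t}\bar{w}(Z_t)\}$ on $[0, \eta \wedge \tau_n]$ for a localizing sequence $\tau_n$, and using the supersolution property, yields for every $\eta \in \mathcal{S}$
\begin{align*}
\bar{w}(z) \geq \mathbb{E}_z\bigg[\int_0^{\eta \wedge \tau_n} e^{-(\rho + m)s}\bigl(Z_s h + m\,u(q) - m\,u(l\bar{B})\bigr)\,ds + e^{-(\rho + m)(\eta \wedge \tau_n)}\bar{w}(Z_{\eta \wedge \tau_n})\bigg].
\end{align*}
The linear growth $0 \leq \bar{w}(z) \leq \frac{h}{r}z + \text{const}$ combined with the explicit log-normal moments of $Z$ from (\ref{Z}) furnish the uniform integrability needed to pass to the limit $n \to \infty$; taking the supremum over $\eta$ then yields $\bar{w} \geq \bar{v}$. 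Choosing $\eta = \eta^*(z; \bar{b})$ turns every inequality into an equality, since the PDE holds with equality along the trajectory up to $\eta^*$ and $\bar{w}(Z_{\eta^*}) = 0$; this delivers both $\bar{w} = \bar{v}$ and the optimality of $\eta^*$.

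The main obstacle is justifying the limit exchange in the verification step uniformly over $\eta \in \mathcal{S}$, because the running cost contains the unbounded factor $Z_s h$ and $Z$ is a geometric Brownian motion possibly with positive drift. The saving grace is the explicit linear domination of $\bar{w}$ and the closed-form moments of $Z$, which reduce the argument to the same tail estimates used in Appendix \ref{proofverification}. It is precisely the finiteness of $m\,u(q) - m\,u(l\bar{B})$, valid also for $\gamma > 1$, that makes this argument succeed uniformly in $\gamma$ and produces the critical threshold $\bar{b}$ in both regimes, in contrast to the corner solutions of Theorems \ref{buy} and \ref{the4.1}.
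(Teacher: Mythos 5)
Your proposal is correct and takes essentially the same route the paper intends: Theorem \ref{disverification} is stated without a separate proof precisely because it follows from the guess-and-verify argument of Theorem \ref{verification} (Appendix \ref{proofverification}) with the constant $m\,u(lB)$ replaced by $m\,u(l\bar{B})-m\,u(q)$, which is exactly the adaptation you carry out (same ODE, smooth fit, supersolution check on $(0,\bar b]$, and It\^o-plus-localization verification). A minor merit of your write-up is that it makes explicit the hypotheses the paper leaves implicit---$\rho+m>r$ and $l\bar{B}>q$ (so that $\bar{b}>0$ and the stopping region is nonempty)---and correctly identifies the finiteness of $u(q)$ as the reason the result covers $\gamma>1$ as well, in contrast to Theorem \ref{buy}.
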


{In particular, we have the following observations:
\begin{itemize}
	\item When $0<\gamma<1$, the optimal time for life insurance purchase occurs when the agent's wealth surpasses a critical threshold. Also, when $q \to 0$, we have $\bar{b} \to b$, where $b$ is defined in (\ref{b}).
	\item When $\gamma>1$, there also exists an early exercise boundary $\bar{b}$, which is different from the result of Theorem \ref{buy}. However, when $q \to 0$, we have $\bar{b} \to +\infty$, which means that the agent should buy life insurance immediately (i.e.\ $\eta^*=0$). 
	
\end{itemize}
\subsubsection{A controlled bequest amount}\label{subsection5.2}
Similar to (\ref{vxy}), and from (\ref{eq5-2}) the agent aims at determining
\begin{align}\label{dis6-1}
\bar{V}^B(x',y):=\sup_{(c, \pi,B,\eta) \in \mathcal{A}^B(x',y) }\mathbb{E}_{x',y}\bigg[\int^\eta_0e^{-(\rho+m) t}\Big(u(c_t)+m\,u(q)\Big)dt \nonumber\\+\int^\infty_\eta e^{-(\rho+m) t} \Big(u(c_t)+m\,u(l(q+B_\eta))\Big) dt  \bigg].
\end{align}
In this section, we will focus on (\ref{dis6-1}), first, we conduct successive transformations (similar to Subsections \ref{sec4-2-1}, \ref{sec4-2-2} and \ref{sec4-2-3}) that connect the original stochastic control-stopping problem (with value function $\bar{V}^B$) with its dual problem (with value function $\bar{v}^B$). Then we study the reduced-version dual stopping problem (with value function $\widetilde{v}^B$).

	\textbf{The static budget constraint}. Recall that $x'= \phi x$. We replace $x$ by $x'$ in (\ref{6-2}) and (\ref{6-3}) and obtain the following static budget constraint:
	\begin{align}\label{dis6-2}
		\mathbb{E}_{x',y}\big[{\xi_{t\wedge \eta}}X_{t\wedge \eta}\big]+\mathbb{E}_{x',y}\bigg[ \int_0^{t\wedge \eta} \xi_{u} c_u du                         \bigg]  \leq x'+ \mathbb{E}_{x',y}\bigg[ \int_0^{t\wedge \eta} \xi_{u} Y_u du \Bigg], \quad \text{if} \ t \geq 0,
	\end{align}
	and
	\begin{align}\label{dis6-3}
		\mathbb{E}_{x',y}\big[{\xi_{t \vee \eta}}X_{t \vee \eta}\big]+\mathbb{E}_{x',y}\bigg[ \int_{\eta}^{t\vee \eta} \xi_u \, \Big(c_u+mB_0\Big) \, d u   \bigg]   \leq X_\eta+ \mathbb{E}_{x',y}\bigg[ \int_\eta^{t \vee \eta} \xi_{u} Y_u du \Bigg], \ \text{if} \ t\geq 0,
	\end{align}
	where $X$ is given by (\ref{stateeq}) with initial datum $x'$.
	
	\textbf{The optimization problem after purchasing life insurance.}
	Now, we look into the agent's optimization problem after the purchase of life insurance. Apart from determining consumption and portfolio choices, the agent is also required to determine the optimal bequest amount $B^*_{\eta}$ at purchasing time $\eta$, where $\eta=0$.
	 \\
	Letting $\mathcal{A}^B_0(x',y):= \{ (c,\pi,B): (c,\pi,B,0) \in \mathcal{A}^B(x',y)\},$ where the subscript $0$ indicates that the purchasing time $\eta$ is equal to $0$, the agent's value function after purchasing life insurance reads as 
	\begin{align*}
		\widetilde{V}^B(x',y):= \sup_{(c,\pi,B)\in \mathcal{A}_0^B(x',y)} \mathbb{E}_{x',y	}\bigg[\int^\infty_0 e^{-(\rho+m) s } \Big(u(c_s)+m \, u(l(q+B_0))\Big) ds \bigg].
	\end{align*}
	Recalling that $\xi_{s}=e^{-rs-\theta W_s -\frac{1}{2}\theta^2 s}$, for any pair $(c,\pi,B)\in \mathcal{A}_0^B(x',y)$ with a Lagrangian multiplier $z>0$, we have
	\begin{align*}
		\mathbb{E}&_{x',y} \bigg[    \int_0^{\infty}  e^{-(\rho+m) s } \Big(u(c_s)+m  \, u(l(q+B_0))\Big)  ds  \bigg] \nonumber \\ \leq &\
		\mathbb{E}_{x',y} \bigg[    \int_0^{\infty} e^{-(\rho+m) s } \Big(u(c_s)+m  \, u(l(q+B_0))\Big)  ds  \bigg]   \nonumber\\ &- z      \mathbb{E}_{x',y}\bigg[  \int_0^{\infty} \xi_{s} \Big(c_s+mB_0-Y_s\Big) ds                         \bigg]  +z x'                  \nonumber               \\
		\leq &\ \mathbb{E}_{x',y}\bigg[    \int_0^{\infty } e^{-(\rho+m) s }\Big(\widehat{u}(zP_s)+ zP_sY_s\Big) ds  \bigg]+ z x'+\widetilde{u}(z),
	\end{align*}
	where the first inequality results from the budget constraint stated in \eqref{dis6-3}, $P$ and $\widehat{u}$ are defined in (\ref{pt}), and
	\begin{align*}
		\quad \widetilde{u}(z)&:= \sup_{B_0 \geq 0}\mathbb{E}_{x',y}\bigg[   \int_0^{\infty }  e^{-(\rho+m) s } \Big(m \, u(l(q+B_0))-z P_s m B_0\Big) ds\bigg] \nonumber \\&
		= \sup_{B_0 \geq 0} \Big[\frac{m \, u(l(q+B_0))}{\rho+m}-z \frac{mB_0}{r}\Big].
	\end{align*}
	Let $\overline{L}:=\frac{q^{-\gamma} l^{1-\gamma}r}{\rho+m}$, and define the bequest amount solving the previous maximization is 
	\begin{equation*}
		B_0^*:=\left\{
		\begin{aligned}
			\Big[\frac{z(\rho+m)}{r}\Big]^{-\frac{1}{\gamma}}l^{\frac{1-\gamma}{\gamma}} -q, \quad &\text{if} \ z< \overline{L},
			\\
			0, \quad &\text{if} \ z\geq \overline{L},
		\end{aligned}
		\right. 
	\end{equation*}
	and 
	\begin{equation*}
		\widetilde{u}(z):= \left\{
		\begin{aligned}
			m l^{\frac{1-\gamma}{\gamma}} \frac{\gamma}{1-\gamma}r^{\frac{1-\gamma}{\gamma}}(\rho+m)^{-\frac{1}{\gamma}}z^{\frac{\gamma-1}{\gamma}}+\frac{zmq}{r} ,\quad \text{if} \ z< \overline{L},	\\
			\frac{m u(lq)}{\rho+m},\quad \text{if} \ z\geq \overline{L}.
		\end{aligned}
		\right.
	\end{equation*}
	In particular, it is easy to check that $\widetilde{u}'(z)<0$ when $z<\overline{L}$ and $\widetilde{u}'(\overline{L})=0$. Also, $\widetilde{u}''(z)>0$ when $z<\overline{L}$.
	Since the amount of bequest will be fixed once purchased, the (candidate) optimal bequest amount $B^*_0$ will just depend on the initial state $z$. This means the agent can choose the optimal bequest amount based on their initial state $z$ at the time of purchase ($\eta=0$). 
	Next we set
	\begin{align*}
		\widetilde{Q}^B(z,y):=  \mathbb{E}_{z,y}\bigg[    \int_0^{\infty }  e^{- (\rho+ m) s} \Big[\widehat{u}(Z_s)+Z_sY_s\Big] ds  \bigg],
	\end{align*}
	where $Z$ is given in (\ref{Z}).
	\begin{pro}
		One has $\widetilde{Q}^B \in  C^{2,2}(\mathbb{R}^2_+).$ Moreover, $\widetilde{Q}^B$  satisfies 
		\begin{align}\label{dis6-9}
			-{\mathcal{L}}\widetilde{Q}^B=\widehat{u}+zy, \ \text{on} \ \mathbb{R}^2_+, 
		\end{align}
		where $\mathcal{L}$ is defined in (\ref{L}).
	\end{pro}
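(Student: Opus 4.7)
The plan is to observe that $\widetilde{Q}^B$ is defined by exactly the same integral expression as $\widehat{Q}^B$ in equation (\ref{6-8}) of Proposition \ref{pro4.1}, with the only difference being the surrounding context (the presence of the fixed bequest $q$ affects $\widetilde{u}$ but not $\widetilde{Q}^B$ itself). Consequently, the proof given in Appendix \ref{proofpro4.1} applies verbatim, and I would simply point the reader to that argument while sketching the main computations for completeness.

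First, I would exploit the explicit geometric-Brownian-motion structure of $(Z,Y)$ from (\ref{Z}) and (\ref{income}), together with the power form $\widehat{u}(z) = \frac{\gamma}{1-\gamma} z^{\frac{\gamma-1}{\gamma}}$, to evaluate the two integrals separately. The $\widehat{u}(Z_s)$ term produces a closed-form expression proportional to $z^{\frac{\gamma-1}{\gamma}}$, whose convergence is guaranteed precisely by Assumption~\ref{assume2}. The cross term $\mathbb{E}_{z,y}[\int_0^\infty e^{-(\rho+m)s} Z_s Y_s \, ds]$ separates into a product and yields a multiple of $zy$, which converges under the standing assumption $\kappa>0$ combined with the mean-reversion provided by the discount $(\rho+m)$. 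Summing these two terms gives an explicit closed form, from which $\widetilde{Q}^B \in C^{2,2}(\mathbb{R}^2_+)$ is immediate by inspection.

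Second, to verify the PDE $-\mathcal{L}\widetilde{Q}^B = \widehat{u} + zy$, I would apply It\^o's formula to $\{e^{-(\rho+m)t}\widetilde{Q}^B(Z_t,Y_t)\}$ on $[0,T]$, take expectations (the local martingale part is a genuine martingale thanks to the polynomial growth of $\widetilde{Q}^B$ and standard moment estimates for geometric Brownian motion), and let $T \to \infty$. Using dominated convergence, the boundary term $e^{-(\rho+m)T}\mathbb{E}_{z,y}[\widetilde{Q}^B(Z_T,Y_T)]$ vanishes under Assumption~\ref{assume2}, which produces the Feynman--Kac representation and hence the desired PDE.

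The only potential obstacle is ensuring the uniform integrability required to pass the boundary term to zero and to justify differentiation under the expectation sign; however, this is handled exactly as in the proof of Proposition~\ref{pro4.1}, since the integrand, the state dynamics, and the discount factor are identical here. No new estimates are needed, and the result follows directly by repeating the argument with $\widetilde{Q}^B$ in place of $\widehat{Q}^B$.
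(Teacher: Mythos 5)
Your proposal is correct and takes essentially the same approach as the paper: the paper's stated proof of this proposition is literally ``similar to the proof of Proposition \ref{pro4.1}, we thus omit it,'' and that proof is precisely your second paragraph --- evaluate the two integrals in closed form using the explicit geometric-Brownian-motion expressions for $Z$ and $Y$ (with convergence guaranteed by Assumption \ref{assume2} and $\kappa>0$), obtaining $\widetilde{Q}^B(z,y)=\frac{\gamma}{1-\gamma}\frac{z^{\frac{\gamma-1}{\gamma}}}{K}+\frac{zy}{\kappa}$, from which $C^{2,2}$ regularity is immediate. The only (harmless) deviation is your third paragraph: once the closed form is in hand, the paper verifies the PDE by direct differentiation, so your It\^o/Feynman--Kac argument, though valid, is an unnecessary detour.
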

	\begin{proof}
		The proof is similar to the proof of Proposition \ref{pro4.1}, we thus omit it. 
	\end{proof}
	Similar to Theorem \ref{thedual}, we also have the following duality relations. 
	\begin{theorem}\label{disthedual}
		The following dual relations hold:
		\begin{align*}
			\widetilde{V}^B(x',y) = \inf_{z >0} [ \widetilde{Q}^B(z,y)+zx'+\widetilde{u}(z) ], \quad \widetilde{Q}^B(z,y) =  \sup_{x'> -\frac{y}{\kappa}+\frac{m B_0}{r}}[\widetilde{V}^B(x',y)-zx'-\widetilde{u}(z)].
		\end{align*}
	\end{theorem}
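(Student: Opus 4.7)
The plan is to mimic the proof strategy used for Theorem \ref{thedual} in the base controlled-bequest model, adapting the Lagrangian argument to the piecewise structure of $\widetilde{u}$ and the modified optimal bequest $B_0^*$. The weak duality direction comes essentially for free from the chain of inequalities derived just before the definition of $\widetilde{u}(z)$: taking the supremum over $(c,\pi,B)\in\mathcal{A}_0^B(x',y)$ of the left-hand side and using (\ref{dis6-3}) together with the definition of $\widetilde{Q}^B$ yields, for every $z>0$,
\begin{align*}
\widetilde{V}^B(x',y)\leq \widetilde{Q}^B(z,y)+zx'+\widetilde{u}(z),
\end{align*}
so that $\widetilde{V}^B(x',y)\leq \inf_{z>0}[\widetilde{Q}^B(z,y)+zx'+\widetilde{u}(z)]$ and the inequality $\widetilde{Q}^B(z,y)\geq \sup_{x'>\frac{mB_0}{r}-\frac{y}{\kappa}}[\widetilde{V}^B(x',y)-zx'-\widetilde{u}(z)]$ follows by rearranging.

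For the reverse inequalities I would exhibit a triple achieving equality. Tracing back where each Lagrangian inequality is saturated forces (i) $c^*_s=\mathcal{I}^u(Z_s^{z^*})=(Z_s^{z^*})^{-1/\gamma}$, (ii) $B_0^*=B_0^*(z^*)$ as given by the explicit maximizer in the definition of $\widetilde{u}$, and (iii) saturation of the budget constraint
\begin{align*}
\mathbb{E}_{x',y}\bigg[\int_0^\infty \xi_s\big(c_s^*+mB_0^*-Y_s\big)\,ds\bigg]=x'.
\end{align*}
To exhibit a Lagrange multiplier $z^*$ realizing (iii), I would observe that the map $z\mapsto\mathbb{E}_{x',y}[\int_0^\infty \xi_s(\mathcal{I}^u(Z_s^z)+mB_0^*(z)-Y_s)ds]$ is continuous and strictly decreasing in $z$ on $(0,\infty)$ with limits $+\infty$ at $0$ and $-y/\kappa$ at $+\infty$; hence a unique $z^*(x',y)$ solves the equation. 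Given $c^*$, $B_0^*$ and the candidate dual wealth process $X^*_t$ obtained from the Lagrangian identity, an application of the martingale representation theorem (exactly as in \cite{karatzas1987optimal}) produces an $\mathbb{F}$-adapted portfolio $\pi^*$ such that $(c^*,\pi^*,B^*,0)\in\mathcal{A}_0^B(x',y)$ and self-finances $X^*$. By construction this candidate attains equality in every inequality of the Lagrangian chain, so that $\widetilde{V}^B(x',y)\geq \widetilde{Q}^B(z^*,y)+z^*x'+\widetilde{u}(z^*)\geq \inf_{z>0}[\widetilde{Q}^B(z,y)+zx'+\widetilde{u}(z)]$, closing the first duality. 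The second duality is then immediate by Fenchel biconjugation, since $\widetilde{V}^B(\cdot,y)$ is concave and $\widetilde{Q}^B(\cdot,y)+\widetilde{u}(\cdot)$ is convex on its domain.

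The main obstacle is handling the piecewise nature of $B_0^*$ at the threshold $\overline{L}$: $\widetilde{u}$ is only $C^1$ (not $C^2$) at $z=\overline{L}$, and the admissibility constraint $X_t^{c,\pi,B,\eta}+g_t>\frac{mB_0^*}{r}\mathds{1}_{\{t\geq\eta\}}$ must be checked in both regimes $z^*<\overline{L}$ and $z^*\geq\overline{L}$. In the regime $z^*\geq\overline{L}$ one has $B_0^*=0$, so the problem collapses to the standard post-insurance analysis and admissibility is automatic. In the regime $z^*<\overline{L}$ the constraint amounts to showing that $(Z_t^{z^*})^{-1/\gamma}/K>0$ a.s., which follows from the explicit dynamics of $Z$. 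Apart from this case distinction, every other step is a direct transcription of the argument behind Theorem \ref{thedual}, with $\bar u$ replaced by $\widetilde u$ and $x$ replaced by $x'$.
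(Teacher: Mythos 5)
Your proposal is correct and takes essentially the same approach as the paper: the paper gives no separate proof here, simply invoking the argument of Theorem \ref{thedual}, which is exactly the Lagrangian scheme you reproduce — weak duality from the budget-constraint chain, then saturation via $c^*_s=\mathcal{I}^u(Z^{z^*}_s)$, the explicit maximizer $B_0^*(z^*)$, a unique multiplier $z^*$ obtained from continuity and strict monotonicity of the budget map, and a martingale-representation construction of $\pi^*$ (as in Lemma \ref{budget2}). Your extra care with the piecewise structure of $\widetilde{u}$ and admissibility on either side of $\overline{L}$ is a correct refinement of the same argument rather than a departure from it.
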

	
	\textbf{The dual optimal stopping problem}. Now, for any $(x',y) \in \mathcal{O}$ and Lagrange multiplier $z>0$, from (\ref{dis6-1}) and the budget constraint (\ref{dis6-2}), recalling that $P_s$ as in (\ref{pt}), we have by the strong Markov property
	\begin{align*}
		\mathbb{E}&_{x',y}\bigg[\int^\eta_0 e^{-(\rho+m)t }(u(c_t)+m\,u(q))dt +\int^\infty_\eta e^{-(\rho+m)t } \Big(u(c_t)+m\,u(l(B_\eta+q))\Big) dt  \bigg]
		\nonumber \\ \leq &\
		\sup_{(c,\pi,B,\eta) \in \mathcal{A}^B(x',y)} \mathbb{E}_{x',y}  \bigg[\int^\eta_0 e^{-(\rho+m) t }(u(c_t)+m\,u(q)) dt + e^{-(\rho+m) \eta }\widehat{V}^B(X_\eta,Y_\eta)  \bigg] \nonumber \\& - z      \mathbb{E}_{x',y} \bigg[   \xi_{\eta}X_{\eta}+ \int_0^{\eta} \xi_{t}\Big(c_t-Y_t\Big) dt                         \bigg]  +z x'                        \nonumber       \\
		\leq &\  \sup_{\eta \in \mathcal{S}} \mathbb{E}_{z,y}\bigg[    \int_0^{\eta} e^{-(\rho+m) s }\Big(\widehat{u}(Z_s)+Z_sY_s+m\,u(q)\Big) ds
		+  e^{-(\rho+m) \eta }\Big(\widehat{Q}^B(Z_{\eta},Y_\eta)+\widetilde{u}(Z_\eta)\Big) \bigg]+ zx' , 
	\end{align*}
	where $\widehat{u}(z)= \sup_{c \geq 0}[u(c)-c z]$ and $Z_t $ is defined in (\ref{Z}). 
	Hence, setting
	\begin{align}\label{dis6-17}
		\bar{v}^B(z,y):=&\ \sup_{ \eta \in \mathcal{S}
		} \mathbb{E}_{z,y}\bigg[    \int_0^{\eta }  e^{- (\rho+ m) s}\Big(\widehat{u}(Z_s) +Z_sY_s+m\,u(q)\Big)ds \nonumber\\&+ e^{- (\rho+ m) \eta}\Big(\widehat{Q}^B(Z_{\eta},Y_\eta)+\widetilde{u}(Z_\eta)\Big) \bigg],
	\end{align}
	we have a two-dimensional optimal stopping problem, with dynamic $(Z,Y)$ as in (\ref{Z}) and (\ref{income}). \\
	In the following, we study (\ref{dis6-17}). Before doing that, similarly to Theorem \ref{dualrelation2}, we have the following theorem that establishes a dual relation between the original problem (\ref{dis6-1}) and the optimal stopping problem (\ref{dis6-17}).
	\begin{theorem}
		The following duality relations hold:
		\begin{align*}
			\bar{V}^B(x',y) = \inf_{z>0}[\bar{v}^B(z,y)+zx'], \quad \bar{v}^B(z,y) = \sup_{x'>-\frac{y}{\kappa}} [\bar{V}^B(x',y)-zx'].
		\end{align*}  
	\end{theorem}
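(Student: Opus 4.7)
The plan is to establish the two duality relations by the same chain-of-inequalities / existence-of-optimizer pattern used in Theorem \ref{dualrelation2}, adapted to the presence of the explicit bequest parcel $q$ and of the piecewise-defined function $\widetilde{u}$. First I would prove the ``easy'' inequality $\bar V^{B}(x',y)\le \inf_{z>0}[\bar v^{B}(z,y)+zx']$: this is already essentially written out in the display immediately preceding (\ref{dis6-17}). Indeed, for any admissible $(c,\pi,B,\eta)\in\mathcal A^B(x',y)$ and any Lagrange multiplier $z>0$, the static budget constraint (\ref{dis6-2}) yields $-z\mathbb E_{x',y}[\xi_\eta X_\eta+\int_0^\eta\xi_t(c_t-Y_t)dt]+zx'\ge 0$, after which applying the definition $\widehat u(p)=\sup_{c\ge0}[u(c)-cp]$ pointwise in $(s,\omega)$ to decouple $c$ and $Z$, and applying Theorem \ref{disthedual} at the stopping time $\eta$ to dominate $e^{-(\rho+m)\eta}\widetilde V^{B}(X_\eta,Y_\eta)$ by $e^{-(\rho+m)\eta}[\widetilde Q^{B}(Z_\eta,Y_\eta)+Z_\eta X_\eta+\widetilde u(Z_\eta)]$, produces exactly the integrand of (\ref{dis6-17}) plus $zx'$. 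Taking the supremum over $(c,\pi,B,\eta)$ and then the infimum over $z>0$ gives the desired bound.

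For the reverse inequality $\bar V^{B}(x',y)\ge \inf_{z>0}[\bar v^{B}(z,y)+zx']$ the idea is to construct a strategy that attains equality throughout the chain above. Given $x'>-y/\kappa$, I would first use the strict convexity and the limiting behaviour of $z\mapsto \bar v^{B}(z,y)$ (the analogue of Proposition \ref{convex}, which follows because $\bar v^{B}$ is a supremum of affine functions in $z$ and $\widetilde Q^{B}+\widetilde u$ is strictly convex in $z$) to pick the unique minimizer $z^*=z^*(x',y)$ solving $x'=-\bar v^{B}_z(z^*,y)$. Let $Z^*$ be the solution of (\ref{Z}) with $Z^*_0=z^*$, and set $\eta^*:=\inf\{t\ge0:Z^*_t\in\mathcal R^B\}$ for the stopping region of $\bar v^{B}$. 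Define the candidate consumption $c^*_t=\mathcal I^{u}(Z^*_t)$ (so that equality holds in the pointwise Legendre step), and the candidate bequest $B^*=B^*_0(Z^*_{\eta^*})$ obtained from the subsection \ref{subsection5.2} maximizer (which may be $0$ when $Z^*_{\eta^*}\ge\overline L$, reflecting the no-insurance corner). The portfolio $\pi^*$ is then forced by the martingale representation theorem applied to the $\mathcal F_t$-martingale $M_t:=\mathbb E_{x',y}[\xi_\eta X_\eta+\int_0^\eta \xi_s c^*_s ds\,|\,\mathcal F_t]$, so that $X^{c^*,\pi^*,B^*,\eta^*}$ satisfies the static budget constraint with equality.

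Next I would verify admissibility of $(c^*,\pi^*,B^*,\eta^*)\in\mathcal A^B(x',y)$ by identifying $X^*_t$ with $-\bar v^{B}_z(Z^*_t,Y_t)$ before $\eta^*$ and with the post-purchase optimal wealth process from the proof of Theorem \ref{disthedual} after $\eta^*$; this automatically delivers condition (iv) of Definition \ref{admissiblecontrol2} through the bound $\bar v^{B}_z\le -y/\kappa+mB^*/r$ on $\mathcal R^B$. The strong Markov property at $\eta^*$ and Theorem \ref{disthedual} then give equality in the final $\le$ step of the chain, while equality in the Legendre step holds by construction, so that $J^{B}_{x',y}(c^*,\pi^*,B^*,\eta^*)=\bar v^{B}(z^*,y)+z^*x'=\inf_{z>0}[\bar v^{B}(z,y)+zx']$, proving the first identity. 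The second identity is then a direct consequence of convex-conjugate duality: because $z\mapsto \bar v^{B}(z,y)+zx'$ is strictly convex with the limiting slopes from Proposition \ref{convex} (translated by $\widetilde u_z$ on $\{z<\overline L\}$), the biconjugate relation $\bar v^{B}(z,y)=\sup_{x'>-y/\kappa}[\bar V^{B}(x',y)-zx']$ follows from the first.

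The main obstacle is the construction and admissibility check at the piecewise regime change of $\widetilde u$ at $z=\overline L$: when $Z^*_{\eta^*}\ge\overline L$ the candidate bequest collapses to zero and the post-purchase problem degenerates; one has to verify carefully that the duality of Theorem \ref{disthedual} still supplies an admissible post-purchase portfolio and that the smooth pasting built into the dual stopping problem (\ref{dis6-17}) is consistent across this threshold. Apart from this, the remaining steps are routine adaptations of those in Section \ref{sec3-3} and the proof of Theorem \ref{dualrelation2}.
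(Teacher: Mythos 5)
Your proposal follows exactly the route the paper intends: the paper states this theorem \emph{without proof}, remarking only that it holds ``similarly to Theorem \ref{dualrelation2}'', and your reconstruction --- the easy inequality from the chain of estimates preceding (\ref{dis6-17}), the reverse inequality by constructing $(c^*,\pi^*,B^*,\eta^*)$ from the dual minimizer $z^*$ together with Theorem \ref{disthedual} and a martingale-representation argument for $\pi^*$, and the second identity by convex conjugation --- is precisely that adaptation, including a correct identification of the genuinely new difficulty at the threshold $z=\overline{L}$ where the candidate bequest collapses to zero.

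One sub-justification, however, needs repair (the conclusion it supports is true). You claim strict convexity of $z\mapsto\bar{v}^B(z,y)$ ``because $\bar{v}^B$ is a supremum of affine functions in $z$ and $\widetilde{Q}^B+\widetilde{u}$ is strictly convex in $z$''. That is the decomposition argument of Proposition \ref{convex}, but it does not transfer verbatim to this extension: the reduced stopping problem $\widetilde{v}^B=\bar{v}^B-\widetilde{Q}^B-\widetilde{u}$ of (\ref{dis6-18}) has running payoff involving $-K\widetilde{u}(Z_s)$ and an indicator split at $\overline{L}$, which is not affine --- indeed not even convex --- in $z$, so $\widetilde{v}^B$ is not a supremum of affine functions and its convexity is not automatic. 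The fix is to argue on the unreduced representation (\ref{dis6-17}) directly: for each fixed $\eta$ the payoff is an expectation of convex functions of $z$, since $z\mapsto zP_s$ is linear and $\widehat{u}$, $\widetilde{Q}^B(\cdot,y)$ and $\widetilde{u}$ are all convex (for $\widetilde{u}$ this uses $\widetilde{u}''>0$ on $(0,\overline{L})$, constancy on $[\overline{L},\infty)$, and the $C^1$ pasting $\widetilde{u}'(\overline{L})=0$); hence $\bar{v}^B$ is convex as a supremum over $\eta$ of convex functions, with strictness inherited from $\widetilde{Q}^B$ at an attained optimal stopping time. With that substitution, and the corresponding limiting-slope computation for $-\bar{v}^B_z$, your argument goes through as written.
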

	To study the optimal stopping problem (\ref{dis6-17}), it is convenient to introduce the function
	\begin{align}\label{dis6-18}
		\widetilde{v}^B(z,y):=\bar{v}^B(z,y)-{\widetilde{Q}^B}(z,y)-\widetilde{u}(z).
	\end{align}
	Applying It\^o's formula to $\{ e^{-(\rho+m)t }[\widetilde{Q}^B(Z_t,Y_t)+\widetilde{u}(Z_t)] , t \in [0,\eta]  \}$, and taking conditional expectations we have 
	\begin{align*}
		\mathbb{E}&_{z,y}\Big[ e^{-(\rho+m) \eta }\Big(\widetilde{Q}^B(Z_{\eta},Y_\eta)+\widetilde{u}(Z_\eta)\Big) \Big]= \widetilde{Q}^B(z,y)+\widetilde{u}(z)\\&+ \mathbb{E}_{z,y}\bigg[ \int_0^{\eta}  e^{-(\rho+m) s }\mathcal{L}\Big(\widetilde{Q}^B( Z_s,Y_s)+\widetilde{u}(Z_s)\Big) ds               \bigg ],
	\end{align*}
	where $\mathcal{L}$ is defined in (\ref{L}). Combining (\ref{dis6-17}) and (\ref{dis6-18}), we have 
	\begin{align*}
		\widetilde{v}^B(z,y) =&\ \sup_{  \eta \in \mathcal{S}
		} \mathbb{E}_{z,y}\bigg[    \int_0^{\eta }  e^{- (\rho+ m) s} \Big(\widehat{u}(Z_s)+m\,u(q)+Z_sY_s\Big) ds  \nonumber \\
		&+ \int_{0}^{\eta} e^{- (\rho+ m) s}\mathcal{L}\Big(\widetilde{Q}^B(Z_s,Y_s)+\widetilde{u}(Z_s)\Big) ds               \bigg ] \nonumber \\
		=&\ \sup_{ \eta \in \mathcal{S}} \mathbb{E}_{z,y}\bigg[    \int_0^{\eta}  e^{- (\rho+ m) s} \Big(m\,u(q)-K \widetilde{u}(Z_s)+\frac{Z_smqK}{r}\Big)\mathds{1}_{\{Z_s < \bar{L} \}} ds \nonumber \\ &+  \int_0^{\eta}  e^{- (\rho+ m) s} \Big(m\,u(q)-m\,u(lq)\Big)\mathds{1}_{\{Z_s \geq \bar{L} \}} ds\bigg],
	\end{align*}
	where $K= \frac{1}{\gamma}( \rho+m-r(1-\gamma)-\frac{1-\gamma}{2\gamma} \theta^2)>0$ due to Assumption \ref{assume2}, and where we have used the fact that (cf.\ (\ref{dis6-9}))
	\begin{align*}
		\mathcal{L}(\widetilde{Q}^B(z,y)+\widetilde{u}(z))&=-\widehat{u}(z)-zy-K \widetilde{u}(z)+ \frac{zKmq}{r}, \quad z< \overline{L},
	\end{align*}
	and 
	\begin{align*}
		\mathcal{L}(\widetilde{Q}^B(z,y)+\widetilde{u}(z))&=-\widehat{u}(z)-zy-mu(lq), \quad z\geq  \overline{L}.
	\end{align*}
	By the dynamic programming principle, we expect that $\widetilde{v}^B$ identifies with a suitable solution $\widetilde{w}^B$ to the HJB equation
	\begin{align}\label{dishjb1}
		\max \{ [\mathbb{L}_Z-(\rho+m)]\widetilde{w}^B+(m\,u(q)-K\widetilde{u}(z)+\frac{zKmq}{r})\mathds{1}_{\{z<\overline{L}\}}\nonumber\\+ (m\,u(q)-m\,u(lq))\mathds{1}_{\{z\geq \overline{L}\}}, -\widetilde{w}^B  \} =0, \quad z>0,
	\end{align}
	where $\mathbb{L}_Z$ is given by (\ref{operator}).}

As usual in optimal stopping theory, we let
\begin{align*}
\mathcal{\widetilde{ C}}:=\{ z \in \mathbb{R}_+: \widetilde{v}^B(z)>0  \},\quad
\mathcal{\widetilde{ R}}:=\{ z \in \mathbb{R}_+: \widetilde{v}^B(z)=0  \}
\end{align*}
be the so-called continuation (waiting) and stopping (purchasing) regions, respectively. 
	


We expect that the optimal purchasing strategy should be of barrier-type and we guess that $\widetilde{w}$ satisfies 
\begin{equation}\label{disguess}
	\left \{
	\begin{aligned}
		\widetilde{w}(z)=0, \quad \forall &z\in(0,\widetilde{b}],\\
		\Big(\mathbb{L}_Z-(\rho+m)\Big)\widetilde{w}(z)+mu(q)-K \widetilde{u}(z)+\frac{z K mq}{r} =0,\quad \forall &z\in (\widetilde{b},\bar{L}], \\
\Big(\mathbb{L}_Z-(\rho+m)\Big)\widetilde{w}(z)+mu(q)-mu(lq) =0,\quad \forall &z\in (\bar{L},+\infty), \\\end{aligned}
	\right.
\end{equation}
Solving (\ref{disguess}), we have 
\begin{equation}\label{dissolution}
	\widetilde{w}(z)=\left \{
	\begin{aligned}
		0, \quad \forall &z\in(0,\widetilde{b}],\\
		A_1 z^{\alpha_1}+A_2 z^{\alpha_2}+\frac{mu(q)}{\rho+m} -Cz^{\frac{\gamma-1}{\gamma}},\quad \forall &z\in (\widetilde{b},\bar{L}], \\
		B_1 z^{\alpha_1} +\Delta, \quad \forall &z\in (\bar{L},+\infty),			\end{aligned}
	\right.
\end{equation}
where $\Delta:= \frac{m u(q)-mu(lq)}{\rho+m}$ and $C:=ml^{\frac{1-\gamma}{\gamma}} \frac{\gamma}{1-\gamma}r^{\frac{1-\gamma}{\gamma}}(\rho+m)^{-\frac{1}{\gamma}}$, $A_1,A_2$ and $B_1$ are undetermined constants and  $\alpha_1<0< 1<\alpha_2$ are the real roots of the algebraic equation
\begin{align*}
\frac{1}{2}\theta^2 \alpha^2+(\rho-r+m-\frac{1}{2}\theta^2)\alpha-(\rho+m)=0.
\end{align*}
Then we appeal to the so-called ``smooth fit principle", which dictate that the candidate value function $\widetilde{w}(z)$ should be $C^1$ in $\widetilde{b}$ and $\bar{L}$. These conditions give rise to the system of equations
\begin{equation}\label{eq5-15-2}
	\begin{cases}
		A_1{\widetilde{b}}^{\alpha_1}+A_2 {\widetilde{b}}^{\alpha_2}+ \frac{m u(q)}{\rho+m}-C {\widetilde{b}}^{\frac{\gamma-1}{\gamma}}=0,\\
		A_1 \alpha_1 {\widetilde{b}}^{\alpha_1-1}+A_2 \alpha_2 {\widetilde{b}}^{\alpha_2-1}-C \frac{\gamma-1}{\gamma}
		{\widetilde{b}}^{\frac{-1}{\gamma}}=0,\\
		A_1 \bar{L}^{\alpha_1}+A_2 \bar{L}^{\alpha_2}+ \frac{m u(q)}{\rho+m}-C \bar{L}^{\frac{\gamma-1}{\gamma}}=B_1 \bar{L}^{\alpha_1}+\Delta,\\
		A_1 \alpha_1 \bar{L}^{\alpha_1-1}+A_2 \alpha_2 \bar{L}^{\alpha_2-1}-C \frac{\gamma-1}{\gamma} \bar{L}^{\frac{-1}{\gamma}}=B_1 \alpha_1 \bar{L}^{\alpha_1-1}.		 	\end{cases}
\end{equation}

\begin{theorem}\label{theorem5-5}
	Suppose that there exists a solution $\widetilde{b}<\bar{L}$ solving the system of equations (\ref{eq5-15-2}) and $m u(\widetilde{b})-K \widetilde{u}(\widetilde{b})+\frac{\widetilde{b}Kmq}{r}\leq 0 $, $\widetilde{w}(z)\geq 0$ for any $z$.  The function $\widetilde{w}$ given by (\ref{dissolution}) identifies with the dual value function $\widetilde{v}^B$ and the optimal purchasing time is $\eta^*(z;\widetilde{b})=\inf\{t\geq 0: Z^z_t \leq \widetilde{b} \}$.
	\end{theorem}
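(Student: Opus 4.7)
The plan is to proceed via a classical verification argument. Let $f(z):=[mu(q)-K\widetilde{u}(z)+zKmq/r]\mathds{1}_{\{z<\bar{L}\}}+[mu(q)-mu(lq)]\mathds{1}_{\{z\geq \bar{L}\}}$ denote the running reward appearing in the reduced dual stopping problem whose value is $\widetilde{v}^B$. I would consider the process
\begin{align*}
M_t:=e^{-(\rho+m)t}\widetilde{w}(Z_t)+\int_0^t e^{-(\rho+m)s}f(Z_s)\,ds,
\end{align*}
and show that it is a local supermartingale in general, and a local martingale up to $\eta^*:=\inf\{t\geq 0:Z_t\leq \widetilde{b}\}$. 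Combined with the hypotheses $\widetilde{w}\geq 0$ and $\widetilde{w}(\widetilde{b})=0$, this gives $\widetilde{w}(z)\geq \widetilde{v}^B(z)$, while the choice $\eta=\eta^*$ delivers the reverse inequality.

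The first technical step is to justify It\^o's formula for the piecewise-defined candidate (\ref{dissolution}). By construction of the linear system (\ref{eq5-15-2}) via the smooth-fit principle, $\widetilde{w}\in C^1((0,\infty))$ and is of class $C^2$ on each of the three sub-intervals $(0,\widetilde{b})$, $(\widetilde{b},\bar{L})$ and $(\bar{L},\infty)$. Its derivative is thus absolutely continuous, so that a generalized It\^o formula (e.g. Peskir's local-time-on-curves result in \cite{peskir2006optimal}) applies without producing any local-time contribution at the two kinks, yielding
\begin{align*}
M_t=\widetilde{w}(z)+\int_0^t e^{-(\rho+m)s}\big[(\mathbb{L}_Z-(\rho+m))\widetilde{w}(Z_s)+f(Z_s)\big]\,ds+N_t,
\end{align*}
with $N$ a continuous local martingale.

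Next I would verify that the bracketed drift is non-positive on $(0,\infty)$ and vanishes on the continuation region $(\widetilde{b},\infty)$. The latter is built into (\ref{disguess}) by the very definition of $A_1,A_2,B_1$. On the stopping region $(0,\widetilde{b}]$, where $\widetilde{w}\equiv 0$, the drift reduces to $f(z)$, and the sign assumption at $\widetilde{b}$ combined with the monotonicity of $z\mapsto f(z)=mu(q)-\widetilde{K}z^{(\gamma-1)/\gamma}$ on $(0,\bar{L}]$ (with $\widetilde{K}>0$, coming from the explicit form of $\widetilde{u}$ in Section \ref{subsection5.2} and the fact that $\gamma\in(0,1)$) propagates the inequality $f(z)\leq 0$ to the entire stopping region.

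Putting everything together, $M$ is a local supermartingale and a local martingale on $[0,\eta^*]$. A standard localisation through a sequence $\eta_n\uparrow\infty$, together with dominated/monotone convergence based on the polynomial bounds on $\widetilde{w}$ and $f$ and the moment estimates on $Z$ that follow from Assumption \ref{assume2}, give, for every $\eta\in\mathcal{S}$,
\begin{align*}
\widetilde{w}(z)\;\geq\;\mathbb{E}_z\!\bigg[e^{-(\rho+m)\eta}\widetilde{w}(Z_\eta)+\int_0^\eta e^{-(\rho+m)s}f(Z_s)\,ds\bigg]\;\geq\;\mathbb{E}_z\!\bigg[\int_0^\eta e^{-(\rho+m)s}f(Z_s)\,ds\bigg],
\end{align*}
using $\widetilde{w}\geq 0$ in the last step. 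Taking the supremum over $\eta$ yields $\widetilde{w}\geq \widetilde{v}^B$, while the martingale property on $[0,\eta^*]$ together with $\widetilde{w}(Z_{\eta^*})=0$ gives the equality and the optimality of $\eta^*$. I expect the main obstacle to be precisely what is absorbed into the standing hypotheses of the theorem: the algebraic system (\ref{eq5-15-2}) need not in general admit a solution with $\widetilde{b}<\bar{L}$, and one does not have a clean \emph{a priori} argument guaranteeing $\widetilde{w}\geq 0$ and $f(\widetilde{b})\leq 0$ outside of a case-by-case analysis, so these regions of the parameter space would have to be investigated separately if one wished to remove the assumptions.
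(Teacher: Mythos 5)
Your proposal is correct and follows essentially the same route as the paper: verify the variational inequality by noting that on the stopping region the drift reduces to the running reward $f$, which is increasing in $z$, so the sign hypothesis at $\widetilde{b}$ propagates downward, while $\widetilde{w}\geq 0$ holds by assumption; then conclude with the standard It\^o/localization verification argument (the paper does this by invoking the proof of Theorem \ref{verification}). The only differences are cosmetic: you treat the second kink at $\bar{L}$ explicitly via a generalized It\^o formula, which the paper leaves implicit, and your parenthetical restriction to $\gamma\in(0,1)$ when justifying the monotonicity of $f$ is unnecessary, since the cancellation of the $zKmq/r$ terms shows $f'(z)>0$ for every $\gamma>0$, which matters because the paper's own numerical illustration uses $\gamma=1.8$.
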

\begin{proof}
	The proof is given in Appendix \ref{prooftheorem5-5}.
\end{proof}

To ensure the verifiability of the conditions in Theorem \ref{theorem5-5}, Figure \ref{fig9} illustrates the value function as defined in (\ref{dissolution}). It is evident from the figure that $\widetilde{b}<\bar{L}$. Moreover, our numerical example suggests that $\widetilde{b}<\bar{L}$ holds over a broader range of parameters, which further supports the validity of Theorem \ref{theorem5-5}.

\begin{figure}[htbp]
		\setlength{\abovecaptionskip}{0pt}
	\setlength{\belowcaptionskip}{5pt}
		\centering
		\includegraphics[width=0.5\textwidth]{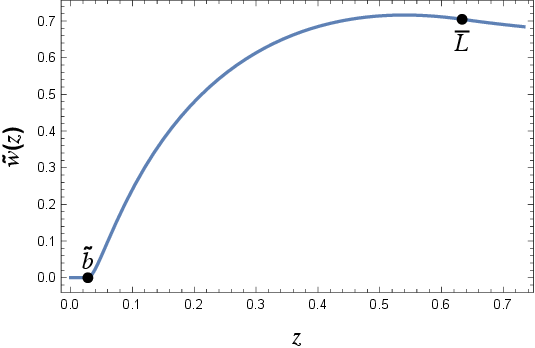}
		\caption{Value function in (\ref{dissolution}) with parameters: $\mu=0.05, r=\rho=0.01, \sigma=0.22, m=0.0175,\gamma=1.8,q=1,l=0.5$. }
		\label{fig9}
\end{figure}
However, for certain parameter choices, numerical solutions of the system of equations (\ref{eq5-15-2}) suggest that $\widetilde{b}\geq \bar{L}$. In these cases, we expect that the optimal purchasing rule may be triggered by two or more boundaries. A rigorous analysis of this possibility is beyond the scope of the current work and is left for future research.


\subsection{Age-dependent force of mortality}

In this section, we extend the previous constant force of mortality case to the age-dependent force of mortality. For simplicity, we consider the bequest amount to be predetermined, i.e., $B$ is a given constant. Now we assume an agent whose force of mortality rate evolves according to the standard Gompertz model (see, e.g., \cite{gompertz1825nature}). The process $M:=\{M_s, s\geq t\}$ thus follows the dynamics
 \begin{align}\label{2-1}
d M_s = aM_s ds ,  \ M_t =m>0,
\end{align}
with $a>0$. As in section \ref{sec5-1}, we assume the policyholder divides her initial wealth into two parts: $x=(1-\phi)x+ \phi x:= q + x^{\prime}$, where $q:=(1-\phi)x>0 $ is directly designated for inheritance, while $ x^{\prime}:=\phi x$ is allocated for the policyholder's personal consumption and life insurance decisions. The agent's wealth $X$ evolves as 
\begin{align*}
	d X^{c,\pi, \eta}_t =  [\pi_t(\mu-r)+rX^{c,\pi,\eta}_t-c_t-h_t \mathds{1}_{\{t \geq \eta \}}+Y_t]dt +\pi_t \sigma dW_t, \quad X_0^{c,\pi, \eta}= x',\end{align*}
where we assume $h_t={ M_t}\cdot B$ for simplicity.  The agent's aim is then to maximize the following expected lifetime utility 
\begin{align}
    \mathbb{E}\bigg[ \Big( \int^\tau_0e^{-\rho t}u(c_t)dt +e^{-\rho \tau}u(q) \Big)\mathds{1}_{\{ \eta \geq \tau\}} + \mathds{1}_{\{\eta<\tau\}}\Big( \int^\tau_0 e^{-\rho t}u(c_t)dt+e^{-\rho \tau}u(l(q+B)) \Big) \bigg],
\end{align}
where $q>0$ is a given constant. Then we introduce the set of admissible strategies $\mathcal{A}^M(x,y,m)$: 
\begin{definition}
Let $(x,y,m) \in \mathcal{O} \times \mathbb{R}_+ $ be given and fixed. The triplet of choices $(c,\pi, \eta)$ is called an \textbf{admissible strategy} for $(x,y,m)$, and we write $(c,\pi, \eta) \in \mathcal{A}^M(x,y,m)$, if it satisfies the following conditions: 
\begin{enumerate}[label=(\roman*)]
	\item  $c$ and $\pi$ are progressively measurable with respect to $\mathbb{F}$, $\eta \in \mathcal{S}$;
	\item $c_t \geq 0$ for all $t \geq 0$ and $\int_0^t(c_s+|\pi_s|^2)ds <\infty$ for all $t\geq 0$ $ \ \mathbb{P}$-a.s.; 
	\item $X^{c,\pi, \eta}_t +g_t >  \frac{B \cdot M_t}{r}\mathds{1}_{\{t\geq  \eta\}}$ for all $t\geq 0$, where $g_t$ is defined in (\ref{gt}).
\end{enumerate}
\end{definition}

Similar to (\ref{eq5-2}), given the Markovian setting, the agent aims at determining
\begin{align}\label{eq5-15-1}
V^M(x,y,m):=&\sup_{(c, \pi,\eta) \in \mathcal{A}^M }\mathbb{E}_{x,y,m}\bigg[\int^\eta_0e^{-\int^t_0(\rho+M_s) ds}\Big(u(c_t)+M_t \,u(q)\Big)dt \nonumber\\  &+ \int^\infty_\eta e^{-\int^t_0(\rho+M_s) ds} \Big(u(c_t)+M_t\,u(l(q+B))\Big) dt  \bigg].
\end{align}

In the following, we shall focus on (\ref{eq5-15-1}). Similar to our previous procedure, we could also use the duality method to transform the original control-stopping problem (\ref{eq5-15-1}) to the pure stopping problem and establish the corresponding duality relations. Now the dual value function is given by 
\begin{align*}
	J^M(z,m):=\sup_{\eta}\mathbb{E}_{z,m}\bigg[\int^\eta_0 e^{-\int^t_0 (\rho+M_s)ds}\Big(M_t u(q)-M_t u(l(q+B))+Z_t{ M_t} B\Big)dt \bigg],
\end{align*}
where $Z_t$ satisfies 
\begin{align*}
	dZ_t=  (\rho-r+ M_t)Z_tdt - \theta Z_t dW_t, \quad Z_0= z.
	\end{align*}
When $M_t\equiv m$, the above dual optimal stopping problem reduces to (\ref{disos}). However, as the problem is now two-dimensional $(Z,M)$, the classical ``guess and verify" method is inapplicable. Following \cite{ferrari2023optimal}, we use a probabilistic approach with free-boundary analysis.

As usual in optimal stopping theory, we let
\begin{align*}
\mathcal{C}^M:=\{ (z,m) \in \mathbb{R}_+^2: J^M(z,m)>0  \},\quad
\mathcal{R}^M:=\{ (z,m) \in \mathbb{R}_+^2: J^M(z,m)=0  \}
\end{align*}
be the so-called continuation (waiting) and stopping (purchasing) regions, respectively. We denote by $\partial \mathcal{C}^M$ the boundary of the set $\mathcal{C}^M.$
Then we introduce the stopping time 
\begin{align*}
\eta^*:= \inf\{t \geq 0 :(Z_t,M_t) \in \mathcal{R}^M\}, \quad \mathbb{P}_{z,m}\text{-}a.s.,
\end{align*}
with $\inf \emptyset = + \infty$, one has that $\eta^*$ is optimal for $J^M(z,m)$ for any $(z,m)\in \mathbb{R}_+^2$.

Now we show that the boundary $\partial \mathcal{C}^M$ can be represented by a function $b^M(m)$. Since the monotonicity of $J^M$ with respect to $z$, we have the following result:
\begin{pro}\label{ageboundary}
	There exists a function $b^M:\mathbb{R}_+ \mapsto (0,\infty)$ such that 
	\begin{align*}
		\mathcal{R}^M=\{(z,m)\in \mathbb{R}^2_+: 0<z\leq b^M(m)\}.
	\end{align*}
\end{pro}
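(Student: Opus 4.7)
The plan is to mirror the one-dimensional argument leading to Lemma \ref{lemmab}, exploiting the fact that although the problem is now two-dimensional, the mortality coordinate $M$ is \emph{deterministic} along trajectories starting from $m$ (since \eqref{2-1} gives $M_t = m e^{at}$). Thus, for each fixed initial $m$, the randomness in $J^M$ enters only through $Z$, and the geometry of $\mathcal{R}^M$ in the $z$-variable is determined by a single real threshold $b^M(m)$. I would define
\begin{equation*}
b^M(m) := \sup\{z > 0 : J^M(z,m) = 0\}, \qquad \sup\emptyset = 0,
\end{equation*}
and show that the set on the right-hand side is a nonempty bounded subinterval of $\mathbb{R}_+$.

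The key ingredient is the monotonicity $z \mapsto J^M(z,m)$ non-decreasing. This follows from the linear scaling $Z_t^z = z Z_t^1$ (a direct consequence of the multiplicative SDE \eqref{Z} with $m$ replaced by $M_t$): for any admissible stopping time $\eta$ and $z_1 \leq z_2$, the integrand in the definition of $J^M$ evaluated at $Z^{z_2}$ dominates the one at $Z^{z_1}$, because the $Z_t M_t B$ term is increasing in the initial $z$ while the negative running term $M_t\bigl(u(q) - u(l(q+B))\bigr)$ does not depend on $z$. Once monotonicity is established, any level set $\{z : J^M(z,m) = 0\}$ is a downward-closed interval, which yields the claimed structural form of $\mathcal{R}^M$ via $\mathcal{R}^M = \{(z,m) : 0 < z \leq b^M(m)\}$.

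It remains to verify that $b^M(m) \in (0,\infty)$ for every $m > 0$. For the strict positivity, note that $u(q) - u(l(q+B)) < 0$ (since $u$ is increasing and $l(q+B) > q$), so for $z$ sufficiently small the integrand is strictly negative on a neighbourhood of $t = 0$ with high probability, whence $J^M(z,m) = 0$ is attained by choosing $\eta = 0$; this gives $b^M(m) > 0$. For finiteness, one must exhibit, for every $m$, a $z$ large enough that some explicit stopping time $\eta > 0$ produces a strictly positive expectation: taking for instance $\eta$ equal to a small deterministic time $\delta > 0$ and using that $\mathbb{E}_{z,m}[Z_t M_t B] \sim z m B$ dominates the bounded negative contribution $m|u(q) - u(l(q+B))|\,\delta$ as $z \to \infty$ delivers $J^M(z,m) > 0$ for $z$ large, hence $b^M(m) < \infty$.

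The main obstacle I foresee is controlling the finiteness of $b^M(m)$ uniformly enough to conclude that the map $m \mapsto b^M(m)$ is well-defined on all of $\mathbb{R}_+$; the multiplicative structure of $Z$ combined with the deterministic growth of $M$ makes the dominating integrability estimates more delicate than in the constant-$m$ case, because the discount factor $e^{-\int_0^t(\rho + M_s)ds}$ now decays super-exponentially. A convenient way around this is to upper-bound $M_s \leq m e^{a\delta}$ on $[0,\delta]$ and to use $\mathbb{E}_{z,m}[Z_s] = z e^{(\rho - r)s}\mathbb{E}[e^{\int_0^s M_u du}]$-type estimates on the short horizon $[0,\delta]$, thereby obtaining the positivity of $J^M(z,m)$ for large $z$ without having to analyse the long-run behaviour of the joint process $(Z,M)$.
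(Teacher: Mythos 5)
Your structural argument --- nonnegativity of $J^M$ (take $\eta=0$), monotonicity of $z\mapsto J^M(z,m)$ via the scaling $Z^z_t=zZ^1_t$, hence downward-closed $z$-sections of $\mathcal{R}^M$ and $b^M(m):=\sup\{z>0: J^M(z,m)=0\}$ --- is exactly the paper's justification; indeed the paper offers nothing beyond the remark that the result follows from ``the monotonicity of $J^M$ with respect to $z$''. Your finiteness argument is also sound, and can be made clean by the identity $e^{-\int_0^t(\rho+M_s)ds}Z_t=z\xi_t$ with $\mathbb{E}[\xi_t]=e^{-rt}$: for a deterministic horizon $\delta>0$ the expected reward equals $zB\int_0^\delta M_te^{-rt}dt$ minus a constant independent of $z$, so $J^M(z,m)>0$ for $z$ large. (One small point you gloss over: $\mathcal{R}^M=\{z\le b^M(m)\}$ also needs $J^M(b^M(m),m)=0$, i.e.\ closedness at the right endpoint; this follows since $J^M(\cdot,m)$ is a supremum of affine functions of $z$, hence convex and continuous where finite.)

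The genuine gap is the positivity step $b^M(m)>0$, which the codomain $(0,\infty)$ in the statement requires. The inference ``the integrand is strictly negative near $t=0$ with high probability, whence $J^M(z,m)=0$ is attained by choosing $\eta=0$'' is a non sequitur: $\eta=0$ always yields zero and only re-proves $J^M\ge 0$; to conclude $J^M(z,m)=0$ you must show that \emph{every} stopping time has nonpositive expected reward. Local negativity does not give this, because the agent may wait: the discounted running reward is $zBM_t\xi_t-M_te^{-\rho t-\frac{m}{a}(e^{at}-1)}\bigl(u(l(q+B))-u(q)\bigr)$, and since $\xi_t=e^{-rt-\theta W_t-\frac{1}{2}\theta^2 t}$ has unbounded support, this turns positive with positive probability at later times no matter how small $z$ is; stopping rules that exploit precisely those events must be excluded quantitatively. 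That exclusion is the substantive content of the free-boundary analysis: in the constant-mortality case the strict positivity of the thresholds $b$ and $\bar{b}$ in \eqref{b} and \eqref{disb} is not obtained by any local argument but falls out of the explicit smooth-fit solution together with the verification theorems (Theorems \ref{verification} and \ref{disverification}), and in the age-dependent case the paper itself gives no proof, deferring to the probabilistic free-boundary analysis of \cite{ferrari2023optimal}. As written, your proposal therefore establishes the interval structure and $b^M(m)<\infty$, but not $b^M(m)>0$; closing this would require a verification-type ingredient, e.g.\ a nonnegative supersolution of the variational inequality for $J^M$ that vanishes on a strip $\{0<z\le\epsilon(m)\}$.
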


Different from the constant force of mortality case in which the boundary is a constant, now the boundary is a function of the force of mortality process. Following the analysis of \cite{ferrari2023optimal}, we could also determine a nonlinear integral equation that uniquely characterizes the free boundary.  
\begin{pro}
	The optimal boundary $b^M$ is the unique continuous solution to the following nonlinear integral equation: For all $m\in \mathbb{R}_+$,
	\begin{align}\label{eq5-19}
		0= \mathbb{E}_{b^M(m),m}\bigg[\int^\infty_0 e^{-\int^t_0 (\rho+M_s)ds}\Big(M_t u(q)-M_t u(l(q+B))\nonumber\\+Z_tM_t B\Big)\mathds{1}_{\{ Z_t\geq b^M(M_t)  \} } dt \bigg], \ m>0.
			\end{align}
\end{pro}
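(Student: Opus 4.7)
The plan is to follow the well-established ``free-boundary integral equation" paradigm developed by Peskir and adapted in \cite{ferrari2023optimal}. First, I would collect the necessary preliminary regularity properties of the value function $J^M$ and of the boundary $b^M$. Using the monotonicity of $J^M$ in $z$ (already implicit in Proposition \ref{ageboundary}) together with the continuity of the map $(z,m)\mapsto J^M(z,m)$, I would show that $b^M$ is finite, strictly positive, and continuous in $m\in\mathbb{R}_+$. The strict positivity follows because the instantaneous gain rate $mu(q)-mu(l(q+B))+zmB$ is strictly negative near $z=0$, so stopping immediately dominates continuation for small $z$. Upper bounds on $b^M$ come from analyzing where the integrand first becomes nonnegative, i.e.\ from the algebraic equation $m u(q)-m u(l(q+B))+z\,mB=0$.

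Next, I would derive the integral equation by applying the local time-space formula of \cite{peskir2006optimal} (a version of Itô's formula valid across a smooth boundary) to the process
\begin{equation*}
t\;\longmapsto\; e^{-\int_0^t(\rho+M_s)\,ds}J^M(Z_t,M_t),
\end{equation*}
on $[0,T]$ for arbitrary $T>0$. Inside the continuation set $\mathcal{C}^M$, the value function $J^M$ solves the PDE
\begin{equation*}
\tfrac12\theta^2 z^2 J^M_{zz}+(\rho-r+m)z J^M_z + am J^M_m - (\rho+m)J^M + m u(q) - m u(l(q+B)) + zmB = 0,
\end{equation*}
while in the stopping region $\mathcal{R}^M$ one has $J^M\equiv 0$. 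By the smooth-fit principle at $b^M(m)$ (standard for non-degenerate diffusions with continuous obstacles; see e.g.\ Chapter IV of \cite{peskir2006optimal}), the local-time term at the free boundary vanishes. Taking expectations under $\mathbb{P}_{z,m}$, sending $T\to\infty$ and using the growth control on $J^M$ (deduced from the obvious domination $J^M(z,m)\leq \mathbb{E}_{z,m}[\int_0^\infty e^{-\int_0^t(\rho+M_s)ds}(Z_tM_tB)^+dt]$) yields
\begin{equation*}
J^M(z,m)=\mathbb{E}_{z,m}\!\left[\int_0^\infty e^{-\int_0^t(\rho+M_s)ds}\bigl(M_tu(q)-M_tu(l(q+B))+Z_tM_tB\bigr)\mathds{1}_{\{Z_t\geq b^M(M_t)\}}dt\right]\!.
\end{equation*}
Evaluating this identity at $z=b^M(m)$, where $J^M(b^M(m),m)=0$, gives exactly \eqref{eq5-19}.

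The main obstacle, and the technical heart of the proof, is uniqueness. I would follow Peskir's four-step procedure: suppose $c:\mathbb{R}_+\to(0,\infty)$ is another continuous function solving \eqref{eq5-19}; define the candidate function $J^c$ by the same integral expression as above but with $b^M$ replaced by $c$; then show successively (i) $J^c(z,m)=0$ for $z\leq c(m)$ (by plugging $z=c(m)$ into \eqref{eq5-19} and using the strong Markov property), (ii) $J^c$ coincides with the obstacle $0$ on $\{z\leq c(m)\}$ and satisfies the same PDE on $\{z>c(m)\}$, (iii) comparison arguments forbid $c(m)>b^M(m)$ on any nonempty set (otherwise $J^c$ would exceed $J^M$ on the region between the curves, contradicting optimality of $b^M$), and (iv) the symmetric argument rules out $c(m)<b^M(m)$ using the sign of the integrand on $\{c(m)<z<b^M(m)\}$. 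Each inequality is obtained by applying the integral representation to well-chosen stopping times (the hitting time of one curve by $(Z,M)$) and exploiting the strict sign of the gain function above and below the true boundary. The delicate point is handling the joint two-dimensional dynamics $(Z,M)$, where $M$ is monotone increasing (deterministic given the starting point) while $Z$ is a geometric-type diffusion whose drift itself depends on $M_t$; this coupling makes the classical scalar arguments of \cite{peskir2006optimal} require the careful adaptations already carried out in \cite{ferrari2023optimal}, which I would invoke rather than redo.
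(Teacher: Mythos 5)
The first thing to note is that the paper contains no proof of this proposition at all: the entire age-dependent subsection is explicitly labelled a \emph{formal} derivation ("we consider only the predetermined-bequest case and formally derive its corresponding integral equation"), and the method is deferred wholesale to \cite{ferrari2023optimal}. Your outline --- regularity of $J^M$ and $b^M$, the local time-space formula and smooth fit to obtain the early-exercise-premium representation, evaluation at $z=b^M(m)$, and Peskir's four-step scheme for uniqueness --- is precisely the programme the paper gestures at, so at the level of strategy your plan is the intended one and in fact more explicit than anything in the paper. (The discrepancy between the indicator $\{Z_t\geq b^M(M_t)\}$ in \eqref{eq5-19} and your $\{Z_t> b^M(M_t)\}$ is immaterial, since $Z$ is a nondegenerate diffusion.)

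There is, however, one concrete step in your sketch that fails, and it exposes a genuine issue with the proposition itself. Your growth control is the domination $J^M(z,m)\leq \mathbb{E}_{z,m}[\int_0^\infty e^{-\int_0^t(\rho+M_s)ds}Z_tM_tB\,dt]$. But from $dZ_t=(\rho-r+M_t)Z_t\,dt-\theta Z_t\,dW_t$ one gets $e^{-\int_0^t(\rho+M_s)ds}Z_t=z\,e^{-rt-\theta W_t-\frac{1}{2}\theta^2 t}$ --- the mortality terms in the discount cancel exactly against those in the drift of $Z$ --- and since $M_t=me^{at}$ is deterministic, $\mathbb{E}_{z,m}\big[e^{-\int_0^t(\rho+M_s)ds}Z_tM_tB\big]=zBm\,e^{(a-r)t}$. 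Hence your dominating quantity equals $zBm/(r-a)$ when $r>a$ and is $+\infty$ whenever $a\geq r$, which is the empirically relevant Gompertz regime (the paper's own calibration has $r=1\%$, far below typical Gompertz growth rates). Worse, in that regime the value function itself is infinite: taking deterministic stopping times $\eta=T\to\infty$, the negative part of the gain, $e^{-\rho t-\frac{m}{a}(e^{at}-1)}me^{at}\big(u(q)-u(l(q+B))\big)$, is integrable because the survival-type factor decays super-exponentially, while the positive part contributes $\int_0^T zBm\,e^{(a-r)t}dt\to\infty$. Thus $J^M\equiv+\infty$, the continuation region is all of $\mathbb{R}_+^2$, no finite boundary exists, and \eqref{eq5-19} is vacuous. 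The root cause is the paper's simplification $h_t=M_tB$ in place of the actuarially fair premium \eqref{premium0}. So your proof (transversality step and all) is salvageable only under an unstated standing assumption such as $r>a$, or after replacing $h_t$ by the fair premium; you should state this restriction explicitly rather than rely on the domination as written.
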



Summarizing these findings, we now explore an alternative scenario where the policyholder sets aside a portion of her initial wealth for inheritance. Within this framework, we consistently identify a wealth threshold beyond which purchasing life insurance becomes optimal, regardless of whether the force of mortality is constant or age-dependent (cf.\ Theorems \ref{disverification}, \ref{theorem5-5}, and Proposition \ref{ageboundary}). This result not only strengthens our previous conclusions but also provides new economic insights into the optimal timing for acquiring life insurance.
}

{It is worth noting that in this section we consider only the predetermined-bequest case and formally derive its corresponding integral equation for free boundary. The controlled-bequest case---with an age-dependent force of mortality---poses substantially greater analytical challenges and requires rigorous mathematical treatment; we therefore defer it to future work.}

\section{Conclusions} \label{sec:con}
{We have investigated}  the optimal timing of life insurance purchase for an agent facing uncertain lifetime and stochastic labor income. The agent can make a choice regarding when to buy life insurance, considering two types of bequests: One with a predetermined amount and the other granting the agent the freedom to determine the bequest amount as an additional variable. The optimization problem is formulated as a stochastic control-stopping problem over a random time horizon, which contains two state variables: Wealth and labor income. We have solved both cases using dual transformation and free-boundary approach, and obtained the analytical solutions for the value functions and optimal policies. We find there are different optimal life insurance purchasing strategies in the two cases and the risk aversion parameter $\gamma$ plays a crucial role. For example, when given a predetermined bequest amount, the agent should buy life insurance whenever her wealth exceeds a labor income-dependent optimal stopping boundary if $\gamma<1$, whereas life insurance should be bought immediately when $\gamma>1$. A detailed numerical study allows to draw interesting economic implications about the sensitivity of the optimal purchasing boundary and the optimal bequest amount with respect to the model's parameters.

{We suggest} several avenues for potential extensions. 
{For example, considering the effects of borrowing constraints (see, e.g., \cite{zeng2016optimal}) or non-hedgeable labor income risk on optimal life insurance decisions could yield valuable insights.} Moreover, incorporating the impact of health shocks in an individual's optimization problem presents an interesting yet complex challenge due to life's unpredictability and the potential for changing health status over time.

%
%
%
%

%

\appendix
\setcounter{subsection}{0}
\renewcommand\thesubsection{A.\arabic{subsection}}
\setcounter{equation}{0}
\renewcommand\theequation{A.\arabic{equation}}
\setcounter{lemma}{0}
   \renewcommand{\thelemma}{\Alph{section}.\arabic{lemma}}

\section{Technical proofs and auxiliary results}\label{appendix}
\renewcommand\theequation{A.\arabic{equation}}
\renewcommand\thesubsection{A.\arabic{subsection}}
\counterwithin{equation}{section}
\counterwithin{subsection}{section}

\subsection{Proof of Proposition \ref{pro3.1} }\label{proofpro3.1}
\begin{proof}
	First, we compute the convex dual of $u$ from (\ref{utility}) (cf.\ (\ref{pt})); that is,
\begin{align}\label{uhat}
\widehat{u}(z)=\frac{\gamma}{1-\gamma}z^{\frac{\gamma-1}{\gamma}}, \quad z>0.
\end{align}

{\raggedleft{Therefore,}} by (\ref{W}) and (\ref{uhat}) we rewrite $\widehat{Q}(z,y)$ as follows
\begin{align}\label{W1}
\widehat{Q}(z,y)=&\  \mathbb{E}_{z,y}\bigg[    \int_0^{\infty }  e^{-(\rho+m)s} \frac{\gamma}{1-\gamma}{Z_s}^{\frac{\gamma-1}{\gamma}}  ds  \bigg]+\mathbb{E}_{z,y}\bigg[ \int^\infty_0 e^{-(\rho+m)s} Z_sY_sds  \bigg]\nonumber\\
&+   \mathbb{E}\bigg[    \int_0^{\infty }  e^{- (\rho+ m) s} m \,  u(lB) ds  \bigg]  \nonumber\\
=&\  z^{\frac{\gamma-1}{\gamma}} \frac{\gamma}{1-\gamma}\int_0^{\infty }  e^{-(\rho+m)s}e^{[\frac{\gamma-1}{\gamma} (\rho+m-r-\frac{1}{2}\theta^2)+ \frac{1}{2}\frac{(\gamma-1)^2 \theta^2}{\gamma^2}] s} ds+\frac{zy}{\kappa} +\frac{m\,u(lB)}{\rho+m}, 
\end{align}
where we have used the explicit expression of $Z$. 
Moreover, due to Assumption \ref{assume2} and (\ref{W1}), we can verify that
\begin{align}\label{W2}
\widehat{Q}(z,y)=  z^{\frac{\gamma-1}{\gamma}} \frac{\gamma}{1-\gamma} \frac{1}{K}+\frac{zy}{\kappa} +\frac{m\,u(lB)}{\rho+m} < \infty,
\end{align} 
where $K:= \frac{1}{\gamma}( \rho+m-r(1-\gamma)-\frac{1-\gamma}{2\gamma} \theta^2)>0$ due to Assumption \ref{assume2}.  
Finally, it is easy to see that $\widehat{Q} \in  C^{2,2}(\mathbb{R}^2_+)$ from (\ref{W2}) and it satisfies (\ref{LW}) by direct calculations.
\end{proof}

{
\subsection{Proof of Theorem \ref{dualrelation1}}\label{proofdualrelation1}
\begin{proof}
	First, we show the duality relations. Since $(c,\pi) \in \mathcal{A}_0(x,y)$ is arbitrary, taking the supremum over $(c,\pi)  \in \mathcal{A}_0(x,y)$ on the left-hand side of (\ref{dual}) and recalling (\ref{vhat}), we get, for any $z>0$,
\begin{align*}
\widehat{V}(x,y) \leq \widehat{Q}(z,y)+z(x-\frac{h}{r}),  
\end{align*}
and thus 
\begin{align}\label{eqa-4-1}
\widehat{Q}(z,y) \geq \sup_{x> \frac{h}{r}-\frac{y}{\kappa}}[\widehat{V}(x,y)-z(x-\frac{h}{r})] \quad \text{and} \quad  \widehat{V}(x,y) \leq \inf_{z>0}[\widehat{Q}(z,y)+z(x-\frac{h}{r})].
\end{align}
Let $y\in \mathbb{R}_+$ be given. Now,  we define 
\begin{align*}
	\mathcal{X}(z):= \mathbb{E}_{z,y}\bigg[  \int_0^{\infty} \xi_{s} \Big(\mathcal{I}^u(zP_s)-Y_s\Big) ds                         \bigg].
	\end{align*}
Due to $\mathcal{I}^u(x)=x^{-\frac{1}{\gamma}}$ and monotone convergence theorem, it is easy to show that the function $\mathcal{X}+\frac{h}{r}$ is strictly decreasing and continuous on $(0,\infty)$, with $\lim_{z\downarrow 0}(\mathcal{X}(z)+\frac{h}{r})=\infty$ and  $\lim_{z\to +\infty}(\mathcal{X}(z)+\frac{h}{r})=\frac{h}{r} -\frac{y}{\kappa}$. Thus, for given $x>\frac{h}{r}-\frac{y}{\kappa}$ there exists a unique $z^*>0$ such that $x=\mathcal{X}(z^*)+\frac{h}{r}$.


{\raggedleft{For}} the reverse inequalities, observe that the equality in (\ref{dual}) holds if and only if 
\begin{align}\label{3-10}
c_s= \mathcal{I}^u(zP_s),
\end{align}
and 
\begin{align}\label{3-11}
 \mathbb{E}_{x,y}\bigg[  \int_0^{\infty} \xi_{s}\Big(c_s-Y_s\Big) ds                         \bigg]  =  x -\frac{h}{r}.
\end{align}
In particular, the equality holds in (\ref{dual}) if (\ref{3-10})
 is satisfied and $x-\frac{h}{r}=\mathcal{X}(z^*)$. This gives $\widehat{Q}(z,y)= \widehat{V}(\mathcal{X}(z^*)+\frac{h}{r},y)-z \mathcal{X}(z^*) \leq \sup_{x>\frac{h}{r}-\frac{y}{\kappa}} \{\widehat{V}(x,y) -z(x-\frac{h}{r})\} $. Combining with (\ref{eqa-4-1}), this completes the proof of $\widehat{Q}(z,y) = \sup_{x>\frac{h}{r}-\frac{y}{\kappa}}\{\widehat{V}(x,y)-z(x-\frac{h}{r})\} $ and shows that, for $z>0$, the maximum in $\widehat{Q}(z,y) \leq  \sup_{x>\frac{h}{r}-\frac{y}{\kappa}}[\widehat{V}(x,y)-z(x-\frac{h}{r})]$ is attained by $x=\mathcal{X}(z^*)+\frac{h}{r}$.
 
 For any $x\in (\frac{h}{r}-\frac{y}{\kappa},\infty)$, we introduce the candidate optimal consumption process $c^*_s:=\mathcal{I}^u(z^*P_s). $
 Then 
 \begin{align}\label{eqa-7-1}
 	\mathbb{E}_{x,y}\bigg[  \int_0^{\infty} \xi_{s}\Big(c_s^*-Y_s\Big) ds                         \bigg]  = \mathcal{X}( z^*)= x -\frac{h}{r}.
 \end{align}
Moreover, Lemma \ref{budget2} guarantees the existence of a candidate optimal portfolio process $\pi^*_s$ such that $(c^*,\pi^*) \in \mathcal{A}_0(x,y)$. By Theorem 3.6.3 in \cite{karatzas1998methods} or Lemma 6.2 in \cite{karatzas2000utility}, one can then show that $(c^*, \pi^*)$ is optimal for the  problem $\widehat{V}$.

Now we present the explicit forms of optimal strategies and the corresponding optimal wealth process. 
From (\ref{eqa-7-1}), (\ref{W2}) and the strong Markov property we have 
\begin{align*}
	X_t^*=\mathcal{X}(Z_t^*)+\frac{h}{r} =-\widehat{Q}_z(Z_t^*,Y_t)+\frac{h}{r}=\frac{(Z_t^*)^{-\frac{1}{\gamma}}}{K}-\frac{Y_t}{\kappa}+\frac{h}{r},
\end{align*}
where $Z_t^*$ is the solution to Equation (\ref{Z}) with the initial condition $z^*$ satisfying $\widehat{Q}_z(z^*,y)+x-\frac{h}{r}=0$.
We then apply It\^o's formula to the optimal wealth process $X^*_t$, 
\begin{align*}
	d X_t^*= \Big[-\frac{1}{\gamma K}Z_t^{-\frac{1}{\gamma}} (\rho-r+m)+\frac{1}{2}\frac{1}{\gamma K} (\frac{1}{\gamma}+1)\theta^2 Z_t^{-\frac{1}{\gamma}}-\frac{\mu_y Y_t}{\kappa} \Big]dt+ \Big[\frac{\theta}{\gamma K}Z_t^{-\frac{1}{\gamma}}-\frac{\sigma_y Y_t}{\kappa} \Big]dW_t.
\end{align*}
Comparing the above equation with the dynamics of wealth in (\ref{wealth}), we have 
\begin{align*}
	\pi^*_t= \frac{\theta}{\gamma \sigma K}(Z_t^*)^{-\frac{1}{\gamma}}-\frac{\sigma_y Y_t}{\kappa \sigma}.
	\end{align*}
	Since $\mathcal{I}^u(x)=x^{-\frac{1}{\gamma}}$, we have optimal consumption process $c_t^*=(Z_t^*)^{-\frac{1}{\gamma}}$.\end{proof}
}

\subsection{Proof of Proposition \ref{finite}}\label{prooffinite}
\begin{proof}
From (\ref{jzy}) and (\ref{4-3}), it is clear that $\widehat{v}(z) \geq 0$ for all $z \in \mathbb{R}_+$. Moreover, from (\ref{hatjz}), we find that
\begin{align*}
&  \sup_{ \eta \in \mathcal{S} } \mathbb{E}_{z}\bigg[\int^\eta_0   Z_t h dt  -\int^\eta_0 e^{-(\rho+m)t} m\, u(lB)dt  \bigg] \nonumber \\
&\leq \mathbb{E}_{z}\bigg[ \int_0^{\infty }  z   h e^{-rt-\frac{1}{2}\theta^2 t-\theta W_t}dt  \bigg] = z\bigg[ \int_0^{\infty } he^{-rt}dt  \bigg] \leq \frac{zh}{r},
\end{align*} 
 which implies the claim.  
\end{proof}

\subsection{Proof of Theorem \ref{verification}}\label{proofverification}
\begin{proof}
The proof is organized in two steps.

\vspace{6pt}
\textbf{Step 1:} First we show that $\widehat{w}(z)$ in (\ref{vz}) satisfies the HJB equation (\ref{hjb1}). By construction, we only need to show that 
\begin{align}\label{4-15}
	\Big(\mathbb{L}_Z-(\rho+m)\Big)\widehat{w}(z)+hz-m\,u(lB)\leq 0, \quad \forall z\leq b,
	\end{align}
and
\begin{align}\label{4-16}
	\widehat{w}(z)\geq 0, \quad \forall z> b.
	\end{align}
	To prove (\ref{4-15}), we define $F(z):=(\mathbb{L}_Z-(\rho+m))\widehat{w}(z)+hz-m\ u(lB)$, which is such that $F(z)=hz-m\,u(lB)$, since $\widehat{w}(z)=0$ when $z\leq b$. We observe that  $F(b)= hb-mu(lB)=\frac{mu(lB)r\alpha_1}{(\rho+m)(\alpha_1-1)}-mu(lB) <0$ if $\rho+m>r$. Therefore, $F'(z)=h>0$ and $F(z)\leq 0$ for all $z\leq b$, due to $F(b)<0$. To prove (\ref{4-16}), we notice that on $(b,\infty), \widehat{w}''(z)=\frac{-h(\alpha_1-1)}{r}\Big( \frac{m\, u(lB)r \alpha_1}{(\rho+m)h(\alpha_1-1)}\Big)^{1-\alpha_1}z^{\alpha_1-2}>0$ and $\widehat{w}'(b)=0$, so that we have $\widehat{w}'(z)\geq 0, z>b$. Since $\widehat{w}(b)=0$, it follows that $\widehat{w}(z)\geq 0$ for all $z>b$.
	
\vspace{6pt}
\textbf{Step 2:} We verify the optimality of $\widehat{w}(z)$ and of the stopping time $\eta(z;b)$ in (\ref{etab}). Note that $\widehat{w}$ in (\ref{vz})  is $C^2$ on $(0,b) \cup (b,\infty)$, but only $C^1$ at $b$. Let $z>0 $ be given and fixed. We first show that $\widehat{w}(z)\geq \widehat{v}(z)$. \begin{sloppypar}  Applying It\^o's formula to the process $\{e^{-(\rho+m)t}\widehat{w}(Z_t), t \geq 0\}$, we find that 
	\begin{align}\label{4-17}
	e^{-(\rho+m)t}\widehat{w}(Z_{t}) =&\  \widehat{w}(z)+ \int^{t}_0 e^{-(\rho+m)s}\Big(\mathbb{L}_Z -(\rho+m)\Big)\widehat{w}(Z_s)\mathds{1}_{\{Z_s \neq b\}} ds\nonumber \\
	 &- \int^{t}_0 e^{-(\rho+m)s} \theta Z_s \widehat{w}'(Z_s)dW_s.
	\end{align}
	\end{sloppypar}
	The HJB equation (\ref{hjb1}) guarantees that $[\mathbb{L}_Z-(\rho+m)]\widehat{w}(z)\leq -hz+m\,u(lB) $ everywhere on $(0,\infty)$ but $b$. Since $\mathbb{P}_z(Z_t=b)=0$ for all $t$ and all $z$, we then obtain from (\ref{4-17}) that
	\begin{align}\label{4-18}
	e^{-(\rho+m)t}\widehat{w}(Z_{t}) \leq &\  \widehat{w}(z)+  \int^{t}_0 e^{-(\rho+m)s}\Big(-hZ_s+m\,u(lB)\Big) ds\nonumber\\
	 &- \int^{t}_0 e^{-(\rho+m)s} \theta Z_s \widehat{w}'(Z_s)dW_s.
	\end{align}	
	\begin{sloppypar} Let $(\nu_n)_{n\geq 1}$ be a localization sequence of (bounded) stopping times $(\nu_n)_{n}$ diverging to infinity as $n \uparrow \infty$ for the continuous local martingale $\{ \int^t_0 e^{-(\rho+m)s} \theta Z_s \widehat{w}'(Z_s)dW_s, t \geq 0\}$. Then for every stopping time $\eta$ of $Z$ we have by (\ref{4-18}) above 
		\begin{align*}
	e^{-(\rho+m)(\nu_n \wedge \eta)}\widehat{w}(Z_{\nu_n \wedge \eta}) \leq &\  \widehat{w}(z)+ \int^{\nu_n \wedge \eta}_0 e^{-(\rho+m)s}\Big(-hZ_s+m\,u(lB)\Big) ds\nonumber\\
	 &-\int^{\nu_n \wedge \eta}_0 e^{-(\rho+m)s} \theta Z_s \widehat{w}'(Z_s)dW_s,
	\end{align*}	
		for all $n \geq 1$. Taking the $\mathbb{P}_z$-expectation, using the optional sampling theorem to conclude that $\mathbb{E}_z[\int^{\nu_n \wedge \eta}_0 e^{-(\rho+m)s} \theta Z_s \widehat{w}'(Z_s)dW_s]=0$ for all $n$, and letting $n \to \infty$, we find by Fatou's lemma that
 \begin{align}\label{4-19}
		\widehat{w}(z) \geq \mathbb{E}_z[e^{-(\rho+m)\eta}\widehat{w}(Z_{\eta})]+ \mathbb{E}_z \bigg[\int^{\eta}_0 e^{-(\rho+m)s}(hZ_s-m\,u(lB)) ds\bigg]	.
		\end{align}
	Thus, by arbitrariness of $\eta \in \mathcal{S}$ and from (\ref{hatjz}), we find $\widehat{w}(z) \geq \widehat{v}(z)$ for all $z>0$.
	\end{sloppypar}

	Now consider the stopping time $\eta(z;b)$ defined in (\ref{etab}).  We observe that the inequality in  (\ref{4-18}), therefore also in (\ref{4-19}), becomes an equality. Moreover, $\widehat{w}(Z_{\eta(z;b)})=0$. Hence 
 \begin{align*}
\widehat{w}(z) =\mathbb{E}_z[e^{-(\rho+m)\eta(z;b)}\widehat{w}(Z_{\eta(z;b)})]+ \mathbb{E}_z \bigg[\int^{\eta(z;b)}_0 e^{-(\rho+m)s}\Big(hZ_s-m\,u(lB)\Big) ds\bigg]\leq \widehat{v}(z).	
\end{align*}
This shows that $\widehat{w}(z)=\widehat{v}(z)$ for all $z>0$ and $\eta(z;b)$ is optimal. 

\end{proof}

{
\subsection{Proof of Theorem \ref{dualrelation2}}\label{proofdualrelation2}
\begin{proof}

First, we show the duality relations. Since $(c,\pi, \tau) \in \mathcal{A}(x,y)$ is arbitrary, taking the supremum over $(c,\pi,\tau)  \in \mathcal{A}(x,y)$ on the left-hand side of (\ref{dual2}), we get, for any $z>0, x>-\frac{y}{\kappa}$, 
\begin{align*}
{V}(x,y) &\leq v(z,y)+ z x, 
\end{align*}
so that $V(x,y) \leq \inf_{z>0}[ v(z,y)+zx]$ and $ v(z,y) \geq \sup_{x>-\frac{y}{\kappa}} [V(x,y)-zx]$.

Now we consider the inverse inequality. It follows from Proposition \ref{convex} that for given $x>-\frac{y}{\kappa}$ there exists  a unique $z^*(x,y)>0$ such that $x=-v_z(z^*(x,y),y)$. 

Let $y\in \mathbb{R}_+$ be given. For given $z>0$ and $t<\eta^*(z)$ (cf.\ (\ref{etab})), let us define $\widehat{\mathcal{X}}(z,y)$ by 
\begin{align}\label{eqa13-1}
	\widehat{\mathcal{X}}(z,y):= \mathbb{E}_{z,y}\bigg[  \int_0^{\eta^*(z)}\xi_{s}\Big(\mathcal{I}^u(zP_s)-Y_s\Big) ds  + \xi_{\eta^*(z)}\Big(-\widehat{Q}_z(Z_{\eta^*(z)},Y_{\eta^*(z)})\Big)                      \bigg].              	
		\end{align}

From Theorem  \ref{dualrelation1}, we know that $X^*_{\eta^*(z^*)}=-\widehat{Q}_z(Z^*_{\eta^*(z^*)},Y_{\eta^*(z^*)})+\frac{h}{r}$. For any $(x,y)\in \mathcal{O} $, we introduce the candidate optimal consumption process $c_s^*=\mathcal{I}^u(z^*(x,y)P_s)$.

It follows that
\begin{align*}
	z^*\widehat{\mathcal{X}}(z^*,y)=&\ z^*  \mathbb{E}_{z^*,y}\bigg[  \int_0^{\eta^*(z^*)}\xi_{s}\Big(c_s^*-Y_s\Big) ds  + \xi_{\eta^*(z^*)}X_{\eta^*(z^*)}                       \bigg]\\
	=&\   \mathbb{E}_{z^*,y}\bigg[  \int_0^{\eta^*(z^*)}e^{-(\rho+m)s}Z^*_s\Big(c_s^*-Y_s\Big) ds  + e^{-(\rho+m)\eta^*(z^*)} Z^*_{\eta^*(z^*)}X_{\eta^*(z^*)}                       \bigg] \\
	=&\         	  \mathbb{E}_{z^*,y}\bigg[  \int_0^{\eta^*(z^*)}e^{-(\rho+m)s}\Big(u(c_s^*)-\widehat{u}(Z^*_s)-Z^*_sY_s\Big) ds  \\&+ e^{-(\rho+m)\eta^*(z^*)}   \widehat{V}(X_{\eta^*(z^*)},Y_{\eta^*(z^*)})-\widehat{Q}(Z_{\eta^*(z^*)},Y_{\eta^*(z^*)})+Z_{\eta^*(z^*)} \frac{h}{r}                   \bigg] \\	
	=&\         	  \mathbb{E}_{z^*,y}\bigg[  \int_0^{\eta^*(z^*)}e^{-(\rho+m)s}u(c_s^*) ds  + e^{-(\rho+m)\eta^*(z^*)}   \widehat{V}(X_{\eta^*(z^*)},Y_{\eta^*(z^*)})                  \bigg] -  v(z^*,y).	\end{align*}
From this, we deduce that 
\begin{align*}
	v(z^*,y)+z^*\widehat{\mathcal{X}}(z^*,y)&= \mathbb{E}_{z^*,y}\bigg[  \int_0^{\eta^*(z^*)}e^{-(\rho+m)s}u(c_s^*) ds  + e^{-(\rho+m)\eta^*(z^*)}   \widehat{V}(X_{\eta^*(z^*)},Y_{\eta^*(z^*)})                  \bigg]	\\
	& \leq \sup_{(c,\pi,\eta)\in \mathcal{A}(\mathcal{\widehat{\mathcal{X}}}(z^*,y),y)} \mathbb{E}_{\widehat{\mathcal{X}}(z^*,y),y}\bigg[  \int_0^{\eta}e^{-(\rho+m)s}u(c_s) ds  + e^{-(\rho+m)\eta}   \widehat{V}(X_{\eta},Y_{\eta})                  \bigg]	\\
	&\leq \inf_{z>0}(v(z,y)+z\widehat{\mathcal{X}}(z^*,y)) \leq v(z^*,y)+z\widehat{\mathcal{X}}(z^*,y).
		\end{align*}
		Thus, we deduce that 
		\begin{align*}
			V(\widehat{\mathcal{X}}(z^*,y),y)=\inf_{z>0}(v(z,y)+z\widehat{\mathcal{X}}(z^*,y)) =v(z^*,y)+z^* \widehat{\mathcal{X}}(z^*,y).		\end{align*}
			
			Furthermore, since the unique of $z^*$, which depends on $(x,y)$, is strictly decreasing in $x$, we have 
			\begin{align}\label{eqa-14-1}
				\widehat{\mathcal{X}}(z^*,y)=x.
							\end{align}
Thus, we have 
\begin{align*}
	V(x,y)=      	  \mathbb{E}_{z^*,y}\bigg[  \int_0^{\eta^*(z^*)}e^{-(\rho+m)s}u(c_s^*) ds  + e^{-(\rho+m)\eta^*(z^*)}   \widehat{V}(X_{\eta^*(z^*)},Y_{\eta^*(z^*)})                  \bigg].
	\end{align*}
Moreover, the  candidate optimal strategy $c_s^*=\mathcal{I}^u(z^*(x,y)P_s)$
is optimal. Lemma \ref{budget} guarantees the existence of a candidate optimal portfolio process $\pi^*_s$ such that $(c^*,\pi^*) \in \mathcal{A}(x,y)$. 
From (\ref{eqa13-1}), (\ref{eqa-14-1}) and the strong Markov property we have 
\begin{align*}
	X_t^*= \widehat{\mathcal{X}}(Z_t^*,Y_t)=-v_z(Z_t^*,Y_t).
\end{align*}
Since the dual value function $v$ is smooth in the waiting  region $\mathcal{C}$, we can apply It\^o's formula to the optimal wealth process $X_t^*$ for $0\leq t< \eta^*(z^*) $,
\begin{align*}
	dX_t^*=[-v_{zz}(\rho-r+m)Z_t^*-v_{zy}\mu_y Y_t -\frac{1}{2}v_{zzz}{Z_t^*}^2\theta^2-\frac{1}{2}v_{zyy}\sigma_y^2Y_t^2 \\+v_{zzy}\theta Z_t Y_t \sigma_y]dt+ (\theta v_{zz} Z_t^*-\sigma_y v_y Y_t)dW_t.
\end{align*} 
Comparing the above equation with the wealth process (\ref{wealth}), we have 
\begin{align*}
	\pi^*_t= \frac{\theta v_{zz} Z_t^*-\sigma_y v_y Y_t}{\sigma}.
\end{align*}

\end{proof}
}

\subsection{Proof of Theorem \ref{optimal}}\label{proofoptimal}
\begin{proof}
The proof is organized in two steps.

\textbf{Step 1:} We start by giving explicit expressions of the value function in terms of the primal variables. Firstly, we compute $z^*(x,y)=\mathcal{I}^v(-x,y)$, where $\mathcal{I}^v(\cdot,y)$ is the inverse function of $v_z(\cdot,y)$. From (\ref{W2}), (\ref{4-3}) and (\ref{vz}), we obtain 
\begin{equation}\label{a1}
v(z,y)= \left\{
\begin{aligned}
C_1  z^{\alpha_1}+ z^{\frac{\gamma-1}{\gamma}}\frac{\gamma}{(1-\gamma)K} + \frac{zy}{\kappa}, \quad \text{if} \ & z>b,\\
z^{\frac{\gamma-1}{\gamma}}\frac{\gamma}{(1-\gamma)K}+\frac{yz}{\kappa}-\frac{hz}{r}+\frac{m\,u(lB)}{\rho+m},  \quad \text{if} \ & 0<z\leq b,
\end{aligned}
\right.
\end{equation}
  and
\begin{equation}\label{a2}
v_{z}(z,y)= \left\{
\begin{aligned}
C_1 \alpha_1 z^{\alpha_1-1}-\frac{z^{-\frac{1}{\gamma}}}{K}+\frac{y}{\kappa}, \quad \text{if} \ & z>b,\\
 -\frac{z^{-\frac{1}{\gamma}}}{K}+\frac{y}{\kappa}-\frac{h}{r},  \quad \text{if} \ & 0<z\leq b.
\end{aligned}
\right.
\end{equation}
Then from (\ref{a2}) we know
\begin{equation}\label{a3}
z^{*}(x,y)= 
 \Big[(x-\frac{h}{r}+\frac{y}{\kappa})K\Big]^{-\gamma},  \quad \text{if} \ x\geq {\widehat{b}(y)},
\end{equation}
and, if $-\frac{y}{\kappa}<x<{\widehat{b}(y)}$, $z^*{(x,y)}$ satisfies
\begin{align}\label{a4}
 C_1 \alpha_1 (z^*)^{\alpha_1-1}{(x,y)}-(z^*)^{\frac{-1}{\gamma}}{(x,y)}\frac{1}{K}+\frac{y}{\kappa}+x=0.
\end{align}

From (\ref{5-1}), (\ref{a1}) and (\ref{a3}) we thus find
\begin{equation*}
V(x,y)= \left\{
\begin{aligned}
C_1  (z^*)^{\alpha_1}(x,y)+\frac{\gamma (z^*)^{\frac{\gamma-1}{\gamma}}(x,y)}{(1-\gamma)K} + (\frac{y}{\kappa}+x)z^*(x,y), \quad \text{if} \ &-\frac{y}{\kappa}< x< {\widehat{b}(y)},\\
\frac{(x-\frac{h}{r}+\frac{y}{\kappa})^{1-\gamma}K^{-\gamma}}{1-\gamma} + \frac{m\,u(lB)}{\rho+m},\quad \text{if} \ & x \geq {\widehat{b}(y)},
\end{aligned}
\right.
\end{equation*}
where $z^*$ satisfies (\ref{a4}).

\vspace{6pt}




\textbf{Step 2:} Next we give the explicit solutions for the optimal policies. We set ${\widehat{b}(y)}:= x^*(b,y),$ where $x^*( \cdot,y)$ is the inverse function of $z^*(\cdot,y).$ Since $v_z( z^*(x,y),y) =-x$, by taking $x =x^*(z,y)$, computations show that
\begin{align*}
v_z(z,y) =  v_z(z^*(x^*(z,y),y),y) = -x^*(z,y).
\end{align*}
Hence, from (\ref{5-3}) and (\ref{a2}) we have 
\begin{align*}
{\widehat{b}(y)} &= x^*(b,y) = -  v_z(b,y) = \frac{b^{-\frac{1}{\gamma}}}{K}-\frac{y}{\kappa}+\frac{h}{r},
\end{align*}
where $b$ is given by (\ref{b}).

To give the expression of optimal portfolio $\pi^*$, from Theorems \ref{dualrelation1} and \ref{dualrelation2} we have 
\begin{equation*}
\pi^*(x,y)= \left\{
\begin{aligned}
\frac{\theta z^*(x,y)v_{zz}(z^*(x,y),y)-\sigma_y y v_{zy}(z^*(x,y),y)}{\sigma}, \quad \text{if} \ & -\frac{y}{\kappa}<x< {\widehat{b}(y)},\\
\frac{\theta (z^*(x,y))^{-\frac{1}{\gamma}}}{K \gamma \sigma}-\frac{\sigma_y y}{\sigma \kappa}, \quad \text{if} \ & x \geq {\widehat{b}(y)}.
\end{aligned}
\right.
\end{equation*}

By (\ref{a2}), direct calculations show that 
\begin{equation*}
v_{zz}(z,y)= \left\{
\begin{aligned}
C_1 \alpha_1 (\alpha_1-1)z^{\alpha_1-2}+ \frac{z^{-\frac{1}{\gamma}-1}}{K \gamma}, \quad \text{if} \ & z>b,\\
 \frac{z^{-\frac{1}{\gamma}-1}}{K \gamma}, \quad \text{if} \ & 0<z\leq b,
\end{aligned}
\right.
\end{equation*}
and
\begin{equation*}
v_{zy}(z,y)=\frac{1}{\kappa}, \quad \text{for all} \ z>0.
\end{equation*}
Therefore, combining the above expressions we get
\begin{equation*}
\pi^*(x,y)= \left\{
\begin{aligned}
\frac{\theta[C_1 \alpha_1(\alpha_1-1)(z^*)^{\alpha_1-1}(x,y)+\frac{(z^*)^{-\frac{1}{\gamma}}(x,y)}{K \gamma}]- \frac{\sigma_y y}{\kappa}}{\sigma}, \quad \text{if} \ & -\frac{y}{\kappa}<x< {\widehat{b}(y)},\\
\frac{\theta  (x-\frac{h}{r}+\frac{y}{\kappa}) \frac{1}{ \gamma}-  \frac{\sigma_y y}{\kappa}}{\sigma}, \quad \text{if} \ & x \geq {\widehat{b}(y)},
\end{aligned}
\right.
\end{equation*}
where $z^*(x,y)$ is given in (\ref{a4}).

Then, we give the expression of optimal consumption $c^*$. Since $\mathcal{I}^u(x)=x^{-\frac{1}{\gamma}}$, we find 
\begin{equation*}
c^*(x,y)= \left\{
\begin{aligned}
(z^*)^{-\frac{1}{\gamma}}(x,y), \quad \text{if} \ & -\frac{y}{\kappa}<x< {\widehat{b}(y)},\\
K(x+\frac{y}{\kappa}-\frac{h}{r}), \quad \text{if} \ & x \geq {\widehat{b}(y)},
\end{aligned}
\right.
\end{equation*}
where $z^*(x,y)$ is given in (\ref{a4}).

Let now $Z^*$ be a solution of SDE (\ref{Z}) with initial value $Z_0=z^*$. 
From Theorems \ref{dualrelation1} and \ref{dualrelation2}, for $X_t \geq \widehat{b}(Y_t)$, we have 
\begin{align}\label{a-19}
	X^*_t=\frac{(Z^*_t)^{-\frac{1}{\gamma}}}{K}+\frac{h}{r}-\frac{Y_t}{\kappa}.
\end{align}
Similarly, for $-\frac{Y_t}{\kappa}<X_t<\widehat{b}(Y_t)$, we have 
\begin{align}\label{a-20}
	X^*_t= -C_1 \alpha_1 (Z^*_t)^{\alpha_1-1}+\frac{(Z^*_t)^{-\frac{1}{\gamma}}}{K}-\frac{Y_t}{\kappa}.
	\end{align}

\end{proof}

\subsection{Proof of Proposition \ref{pro4.1} }\label{proofpro4.1}
\begin{proof}
	 By (\ref{uhat}) and (\ref{6-8}) we rewrite $\widehat{Q}^B(z,y)$ as follows
\begin{align}\label{6-12}
\widehat{Q}^B(z,y)&=  \mathbb{E}_{z,y}\bigg[    \int_0^{\infty }  e^{-(\rho+m)s} \frac{\gamma}{1-\gamma}{Z_s}^{\frac{\gamma-1}{\gamma}}  ds  \bigg]+\mathbb{E}_{z,y}\bigg[ \int^\infty_0 e^{-(\rho+m)s} Z_sY_sds  \bigg]  \nonumber\\
&=  z^{\frac{\gamma-1}{\gamma}} \frac{\gamma}{1-\gamma}\int_0^{\infty }  e^{-(\rho+m)s}e^{[\frac{\gamma-1}{\gamma} (\rho+m-r-\frac{1}{2}\theta^2)+ \frac{1}{2}\frac{(\gamma-1)^2 \theta^2}{\gamma^2}] s} ds+\frac{zy}{\kappa}, 
\end{align}
where we have used the explicit expression of $Z$. 
Moreover, due to Assumption \ref{assume2} and (\ref{6-12}), we can verify that
\begin{align}\label{6-13}
\widehat{Q}^B(z,y)=  z^{\frac{\gamma-1}{\gamma}} \frac{\gamma}{1-\gamma}\frac{1}{K}+ \frac{zy}{\kappa} < \infty,
\end{align} 
where $K= \frac{1}{\gamma}( \rho+m-r(1-\gamma)-\frac{1-\gamma}{2\gamma} \theta^2)>0$ due to Assumption \ref{assume2}. Finally, it is easy to see that $\widehat{Q}^B \in  C^{2,2}(\mathbb{R}^2_+)$ from (\ref{6-13}) and it satisfies (\ref{6-9}) by direct calculations.
\end{proof}

\subsection{Proof of Theorem \ref{optimal1}}\label{proofoptimal1}
\begin{proof}
The proof is organized in three steps.

\textbf{Step 1:} From Theorem \ref{main}, we know that $\eta^*=0$. It means that $V^B(x,y)=\widehat{V}^B(x,y)$.  For any $(x,y)\in \mathcal{O}$, we have that $\widehat{V}^B(x,y)=\inf_{z>0}[\widehat{Q}^B(z,y)+zx+\bar{u}(z)]$ by Theorem \ref{thedual}. Moreover, it is easy to check that $\widehat{Q}^B(z,y)+\bar{u}(z)$ is strictly convex with respect to $z$ (cf.\ (\ref{6-13})). Then there exists a unique solution $z^*(x,y)>0$ such that 
\begin{align}\label{a19}
	\widehat{V}^B(x,y)=\widehat{Q}^B(z^*(x,y),y)+z^*(x,y)x+\bar{u}(z^*(x,y)),
	\end{align}
where $z^*(x,y):= \mathcal{I}^Q(-x,y)$ and $\mathcal{I}^Q(\cdot,y)$ is the inverse function of $(\widehat{Q}^B_z+\bar{u}_z)(\cdot,y)$. Moreover, $z^* \in C(\mathcal{O}),$ and $z^*(x,y)$ is strictly decreasing with respect to $x$, which is a bijection form. Hence, for any $y\in \mathbb{R}_+$, $z^*(\cdot,y)$ has an inverse function $x^*(\cdot,y)$, which is continuous, strictly decreasing, and maps $\mathbb{R}_+$ to $(-\frac{y}{\kappa},\infty)$.

\vspace{6pt}
\textbf{Step 2:} We now give explicit expressions of the value function in terms of the primal variables. Firstly, we compute $z^*(x,y)=\mathcal{I}^Q(-x,y)$. From (\ref{6-7}) and (\ref{6-13}), we obtain 
\begin{align*}
	\widehat{Q}^B_z(z,y)=-z^{-\frac{1}{\gamma}}\frac{1}{K}+\frac{y}{\kappa}, \quad \bar{u}_z(z)=-m l^{\frac{1-\gamma}{\gamma}} r^{\frac{1-\gamma}{\gamma}}(\rho+m)^{-\frac{1}{\gamma}}z^{-\frac{1}{\gamma}},
	\end{align*}
and 
\begin{align}\label{a20}
	z^*(x,y)=\bigg[\frac{\frac{y}{\kappa}+x}{\frac{1}{K}+m(lr)^{\frac{1-\gamma}{\gamma}}(\rho+m)^{\frac{-1}{\gamma}} }\bigg]^{-\gamma}.
\end{align}
From (\ref{6-13}), (\ref{a19}) and (\ref{a20}), we thus find
\begin{align*}
	&{V}^B(x,y)=\widehat{V}^B(x,y)=\frac{1}{ \frac{1}{K}+m(lr)^{\frac{1-\gamma}{\gamma}}(\rho+m)^{-\frac{1}{\gamma}}} \frac{(\frac{y}{\kappa}+x)^{1-\gamma} }{1-\gamma}.
	\end{align*}

\vspace{6pt}
\textbf{Step 3:} Here we show the optimal policies. Let $Z^*$ be a solution of SDE (\ref{Z}) with initial value $Z_0=z^*$, where $z^*$ is given in (\ref{a20}). From Theorem \ref{thedual}, we already know that the existence of a candidate optimal portfolio process $\pi^*$ such that $(c^*,\pi^*,B^*) \in \mathcal{A}_0^B(x,y)$ and 
\begin{align}\label{4-15-1}
 \mathbb{E}_{x,y}\bigg[  \int_0^{\infty} \xi_{s}\Big(c^*_s-Y_s+mB^*_0\Big) ds                         \bigg]  = x,
\end{align}
 holds, where $c_s^*=\mathcal{I}^u(Z_s)$ is the candidate optimal consumption process, $B^*_0=[\frac{z(\rho+m)}{r}]^{-\frac{1}{\gamma}}l^{\frac{1-\gamma}{\gamma}}$ is the candidate optimal bequest. Moreover, by Theorem 3.6.3 in \cite{karatzas1998methods} or Lemma 6.2 in \cite{karatzas2000utility}, one can then show that $(c^*, \pi^*,B^*)$ is optimal for the problem $\widehat{V}^B$. It thus remains only to find the expressions of optimal portfolio $\pi^*$.

In fact, from (\ref{4-15-1}) we have 
\begin{align}\label{a23}
	X_t^*= \frac{1}{\xi_t} \mathbb{E}_{x,y}\bigg[\int^\infty_t \xi_s(c_s+mB_0-Y_s)\bigg| \mathcal{F}_t\bigg]=\frac{(Z^*_t)^{-\frac{1}{\gamma}}}{K}+\frac{mB_0^*}{r}-\frac{Y_t}{\kappa}.
\end{align}
Applying It\^o's formula to (\ref{a23}), we have 
\begin{align*}
	dX_t^*=\bigg[\frac{Z_t^{-\frac{1}{\gamma}}}{K}\Big(-\frac{1}{\gamma}(\rho-r+m)+(\frac{1+\gamma}{2\gamma^2})\theta^2\Big) -\frac{\mu_y Y_t}{\kappa}    \bigg]dt+\bigg[Z_t^{-\frac{1}{\gamma}}\frac{\theta}{K \gamma}-\frac{\sigma_y Y_t}{\kappa} \bigg]dW_t.
\end{align*}
Then comparing $X^*$ above with (\ref{stateeq}), we can find that
\begin{align}\label{portfolio}
	\pi^*_t=\frac{\theta(Z_t^*)^{-\frac{1}{\gamma}}}{\gamma K \sigma}-\frac{\sigma_y Y_t}{\kappa \sigma},
\end{align}
and the optimal wealth is indeed induced by the strategies $(c^*,\pi^*,B^*).$

Finally, we give the explicit solutions for the optimal polices. From (\ref{a20}) and (\ref{portfolio}) we deduce that 
\begin{align*}
	\pi^*(x,y)=\frac{\theta(z^*)^{-\frac{1}{\gamma}}}{\gamma K \sigma}-\frac{\sigma_y y}{\kappa \sigma}=\bigg[\frac{\frac{y}{\kappa}+x}{\frac{1}{K}+m(lr)^{\frac{1-\gamma}{\gamma}}(\rho+m)^{\frac{-1}{\gamma}} }\bigg]\frac{\theta}{\gamma K \sigma}-\frac{\sigma_y y}{\kappa \sigma}.
		\end{align*}
		Since $\mathcal{I}^u(x)=x^{-\frac{1}{\gamma}}$, then we have  
\begin{equation*}
c^*(x,y)=\mathcal{I}^u(z^*)=\bigg[\frac{\frac{y}{\kappa}+x}{\frac{1}{K}+m(lr)^{\frac{1-\gamma}{\gamma}}(\rho+m)^{\frac{-1}{\gamma}} }\bigg].
\end{equation*}
Moreover, the optimal bequest amount is given by 
\begin{align*}
	B^*_0=\Big[\frac{z^*(\rho+m)}{r}\Big]^{-\frac{1}{\gamma}}l^{\frac{1-\gamma}{\gamma}}=\bigg[\frac{\frac{y}{\kappa}+x}{\frac{1}{K}+m(lr)^{\frac{1-\gamma}{\gamma}}(\rho+m)^{\frac{-1}{\gamma}} }\bigg]\Big[\frac{\rho+m}{r}\Big]^{-\frac{1}{\gamma}}l^{\frac{1-\gamma}{\gamma}}.
	\end{align*}

\end{proof}

{
\subsection{Proof of Theorem \ref{theorem5-5}}\label{prooftheorem5-5}

\begin{proof}
	
	First we show that $\widetilde{w}(z)$ in (\ref{dissolution}) satisfies the HJB equation (\ref{dishjb1}). By construction, we only need to show that 
\begin{align}\label{eq-a26}
	\Big(\mathbb{L}_Z-(\rho+m)\Big)\widetilde{w}(z)+mu(q)-K \widetilde{u}(z)+\frac{z K mq}{r}\leq 0, \quad \forall z\leq \widetilde{ b},
	\end{align}
and
\begin{align}\label{eq-a27}
	\widetilde{w}(z)\geq 0, \quad \forall z> \widetilde{b} .
	\end{align}
	To prove (\ref{eq-a26}), we define $F(z):=(\mathbb{L}_Z-(\rho+m))\widetilde{w}(z)+mu(q)-K \widetilde{u}(z)+\frac{z K mq}{r}=mu(q)-K\widetilde{u}(z)+\frac{z K mq}{r}$,  since $\widetilde{w}(z)=0$ when $z\leq \widetilde{b}$. We observe that  $F(b)= mu(q)-K\widetilde{u}(b)+\frac{b K mq}{r}$ and $F'(z)>0$. Therefore, if $F(b)=mu(q)-K\widetilde{u}(b)+\frac{b K mq}{r}\leq 0$, then $F(z)\leq 0$ for any $z\leq \widetilde{b}$. (\ref{eq-a27}) is satisfied by our assumption. 
	
	Then similar to Theorem \ref{verification},   we can verify the optimality of $\widetilde{w}(z)$ and of the stopping time $\eta(z;\widetilde{b})$.	

\end{proof}
}

\subsection{Some auxiliary results}


\begin{lemma}\label{budget2}
Let $x> \frac{h}{r}-\frac{y}{\kappa}$ be given, let $c\geq 0$ be a consumption process satisfying
\begin{align*}
\mathbb{E}_{x,y}\bigg[ \int_0^{\infty} \xi_{s} c_s ds                         \bigg]  = x+\frac{y}{\kappa} - \frac{h}{r}.
\end{align*}
Then, there exists a portfolio process $\pi$ such that the pair $(c,\pi) \in \mathcal{A}_0(x,y)$  and 
\begin{align*}
X^{c,\pi, \tau}_s+g_s >  \frac{h}{r}, \ \text{for} \ s\geq \eta.
\end{align*}
\end{lemma}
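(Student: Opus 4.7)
The plan is to invoke the martingale representation theorem in the Karatzas--Cox--Huang spirit. First, I would introduce the $\mathbb{P}$-martingale
\begin{equation*}
M_t := \mathbb{E}_{x,y}\Big[\int_0^\infty \xi_s\big(c_s + h - Y_s\big)\,ds \,\Big|\, \mathcal{F}_t\Big], \quad t\geq 0.
\end{equation*}
Using the hypothesis $\mathbb{E}_{x,y}[\int_0^\infty \xi_s c_s\,ds] = x + \tfrac{y}{\kappa} - \tfrac{h}{r}$ together with $\mathbb{E}_{x,y}[\int_0^\infty \xi_s Y_s\,ds] = g_0 = \tfrac{y}{\kappa}$ and $\mathbb{E}[\int_0^\infty \xi_s h\,ds] = \tfrac{h}{r}$ (cf.\ (\ref{gt})), I get $M_0 = x$. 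Since $\mathbb{F}$ is generated by $W$, the Brownian representation theorem yields a progressively measurable $\phi$ with $\int_0^t \phi_s^2\,ds < \infty$ a.s.\ and $M_t = x + \int_0^t \phi_s\,dW_s$.

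Next, I would define the candidate wealth and portfolio by
\begin{equation*}
\xi_t X_t := M_t - \int_0^t \xi_s(c_s + h - Y_s)\,ds, \qquad \pi_t := \frac{\phi_t + \theta\,\xi_t X_t}{\sigma\,\xi_t},
\end{equation*}
so that, in particular, $X_0 = x$. Applying It\^o's product rule to $\xi_t X_t$, using $d\xi_t = -r\xi_t\,dt - \theta\xi_t\,dW_t$ together with $\mu - r = \sigma\theta$, verifies that $X$ satisfies the wealth dynamics (\ref{wealth}) with $h_t \equiv h$ (that is, $\eta = 0$).

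The borrowing constraint follows immediately from rewriting $\xi_t X_t$ as a conditional expectation: after subtracting and adding the integrals for $Y$ and $h$,
\begin{equation*}
X_t + g_t - \frac{h}{r} \;=\; \mathbb{E}_{x,y}\Big[\int_t^\infty \frac{\xi_s}{\xi_t}\, c_s\, ds \,\Big|\, \mathcal{F}_t\Big] \;\geq\; 0,
\end{equation*}
with strict inequality whenever $c_s > 0$ on a set of positive $ds\otimes d\mathbb{P}$-measure in $[t,\infty)\times\Omega$. This is precisely the situation for the consumption of interest, namely the candidate optimizer $c^*_s = (z^* P_s)^{-1/\gamma}$ which is strictly positive $\mathbb{P}$-a.s., giving condition (iii) of Definition \ref{admissiblecontrol} with $\eta = 0$.

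The main technical obstacle I expect is to verify the pathwise integrability condition $\int_0^t(c_s + |\pi_s|^2)\,ds < \infty$ a.s.\ for all $t\geq 0$, needed for admissibility. Integrability of $c$ is immediate from the hypothesis; for $\pi$, one combines the square-integrability of $\phi$ with the continuity of $\xi$ and $X$ on compacts to bound $\int_0^t|\pi_s|^2ds$ pathwise after a standard localization argument. The measurability and progressive measurability of $c$ and $\pi$, as well as $\eta = 0 \in \mathcal{S}$, are then clear, completing the verification of $(c,\pi)\in\mathcal{A}_0(x,y)$.
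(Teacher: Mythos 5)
Your proposal is correct and follows essentially the same route as the paper's proof: a Cox--Huang-type martingale representation argument, defining the wealth process as the conditional expectation of future discounted net expenditures, reading off the portfolio from the representation integrand, and verifying the dynamics \eqref{wealth} via It\^o's formula. The only differences are cosmetic --- you fold the premium $h$ and income $Y_s$ into the martingale whereas the paper represents only the consumption part and adds $\tfrac{h}{r}-g_s$ explicitly, and you are in fact more careful than the paper in noting that the strict inequality $X_t+g_t>\tfrac{h}{r}$ requires strict positivity of future consumption (which holds for the candidate optimizer $c^*$ where the lemma is applied).
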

\begin{proof}
Let us define $L_s:= \int^s_0  \xi_{u} c_u du $ and consider the nonnegative martingale 
\begin{align*}
M_s:= \mathbb{E}[L_\infty| \mathcal{F}_s], \ s\geq 0.
\end{align*}
According to the martingale representation theorem, there is an $\mathbb{F}$-adapted process $\phi$ satisfying $\int^{\infty}_0 ||\phi_u ||^2 du < \infty$ almost surely and 
\begin{align*}
M_s= M_0+ \int^s_0 \phi_u dW_u=x+\frac{y}{\kappa} -\frac{h}{r}+ \int^s_0\phi_u dW_u ,  \   s\geq 0.
\end{align*} 
Define then the  nonnegative process $X$ by 
\begin{align*}
X_s&:= \frac{1}{\xi_{s}} \mathbb{E}\bigg[\int^{\infty}_s  \xi_{u} c_u du\bigg| \mathcal{F}_s\bigg]+\frac{h}{r}-g_s=\frac{1}{\xi_{s}}[M_s-L_s]+\frac{h}{r}-g_s,
\end{align*}
so that $X_0=x, M_0=x-\frac{h}{r}+\frac{y}{\kappa}.$  It\^o's rule implies
\begin{align*}
d(e^{-rs}X_s)= -c_s e^{-rs}ds-he^{-rs}ds+Y_se^{-rs}ds+ e^{-rs}\pi_s \sigma dW_s,
\end{align*}
where $\pi_s:=\frac{1}{\xi_{s}\sigma}[\phi_s+(M_s-L_s)\theta]$. It is easy to check that $\pi$ satisfies $\int^\infty_0 |\pi_s|^2 ds< \infty$ a.s. (see, e.g., Theorem 3.3.5 in \cite{karatzas1998methods}). We thus conclude that $X_s=X^{c,\pi,\tau}_s$ when $s \geq \eta$, by comparison with (\ref{wealth}). Finally, since $X_s+g_s>\frac{h}{r}$ for $s \geq \eta$, the pair $(c,\pi)$ is admissible, and $X^{c,\pi, \tau}_s +g_s>  \frac{h}{r}, \ \text{for} \ s\geq  \eta.$   

\end{proof}

\begin{lemma}\label{budget}
For any $\eta \in \mathcal{S}$,  let $x+\frac{y}{\kappa}> 0$ be given, let $c \geq 0$ be a consumption process. For any $\mathcal{F}_\eta$-measurable random variable $\phi$ with $\mathbb{P}[\phi > -g_\eta  ] =1$ such that
\begin{align*}
\mathbb{E}_{x,y}\bigg[\xi_{\eta}\phi+ \int_0^{\eta} \xi_{s} c_s ds                         \bigg]  = x+\frac{y}{\kappa},
\end{align*}
there exists a portfolio process $\pi$ such that the pair $(c,\pi)$ is admissible and 
\begin{align*}
X^{c,\pi, \eta}_s +g_s>0, \ \text{for} \ s \leq \eta,\ \phi = X^{c,\pi,\eta}_\eta.
\end{align*}
\end{lemma}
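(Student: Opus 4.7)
The argument parallels Lemma \ref{budget2}, replacing the infinite-horizon setting by one truncated at $\eta$, and leverages the martingale representation theorem. The plan is to encode the required terminal value $\phi$ and the consumption/income streams into a martingale on $[0,\eta]$, recover the portfolio $\pi$ by matching Brownian coefficients, and then verify admissibility together with the positivity of total wealth (financial plus human capital).

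First, I would introduce the uniformly integrable martingale
\begin{align*}
M_s := \mathbb{E}_{x,y}\!\left[\xi_\eta \phi + \int_0^\eta \xi_u(c_u - Y_u)\,du \,\Big|\, \mathcal{F}_s\right], \qquad s \leq \eta,
\end{align*}
and note that the prescribed budget constraint, combined with the tower identity $\mathbb{E}[\xi_\eta g_\eta]=\mathbb{E}[\int_\eta^\infty \xi_u Y_u du]$ and $g_0 = y/\kappa$, forces $M_0 = x$. By the martingale representation theorem there exists an $\mathbb{F}$-predictable $\psi$ with $\int_0^t |\psi_u|^2 du < \infty$ a.s.\ such that $M_s = x + \int_0^s \psi_u\,dW_u$. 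I then define the candidate wealth process
\begin{align*}
X_s := \frac{1}{\xi_s}\left[M_s - \int_0^s \xi_u (c_u - Y_u)\,du\right], \qquad s \leq \eta,
\end{align*}
so that $X_0 = M_0 = x$ and $X_\eta = \xi_\eta^{-1} \cdot \xi_\eta \phi = \phi$.

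Next, an application of It\^o's formula to $\xi_s X_s$, using $d\xi_s = -r\xi_s\,ds - \theta \xi_s\,dW_s$ and rearranging, yields
\begin{align*}
dX_s = \big[\pi_s(\mu-r) + rX_s - c_s + Y_s\big]ds + \pi_s\sigma\,dW_s, \qquad s \leq \eta,
\end{align*}
where $\pi_s := (\xi_s\sigma)^{-1}\big[\psi_s + \theta\big(M_s - \int_0^s \xi_u(c_u-Y_u)\,du\big)\big]$. Comparing with the pre-purchase wealth equation (\ref{wealth}) (no premium contribution appears on $\{t<\eta\}$), we identify $X_s = X^{c,\pi,\eta}_s$ on $[0,\eta]$. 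The integrability $\int_0^t |\pi_s|^2 ds < \infty$ a.s.\ follows from continuity of $\xi$ and of $M$ together with the integrability of $\psi$, possibly after a standard localisation.

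Finally, for the positivity $X^{c,\pi,\eta}_s + g_s > 0$ on $[0,\eta]$, I would rewrite
\begin{align*}
\xi_s\big(X_s + g_s\big) = \mathbb{E}_{x,y}\!\left[\xi_\eta(\phi + g_\eta) + \int_s^\eta \xi_u c_u\,du\,\Big|\,\mathcal{F}_s\right],
\end{align*}
which is strictly positive a.s.\ because $c \geq 0$ and, by hypothesis, $\phi + g_\eta > 0$ $\mathbb{P}$-a.s. Setting $s = \eta$ also reconfirms $\phi = X^{c,\pi,\eta}_\eta$. The main technical obstacle is the careful bookkeeping of the income and human-capital terms, in particular ensuring that the initial value of the chosen martingale coincides with the starting wealth $x$ under the stated budget constraint, and handling the localisation needed for the stochastic-integral representation when $\eta$ may be unbounded.
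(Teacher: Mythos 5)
Your overall strategy is exactly the intended one: the paper gives no proof of this lemma beyond the remark that it is "similar to Lemma 6.3 in \cite{karatzas2000utility}", and your construction — close the target payoff into a martingale, apply the martingale representation theorem, identify $\pi$ from the Brownian coefficient via It\^o's formula, and obtain positivity from the conditional-expectation representation of $\xi_s(X_s+g_s)$ — is precisely that argument, adapted to labor income in the same way as the paper's own Lemma \ref{budget2}. The It\^o/portfolio identification and the positivity step are correct as you wrote them.

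However, one step fails as stated: the claim that the budget equality forces $M_0=x$. Computing directly from the stated hypothesis $\mathbb{E}_{x,y}[\xi_\eta\phi+\int_0^\eta\xi_s c_s\,ds]=x+\tfrac{y}{\kappa}$ together with the identities you quote,
\begin{align*}
M_0=\mathbb{E}_{x,y}\Big[\xi_\eta\phi+\int_0^\eta\xi_u(c_u-Y_u)\,du\Big]
=x+\tfrac{y}{\kappa}-\mathbb{E}_{x,y}\Big[\int_0^\eta\xi_u Y_u\,du\Big]
=x+\mathbb{E}_{x,y}\big[\xi_\eta g_\eta\big],
\end{align*}
which is strictly larger than $x$ whenever $\mathbb{P}(\eta<\infty)>0$; your candidate wealth process then starts from the wrong initial capital, so admissibility for the initial datum $x$ fails. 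Your computation yields $M_0=x$ only if the budget constraint is read with the human capital included at time $\eta$, namely $\mathbb{E}_{x,y}[\xi_\eta(\phi+g_\eta)+\int_0^\eta\xi_s c_s\,ds]=x+\tfrac{y}{\kappa}$, equivalently $\mathbb{E}_{x,y}[\xi_\eta\phi+\int_0^\eta\xi_s(c_s-Y_s)\,ds]=x$. That is in fact the form in which the lemma is invoked in the proof of Theorem \ref{dualrelation2} (there $\widehat{\mathcal{X}}(z^*,y)=x$ with integrand $c^*_s-Y_s$), and the statement's literal display is incompatible with the budget constraint itself: since $\xi_t(X_t+g_t)+\int_0^t\xi_u c_u\,du$ is a nonnegative local martingale, hence a supermartingale, every admissible pair satisfies $\mathbb{E}_{x,y}[\xi_\eta(X_\eta+g_\eta)+\int_0^\eta\xi_u c_u\,du]\le x+\tfrac{y}{\kappa}$, which the hypothesis as written would exceed by the margin $\mathbb{E}_{x,y}[\xi_\eta g_\eta]$. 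So what you have written is a correct proof of the corrected statement that the paper actually uses; as a proof of the literal statement, the step "$M_0=x$" is an error, and you should either have derived $M_0=x+\mathbb{E}_{x,y}[\xi_\eta g_\eta]$ honestly and flagged the mismatch in the hypothesis, or made the emendation of the budget identity explicit.
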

\begin{proof}
\begin{sloppypar}
The proof is similar to Lemma 6.3 in \cite{karatzas2000utility}, and we thus omit details.
\end{sloppypar}
\end{proof}

\section*{Acknowledgments}
Funded by the Deutsche Forschungsgemeinschaft (DFG, German Research Foundation) – Project-ID 317210226 – SFB 1283.

\bibliographystyle{apalike}
\bibliography{lifeinsurace}


\end{document}